\newcommand{\cN}{\mathcal{N}}
\newcommand{\norm}[1]{\| #1 \|}
\newcommand{\R}{\mathbb{R}}
\DeclarePairedDelimiterX{\infdivx}[2]{(}{)}{%
  #1\;\delimsize\|\;#2%
}
\mathchardef\mhyphen="2D
\newcommand{\Zset}{\ensuremath{\mathcal{Z}}}
\newcommand{\Dfixed}{\ensuremath{D_{\mhyphen}}}
\newcommand{\Dz}{\ensuremath{D_{z}}}
\newcommand{\reconstruct}{\ensuremath{R}}
\newcommand{\lattackloss}{\rho}
\DeclareMathOperator*{\argmax}{arg\,max}
\DeclareMathOperator*{\argmin}{arg\,min}
\DeclareMathOperator*{\argsup}{arg\,sup}
\newcommand{\supp}{\mathop{supp}}
\newcommand{\indicator}{\mathbb{I}}
\newcommand{\one}{\mathbf{1}}
\newcommand{\Ex}{\mathbb{E}}
\renewcommand{\Pr}{\mathbb{P}}
\newcommand{\clip}{\mathrm{clip}}
\newcommand{\blowup}{\mathcal{B}}
\newcounter{thm}
\newtheorem{theorem}[thm]{Theorem}
\newtheorem{lemma}[thm]{Lemma}
\newtheorem{corollary}[thm]{Corollary}
\newtheorem{proposition}[thm]{Proposition}
\newtheorem{definition}[thm]{Definition}
\newtheorem{remark}[thm]{Remark}
\newcommand*\samethanks[1][\value{footnote}]{\footnotemark[#1]}
 \newcommand{\Hquad}{\hspace{0.5em}} 
\title{Bounding Training Data Reconstruction in DP-SGD}
\author{%
  Jamie Hayes\thanks{Equal contribution.} \\
  Google DeepMind\\
  \texttt{jamhay@google.com} \\
   \And
   Saeed Mahloujifar\samethanks\Hquad\thanks{Work done at Princeton University.} \\
  Meta AI\\
  \texttt{saeedm@meta.com} \\
  \And
   Borja Balle \\
  Google DeepMind\\
  \texttt{bballe@google.com} \\
}
\begin{document}

\maketitle

\begin{abstract}
Differentially private training offers a protection which is usually interpreted as a guarantee against membership inference attacks.
By proxy, this guarantee extends to other threats like reconstruction attacks attempting to extract complete training examples.
Recent works provide evidence that if one does not need to protect against membership attacks but \emph{instead} only wants to protect against training data reconstruction, then utility of private models can be improved because less noise is required to protect against these more ambitious attacks.
We investigate this further in the context of DP-SGD, a standard algorithm for private deep learning, and provide an upper bound on the success of any reconstruction attack against DP-SGD together with an attack that empirically matches the predictions of our bound.
Together, these two results open the door to fine-grained investigations on how to set the privacy parameters of DP-SGD in practice to protect against reconstruction attacks.
Finally, we use our methods to demonstrate that different settings of the DP-SGD parameters leading to the same DP guarantees can result in significantly different success rates for reconstruction, indicating that the DP guarantee alone might not be a good proxy for controlling the protection against reconstruction attacks.

\end{abstract}

\section{Introduction}
\label{sec: introduction}

Machine learning models can and do leak training data~\citep{ippolito2022preventing, kandpal2022deduplicating, carlini2019secret, yeom2018privacy, song2019overlearning, tirumala2022memorization}. 
If the training data contains private or sensitive information, this can lead to information leakage via a variety of different privacy attacks~\citep{balle2022reconstructing, carlini2022membership, fredrikson2015model, carlini2021extracting}.
Perhaps the most commonly studied privacy attack, membership inference~\citep{homer2008resolving, shokri2017membership}, aims to infer if a sample was included in the training set, which can lead to a privacy violation if inclusion in the training set is in and of itself sensitive.
Membership inference leaks a single bit of information about a sample -- whether that sample was or was not in the training set -- and so any mitigation against this attack also defends against attacks that aim to reconstruct more information about a sample, such as training data reconstruction attacks~\citep{balle2022reconstructing, carlini2021extracting, zhu2019deep}.

Differential privacy (DP)~\citep{DBLP:conf/tcc/DworkMNS06} provides an effective mitigation that provably bounds the success of \emph{any} membership inference attack, and so consequently any training data reconstruction attack.
The strength of this mitigation is controlled by a privacy parameter $\epsilon$ which, informally, represents the number of bits that can be leaked about a training data sample, and so $\epsilon$ must be small to guarantee the failure of a membership inference attack~\citep{sablayrolles2019white,DBLP:conf/sp/NasrSTPC21}.
Unfortunately, training machine learning models with DP in the small $\epsilon$ regime usually produces models that perform significantly worse than their non-private counterpart \citep{DBLP:conf/iclr/TramerB21,de2022unlocking}.

Membership inference may not always be the privacy attack that is of most concern. 
For example, in social networks, participation is usually public; recovering privately shared photos or messages from a model trained on social network data is the privacy violation.
These kinds of attacks are referred to as training data reconstruction attacks, and have been successfully demonstrated against a number of machine learning models including language models~\citep{carlini2021extracting, mireshghallah2022quantifying}, generative models~\citep{somepalli2022diffusion,DBLP:journals/corr/abs-2301-13188}, and image classifiers~\citep{balle2022reconstructing,DBLP:journals/corr/abs-2206-07758}.
Recent work~\citep{bhowmick2018protection, balle2022reconstructing, guo2022bounding, guo2022analyzing, stock2022defending} has begun to provide evidence that if one is willing to forgo protection against membership inference, then the $\epsilon$ regime that protects against training data reconstruction is far larger, as predicted by the intuitive reasoning that successful reconstruction requires a significant number of bits about an individual example to be leaked by the model.
This also has the benefit that models trained to protect against training data reconstruction \emph{but not} membership inference do not suffer as large a drop in performance, as less noise is added during training.
Yet, the implications of choosing a concrete $\epsilon$ for a particular application remain unclear since the success of reconstruction attacks can vary greatly depending on the details of the threat model, the strength of the attack, and the criteria of what constitutes a successful reconstruction.

In this paper we re-visit the question of training data reconstruction against image classification models trained with DP-SGD \citep{song2013stochastic,abadi2016deep}, the workhorse of differentially private deep learning.
We choose to concentrate our analysis on DP-SGD because state-of-the-art results are almost exclusively obtained with DP-SGD or other privatized optimizers \citep{de2022unlocking, cattan2022fine, mehta2022large}. 
Our investigation focuses on attacks performed under a strong threat model where the adversary has access to intermediate gradients and knowledge of all the training data except the target of the reconstruction attacks.
This threat model is consistent with the privacy analysis of DP-SGD \citep{abadi2016deep} and the informed adversary implicit in the definition of differential privacy \citep{DBLP:conf/sp/NasrSTPC21,balle2022reconstructing}, and implies that conclusions about the impossibility of attacks in this model will transfer to weaker, more realistic threat models involving real-world attackers.
Our investigation focuses on three main questions.
1) How do variations in the threat model (e.g.\ access to gradients and side knowledge available to the adversary) affect the success of reconstruction attacks?
2) Is it possible to obtain upper bounds on reconstruction success for DP-SGD that match the best known attacks and thus provide actionable insights into how to tune the privacy parameters in practice?
3) Does the standard DP parameter $\epsilon$ provide enough information to characterize vulnerability against reconstruction attacks?

Our contributions are summarized as follows:
\begin{itemize}
    \item We illustrate how changes in the threat model for reconstruction attacks against image classification models can significantly influence their success by comparing attacks with access to the final model parameters, access to intermediate gradients, and access to prior information.
    \item We obtain a tight upper bound on the success of any reconstruction attack against DP-SGD with access to intermediate gradients and prior information. Tightness is shown by providing an attack whose reconstruction success closely matches our bound's predictions.
    \item We provide evidence that the DP parameter $\epsilon$ is not sufficient to capture the success of reconstruction attacks on DP-SGD by showing that different configurations of DP-SGD's hyperparameters leading to the same DP guarantee lead to different rates of reconstruction success.
\end{itemize}

\section{Training Data Reconstruction in DP-SGD}
\label{sec: background}

We start by introducing DP-SGD, the algorithm we study throughout the paper, and then discuss reconstruction attacks with access to either only the final trained model, or all intermediate gradients.
Then we empirically compare both attacks and show that gradient-based attacks are more powerful than model-based attacks, and we identify a significant gap between the success of the best attack and a known lower bound.
As a result, we propose the problem of closing the gap between theoretical bounds and empirical reconstruction attacks.

\paragraph{Differential privacy and DP-SGD.}
Differential privacy \citep{DBLP:conf/tcc/DworkMNS06} formalizes the idea that data analysis algorithms whose output does not overly depend on any individual input data point can provide reasonable privacy protections.
Formally, we say that a randomized mechanism $M$ satisfies $(\epsilon, \delta)$-differential privacy (DP) if, for any two datasets $D, D'$ that differ by one point, and any subset $S$ of the output space we have $ P[M(D)\in S] \leq e^{\epsilon}P[M(D')\in S] + \delta$.
Informally, this means that DP mechanisms bound evidence an adversary can collect (after observing the output) about whether the point where $D$ and $D'$ differ was used in the analysis.
For the bound to provide a meaningful protection against an adversary interested in this membership question it is necessary to take $\epsilon \approx 1$ and $\delta \leq \nicefrac{1}{|D|}$.

A differentially private version of stochastic gradient descent useful for training ML models can be obtained by bounding the influence of any individual sample in the trained model and masking it with noise.
The resulting algorithm is called DP-SGD \citep{abadi2016deep} and proceeds by iteratively updating parameters with a privatized gradient descent step. 
Given a sampling probability $q$, current model parameters $\theta$ and a loss function $\ell(\theta, \cdot)$, the privatized gradient $g$ is obtained by first creating a mini-batch $B$ including each point in the training dataset with probability $q$, summing the $L_2$-clipped gradients\footnote{The clipping operation is defined as $\mathrm{clip}_C(v) = v / \max\{1, \norm{v}/C\}$.} for each point in $B$, and adding Gaussian noise with standard deviation $\sigma \cdot C$ to all coordinates of the gradient: $g \gets 
    \sum_{z \in B}  \mathrm{clip}_{C}\left(\nabla_{\theta}\ell(\theta, z)\right) + \mathcal{N}(0, C^2 \sigma^2 I)$.
Running DP-SGD for $T$ training steps yields an $(\epsilon, \delta)$-DP mechanism with $\epsilon \approx \frac{q \sqrt{T \log(1/\delta)}}{\sigma}$ \citep{abadi2016deep} -- in practice, tighter numerical bounds on $\epsilon$ are often used \citep{NEURIPS2021_6097d8f3,googledp}.
For analytic purposes it is often useful to consider alternatives to $(\epsilon, \delta)$-DP to capture the differences between distributions $M(D)$ and $M(D')$. R{\'e}nyi differential privacy (RDP) \citep{mironov2017renyi} is one such alternative often used in the context of DP-SGD.
It states that the mechanism is $(\alpha, \epsilon)$-RDP for $\alpha > 1$ and $\epsilon \geq 0$ if $\Ex_{W \sim M(D')}\left[\left(\frac{\Pr[M(D) = W]}{\Pr[M(D') = W]}\right)^\alpha \right] \leq e^{(\alpha-1) \epsilon}$.
In particular, $T$ iterations of full-batch DP-SGD (i.e.\ $q=1$) with noise multiplier $\sigma$ give $(\alpha, \frac{\alpha T}{2 \sigma^2})$-RDP for every $\alpha > 1$.

Ultimately, we are interested in understanding how the privacy guarantee of DP-SGD affects \emph{reconstruction attacks} and, in particular, whether $\epsilon \gg 1$ still provides some protection against these more ambitious attacks.
The first step is to understand what the most idoneous threat model to investigating this question is, and then to instantiate a powerful attack in that model.

\subsection{Comparing reconstruction attacks under intermediate and final model access}\label{ssec:comparing_model_and_grad_recon}

In the case of DP-SGD, the privacy guarantee for the final model is obtained by analyzing the privacy loss incurred by \emph{releasing the $T$ private gradients} used by the algorithm. Thus, the guarantee applies both to the intermediate gradients and the final model (by virtue of the post-processing property of DP).
It has been shown through membership inference attacks that the guarantee obtained through this analysis for the \emph{collection of intermediate gradients} can be numerically tight \citep{DBLP:conf/sp/NasrSTPC21}.
However, in some specific settings amenable to mathematical treatment, it has also been shown that the final model produced by DP-SGD can enjoy a stronger DP guarantee than the collection of all intermediate gradients \citep{DBLP:journals/corr/abs-2203-05363,DBLP:journals/corr/abs-2205-13710}.

Although these observations apply the standard membership formulation of DP, they motivate an important question for trying to understand the implications of DP guarantees for reconstruction attacks: \emph{how does access to intermediate gradients affect the success of reconstruction attacks against DP-SGD?}
We investigate this question by introducing and comparing the success of \emph{model-based} and \emph{gradient-based} attacks.
We assume the adversary receives the output of a DP mechanism $M(D)$ and all the input data in $D$ except for one point $z^*$; the adversary's goal is then to produce a reconstruction $\hat{z}$ of the unknown point.
This adversary is referred to as the \emph{informed adversary} in \citet{balle2022reconstructing}.

\paragraph{Model-based training data reconstruction.}
Suppose we consider the output of DP-SGD to be only the final model $\theta = M(D)$.
Under the informed adversary threat model, \citet{balle2022reconstructing} propose a reconstruction attack in which the adversary uses their knowledge of $M$ and $\Dfixed = D \setminus \{z^*\}$ to train a \emph{reconstructor neural network} (RecoNN) capable of mapping model parameters $\theta$ to reconstructed points $\hat{z}$. The training dataset for this network consists of model-point pairs $(\theta^i, z_i)$ where the $z_i$ are auxiliary points representing side knowledge the adversary might possess about the distribution of the target point $z^*$, and the so-called shadow model $\theta^i$ is obtained by applying $M$ to the dataset $\Dfixed \cup \{z_i\}$ obtained by replacing $z^*$ by $z_i$ in the original dataset.
Despite its computational cost, this attack is effective in reconstructing complete training examples from image classification models trained \emph{without} DP on MNIST and CIFAR-10, but fails to produce correct reconstructions when the model is trained with any reasonable setting of DP parameters.

\begin{figure}[t]
\captionsetup{width=\textwidth, justification=raggedright}
  \centering
\begin{subfigure}[t]{0.5\textwidth}
\centering
    \includegraphics[width=\textwidth]{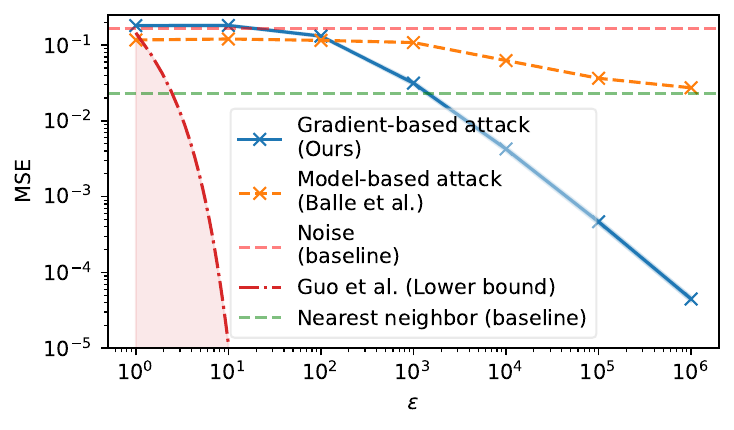}
        \caption{Average MSE between targets and reconstructions on MNIST.}
        \label{fig: experiment_7_mnist_curve}
\end{subfigure}%
\begin{subfigure}[t]{0.5\textwidth}
\centering
    \includegraphics[width=\textwidth]{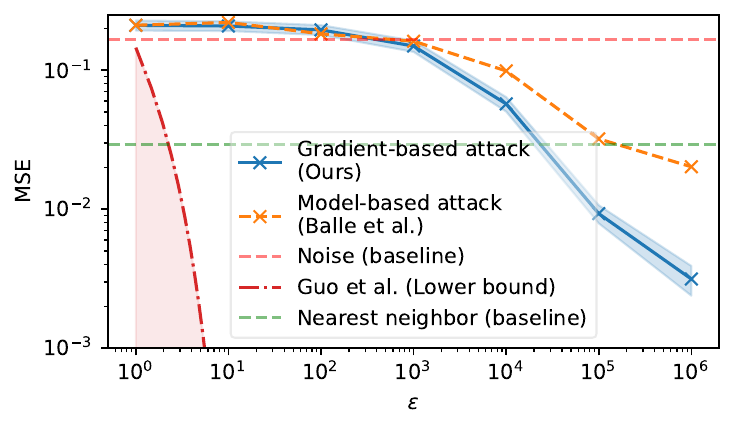}
        \caption{Average MSE between targets and reconstructions on CIFAR-10.}
         \label{fig: experiment_7_cifar_curve}
\end{subfigure}%
\caption{Comparison of model-based and gradient-base reconstruction attacks. We run the attacks over 1K different images for both MNIST and CIFAR-10 for a range of $\epsilon$. In \Cref{fig: experiment_7_mnist_curve} and \Cref{fig: experiment_7_cifar_curve}, we plot the average MSE between reconstruction and target images for both attacks. To help calibrate a reader's interpretation of good and bad reconstructions we plot a nearest neighbor (NN) baseline, marking the average $L_2$-distance between NN images over the entire dataset, and a baseline corresponding to the average distance between random uniform vectors. We also plot a lower bound for MSE given by \citet{guo2022bounding}. We give qualitative examples of reconstructions in \Cref{app: viz_rec_attack}.}
\label{fig: experiment_7}
\end{figure}

\paragraph{Gradient-based training data reconstruction.}
Suppose now that the adversary gets access to all the intermediate gradients $(g_1, \ldots, g_T) = M(D)$ produced by DP-SGD when training model $\theta$.
We can instantiate a reconstruction attack for this scenario by leveraging gradient inversion attacks found in the federated learning literature \citep{yin2021see, huang2021evaluating, jeon2021gradient, jin2021cafe, zhu2019deep,NEURIPS2020_c4ede56b}.
In particular, we use the gradients produced by DP-SGD to construct a loss function $\mathcal{L}(z)$ for which the target point $z^*$ would be optimal if the gradients contained no noise, and then optimize it to obtain our candidate reconstruction.
An important difference between our attack and previous works in the federated learning setting is that we use gradients from multiple steps to perform the attack and only attempt to recover a single point.

More formally, we start by removing from each privatized gradient $g_t$ the gradients of the known points in $D$.
For simplicity, in this section we only consider the full batch case ($q=1$) where every training point in $D$ is included in every gradient step.
Thus, we set $\bar{g}_t = g_t - \sum_{z \in \Dfixed} \mathrm{clip}_C(\nabla_{\theta} \ell(\theta_{t}, z))$,
where $\theta_{t}$ are the model parameters at step $t$.
Note these can be inferred from the gradients and knowledge of the model initialization, which we also assume is given to the adversary.
Similar (but not identical) to \citet{NEURIPS2020_c4ede56b}, we use the loss $\mathcal{L}(z) = \sum_{t=1}^T \mathcal{L}_t(z)$ with
\begin{align}
    \label{eq:grad_opt_loss}
    \mathcal{L}_t(z) = 
    -\langle \mathrm{clip}_C(\nabla_{\theta} \ell(\theta_{t}, z)), \bar{g}_t\rangle 
    + \lVert \mathrm{clip}_C(\nabla_{\theta} \ell(\theta_{t}, z)) - \bar{g}_t \rVert_1 \enspace.
\end{align}
This loss was selected after exhaustively trying all other losses suggested in the aforementioned literature; empirically, \Cref{eq:grad_opt_loss} performed best.

\paragraph{Comparing reconstruction attacks.}
We now compare the success of model-based and gradient-based reconstruction attacks against classification models trained with DP-SGD on MNIST and CIFAR-10. 
We refer to \cref{app: opt_grad_details} for experimental details.

Following \citet{balle2022reconstructing}, we report the mean squared error (MSE) between the reconstruction and target as a function of $\epsilon$ for both attacks. 
Results are shown in \cref{fig: experiment_7}, where we include two baselines to help calibrate how MSE corresponds to good and bad reconstructions.
Firstly, we include a threshold representing the average distance between target points and their nearest neighbor in the remaining of the dataset.
An MSE below this line indicates a near-perfect reconstruction.
Secondly, we include a threshold representing the average distance between random uniform vectors of the same data dimensionality.
An MSE above or close to this line indicates a poor reconstruction.

We make the following observations.
First, the gradient-based attack outperforms the model-based attack by orders of magnitude at larger $\epsilon$ values.
Second, the best attack starts to perform well on MNIST and CIFAR-10 between $10^2<\epsilon<10^3$ and $10^3<\epsilon<10^4$ respectively. This indicates that attack success is affected by the complexity of the underlying data, including its dimension and geometry.
Finally, the attacks give an upper bound for reconstruction error (MSE), however it is not clear if this bound is tight (i.e., whether a better attack could reduce MSE). \citet{guo2022bounding} report a \emph{lower bound} for MSE of the form $\nicefrac{1}{4(e^{\epsilon}-1)}$, which is very far from the upper bound. 

The experiment above illustrates how a change in threat model can make a significant difference in the success of reconstruction attacks. In particular, it shows that the attack from \citet{balle2022reconstructing} is far from optimal, potentially because of its lack of information about intermediate gradients.
Thus, while optimal attacks for membership inference are known \citep{DBLP:conf/sp/NasrSTPC21} and can be used to empirically evaluate the strength of DP guarantees (e.g.\ for auditing purposes \citep{DBLP:journals/corr/abs-2202-12219}), the situation is far less clear in the case of reconstruction attacks. Gradient-based attacks improve over model-based attacks, but \emph{are they (near) optimal?}
Optimality of attacks is important for applications where one would like to calibrate the privacy parameters of DP-SGD to provide a demonstrable, pre-specified amount of protection against any reconstruction attacks.

\section{Bounding Training Data Reconstruction}\label{sec:bounding_reconstruction}

We will now formalize the reconstruction problem further and then provide bounds on the success probability of reconstruction attacks against DP-SGD. 
We will also develop improvements to the reconstruction attack introduced in \Cref{sec: background} from which to benchmark the tightness of our bound.

\subsection{Threat models for reconstruction}

\cite{balle2022reconstructing} introduce a formal definition for reconstruction that attempts to capture the success probability of any reconstruction attack against a given mechanism.
The definition involves an informed adversary with knowledge of the mechanism $M$, the fixed part of the training dataset $\Dfixed$, and a prior distribution $\pi$ from which the target $z^*$ is sampled -- the prior encodes the adversary's side knowledge about the target.

\begin{definition}\label{thm:def_rero}
Let $\pi$ by a prior over target points and $\lattackloss$ a reconstruction error function.
A randomized mechanism $M$ is $(\eta, \gamma)$-ReRo (\emph{reconstruction robust}) with respect to $\pi$ and $\lattackloss$ if for any fixed dataset $\Dfixed$ and any reconstruction attack $R$ we have $\Pr_{Z \sim \pi, w \sim M(\Dfixed \cup \{Z\})}[\lattackloss(Z, \reconstruct(w)) \leq \eta] \leq \gamma$.

\end{definition}

Note that the output of the mechanism does not necessarily need to be final model parameters -- indeed, the definition also applies to attacks operating on intermediate gradients when those are included in the output of the mechanism.
\citet{balle2022reconstructing} also proved that any $(\alpha, \epsilon)$-RDP mechanism is $(\eta, \gamma)$-ReRo with $\gamma = \left(\kappa_{\pi,\lattackloss}(\eta) \cdot e^{\epsilon} \right)^{\frac{\alpha - 1}{\alpha}}$, where $\kappa_{\pi,\lattackloss}(\eta) = \sup_{z_0 \in \Zset} \Pr_{Z \sim \pi}[\lattackloss(Z, z_0) \leq \eta]$.
In particular, using the RDP guarantees of DP-SGD, one obtains that in the full-batch case, running the algorithm for $T$ iterations with noise multiplier $\sigma$ is $(\eta, \gamma)$-ReRo with $\gamma$ bounded by
\begin{align}\label{eqn:dpsgd_rero_balle}
    \exp\left(- \max\left\{0 , \sqrt{\log\frac{1}{\kappa_{\pi,\lattackloss}(\eta)}} - \sqrt{\frac{T}{2 \sigma^2}}\right\}^2 \right) \enspace.
\end{align}
The quantity $\kappa_{\pi,\lattackloss}(\eta)$ can be thought of as the prior probability of reconstruction by an adversary who outputs a fixed candidate $z_0$ based only on their knowledge of the prior (i.e.\ without observing the output of the mechanism).

\paragraph{Instantiating threat models within the $(\eta, \gamma)$-ReRo framework}

In \cref{ssec:comparing_model_and_grad_recon}, we described two variants of a reconstruction attack, one where the adversary has access to intermediate model updates (gradient-based reconstruction) and one where the adversary has access to the final model only (model-based reconstruction).
In the language of $(\eta, \gamma)$-ReRo, both the gradient-based and model-based reconstruction attacks introduced in \Cref{sec: background} take as arguments: $\pi$ -- prior information about the target point, $\Dfixed$ -- the training dataset excluding the target point, $w$ -- the output of the mechanism $M$, and side information about DP-SGD -- such as hyperparameters used in training and how to the model was initialized.
The model-based reconstruction attack assumes that $w=\theta_T$, the parameters of the final model, whereas for gradient-based attacks, $w=(g_1, \ldots, g_T)$, and so the adversary has access to all intermediate privatized model parameter updates.

The gradient-based attack optimizing \Cref{eq:grad_opt_loss} does not make use of any potential prior knowledge the adversary might have about the target point $z^*$, beyond input bounds (the input is bound between [0,1]) and the dimensionality of target ($28\times28$ for an MNIST digit).
On the other hand, the model-based attack makes use of a prior in the form of the auxiliary points $z_i$ used in the construction of shadow models; these points represent the adversary's knowledge about the distribution from which the target points $z^*$ is sampled.

Going forward in our investigation we will assume the adversary has access to more ``reliable'' side knowledge: their prior is a uniform distribution over a finite set of candidate points $\{z_1, \ldots, z_n\}$, one of which corresponds to the true target point.
This setting is favorable towards the adversary: the points in the prior represent a shortlist of candidates the adversary managed to narrow down using side knowledge about the target.
In this case it is also reasonable to use the error function $\lattackloss(z,z') = \indicator[z \neq z']$ since the adversary's goal is to identify which point from the prior is the target.
As DP assumes the adversary knows (and can influence) all but one record in the training set, the assumption that the adversary has prior knowledge about the target is aligned with the DP threat model.
The main distinction between membership inference and reconstruction with a uniform prior is that in the former the adversary (implicitly) managed to narrow down the target point to two choices, while in the latter they managed to narrow down the target to $n$ choices.
This enables us to smoothly interpolate between the exceedingly strong DP threat model of membership inference (where the goal is to infer a single bit) and a relaxed setting where the adversary's side knowledge is less refined: here $n$ controls the amount of prior knowledge the adversary is privy to, and requires the adversary to infer $\log(n)$ bits to achieve a successful reconstruction.

The discrete prior setup, we argue, provides a better alignment between the success of resconstruction attacks and what actually constitutes privacy leakage. In particular, it allows us to move away from (approximate) verbatim reconstruction as modelled, e.g., by an $\ell_2$ reconstruction criteria success, and model more interesting situations. For example, if the target image contains a car, an attacker might be interested in the digits of the license plate, not the pixels of the image of the car and its background. Thus, if the license plate contains 4 digits, the attacker's goal is to determine which of the possible 10,000 combinations was present in the in the training image.

In \Cref{app:large_prior_exp}, we also conduct experiments where the adversary has less background knowledge about the target point, and so the prior probability of reconstruction is \emph{extremely} small (e.g. $\mathcal{O}\big((\nicefrac{1}{256})^{32\times32\times3}\big)$ for a CIFAR-10 image).


\subsection{Reconstruction Robustness of DP-SGD}

\paragraph{ReRo bounds from blow-up functions}
We now state a novel reconstruction robustness bound that, instead of using the (R)DP guarantees of the mechanism, is directly expressed in terms of its output distributions.
The bound depends on the \emph{blow-up function} between two distributions $\mu$ and $\nu$:
\begin{align}\label{eqn:blowup}
    \blowup_{\kappa}(\mu, \nu) = \sup\{\Pr_{\mu}[E] : E \;\; \text{s.t.} \;\; \Pr_{\nu}[E] \leq \kappa \} \enspace.
\end{align}
In particular, let $\nu_{\Dfixed} = M(\Dfixed)$ denote the output distribution of the mechanism on a fixed dataset $\Dfixed$, and $\mu_{\Dz} = M(\Dfixed \cup \{z\})$ for any potential target $z$.
Then we have the following (see \Cref{app:proof_bound_reconstruction} for the full statement).

\begin{theorem}[Informal] \label{thm:bound_reconstruction}
Fix $\pi$ and $\lattackloss$.
Suppose that for every fixed dataset $\Dfixed$ there exists a pair of distributions $\mu^*_{\Dfixed}, \nu^*_{\Dfixed}$ such that $\sup_{z \in \supp(\pi)} \blowup_{\kappa}(\mu_{\Dz}, \nu_{\Dfixed}) \leq \blowup_{\kappa}(\mu^*_{\Dfixed}, \nu^*_{\Dfixed})$ for all $\kappa \in [0,1]$.
Then $M$ is $(\eta, \gamma)$-ReRo with $\gamma = \sup_{\Dfixed} \blowup_{\kappa_{\pi,\lattackloss}(\eta)}(\mu^*_{\Dfixed}, \nu^*_{\Dfixed})$.
\end{theorem}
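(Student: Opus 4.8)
The plan is to argue directly with the mechanism's output distributions through the blow-up function, in the spirit of the reconstruction-robustness bounds of \citet{balle2022reconstructing} but avoiding the detour through an (R)DP guarantee. Fix an arbitrary dataset $\Dfixed$ and a reconstruction attack $\reconstruct$; by conditioning on the attack's internal randomness it suffices to treat $\reconstruct$ as deterministic. For each candidate target $z$ define the success event in the mechanism's output space, $E_z = \{\, w : \lattackloss(z, \reconstruct(w)) \le \eta \,\}$, so that the probability we must control is $\Pr_{Z \sim \pi,\, w \sim M(\Dz)}[\lattackloss(Z, \reconstruct(w)) \le \eta] = \Ex_{Z \sim \pi}[\mu_{\Dz}(E_Z)]$, with $\mu_{\Dz} = M(\Dfixed \cup \{Z\})$ and $\nu_{\Dfixed} = M(\Dfixed)$ as in the statement.

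The first step is a bound that is pointwise in $z$. Directly from the definition of $\blowup$, the event $E_z$ is feasible for the supremum defining $\blowup_{\kappa}$ at level $\kappa = \nu_{\Dfixed}(E_z)$, so $\mu_{\Dz}(E_z) \le \blowup_{\nu_{\Dfixed}(E_z)}(\mu_{\Dz}, \nu_{\Dfixed})$; the hypothesis of the theorem (which we only need for $z \in \supp(\pi)$) then replaces $\mu_{\Dz}$ by the surrogate distribution $\mu_{\Dfixed}$, giving $\mu_{\Dz}(E_z) \le \blowup_{\nu_{\Dfixed}(E_z)}(\mu_{\Dfixed}, \nu_{\Dfixed})$. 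Writing $g(\kappa) := \blowup_{\kappa}(\mu_{\Dfixed}, \nu_{\Dfixed})$ and taking expectations over $Z \sim \pi$, this yields $\Ex_{Z}[\mu_{\Dz}(E_Z)] \le \Ex_{Z}[\,g(\nu_{\Dfixed}(E_Z))\,]$.

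The second step removes the expectation using two properties of $g$: it is non-decreasing (enlarging the feasible set $\{E : \nu_{\Dfixed}(E) \le \kappa\}$ can only increase the supremum) and concave on $[0,1]$. Concavity is the crux: $\blowup_{\kappa}(\mu,\nu)$ is the optimal value of a Neyman--Pearson-type problem --- maximize $\Ex_{\mu}[f]$ over tests subject to the linear constraint $\Ex_{\nu}[f] \le \kappa$ --- and for the non-atomic output distributions produced by DP-SGD the optimum over $\{0,1\}$-valued $f$ equals the optimum over $[0,1]$-valued $f$ (Lyapunov convexity), which is manifestly concave in $\kappa$; in general one replaces $g$ by its concave envelope, which the formal version in \Cref{app:proof_bound_reconstruction} accounts for. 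Granting concavity, Jensen's inequality gives $\Ex_{Z}[\,g(\nu_{\Dfixed}(E_Z))\,] \le g\big(\Ex_{Z}[\nu_{\Dfixed}(E_Z)]\big)$.

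Finally, I identify $\Ex_{Z \sim \pi}[\nu_{\Dfixed}(E_Z)]$ and conclude. Because $\nu_{\Dfixed} = M(\Dfixed)$ does not depend on $Z$, drawing $Z \sim \pi$ and $w \sim \nu_{\Dfixed}$ independently gives $\Ex_{Z}[\nu_{\Dfixed}(E_Z)] = \Pr_{Z \sim \pi,\, w \sim \nu_{\Dfixed}}[\lattackloss(Z, \reconstruct(w)) \le \eta]$, and conditioning on $w$ fixes the guess $z_0 = \reconstruct(w)$ independently of $Z$, so this is at most $\sup_{z_0 \in \Zset} \Pr_{Z \sim \pi}[\lattackloss(Z, z_0) \le \eta] = \kappa_{\pi, \lattackloss}(\eta)$. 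Monotonicity of $g$ then gives $\Ex_{Z}[\mu_{\Dz}(E_Z)] \le g(\kappa_{\pi,\lattackloss}(\eta)) = \blowup_{\kappa_{\pi,\lattackloss}(\eta)}(\mu_{\Dfixed}, \nu_{\Dfixed})$, and taking the supremum over $\Dfixed$ produces the claimed $\gamma$; since $\Dfixed$ and $\reconstruct$ were arbitrary this is exactly $(\eta,\gamma)$-ReRo. The main obstacle is this concavity / Neyman--Pearson step, together with the attendant measurability bookkeeping (measurability of $z \mapsto \nu_{\Dfixed}(E_z)$, attainment of the suprema, and the reduction from randomized to deterministic attacks); the probabilistic skeleton --- rewrite, apply $\blowup$ and the hypothesis pointwise, Jensen, exploit independence --- is otherwise short.
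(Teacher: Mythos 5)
Your proof is correct and follows essentially the same route as the paper's: rewrite the success probability as $\Ex_{Z\sim\pi}[\mu_{\Dz}(E_Z)]$, bound each term by the blow-up function at level $\nu_{\Dfixed}(E_Z)$ via the domination hypothesis, then combine concavity of $\kappa \mapsto \blowup_{\kappa}(\mu_{\Dfixed},\nu_{\Dfixed})$ (the paper's testability/Neyman--Pearson assumption) with Jensen and the bound $\Ex_{Z}[\nu_{\Dfixed}(E_Z)] \leq \kappa_{\pi,\lattackloss}(\eta)$. The only cosmetic difference is that the paper phrases the pointwise step through a change of measure $\Ex_{W\sim\nu}\left[\tfrac{\mu_z(W)}{\nu(W)}\indicator[W\in E]\right]$, and your choice of $\kappa_z = \nu_{\Dfixed}(E_z)$ is the consistent one for that step.
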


This result basically says that the probability of successful reconstruction for an adversary that does observe the output of the mechanism can be bounded by the maximum probability under $\mu^*_{\Dfixed}$ over all events that have probability $\kappa_{\pi,\lattackloss}(\eta)$ under $\nu^*_{\Dfixed}$, when we take the worst-case setting over all fixed datasets $\Dfixed$ and all target points $z$ in the support of the prior.
If $M$ satisfies $(\epsilon, 0)$-DP (i.e.\ $(\infty, \epsilon)$-RDP), then $\Pr_{\nu_{\Dfixed}}[E] \leq e^{\epsilon} \Pr_{\mu_{\Dz}}[E]$ for any event $E$, in which case \Cref{thm:bound_reconstruction} gives with $\gamma = \kappa_{\pi,\lattackloss}(\eta) e^{\epsilon}$. This recovers the case $\alpha = \infty$ of the bound from \citet{balle2022reconstructing} stated above.

\begin{remark}The blow-up function is tightly related to the notion of trade-off function defined in \cite{dong2019gaussian}. Precisely,
the trade-off function between two probability distributions $\mu$ and $\nu$ is defined as $\mathcal{T}_{\kappa}(\mu, \nu)=\inf\{\Pr_{\mu}[E] : E \;\; \text{s.t.} \;\; \Pr_{\nu}[E] \leq \kappa \}$. The trade-off function is usually introduced in terms of Type I and Type II errors; it defines the smallest Type II error achievable given constraint on the maximum Type I error, and we have the following relationship: $\mathcal{T}_{\kappa}(\mu, \nu)=1-\blowup_{\kappa}(\mu, \nu)$.
\end{remark}

\paragraph{ReRo for DP-SGD}
Next we apply this bound directly to DP-SGD without an intermediate appeal to its RDP guarantees; in \Cref{ssec:compare_rero} we will empirically demonstrate that in practice this new bound is vastly superior to the bound in \citet{balle2022reconstructing}.

Let $M$ be the mechanism that outputs all the intermediate gradients produced by running DP-SGD for $T$ steps with noise multiplier $\sigma$ and sampling rate $q$.
Our analysis relies on the observation from \citet{DBLP:journals/corr/abs-2204-06106} that comparing the output distributions of $M$ on two neighbouring datasets can be reduced to comparing the $T$-dimensional Gaussian distribution $\nu_{T,\sigma} = \mathcal{N}(0, \sigma^2 I)$ with the Gaussian mixture $\mu_{T,\sigma,q} = \sum_{w \in \{0,1\}^T}  \Pr[B(q,T) = w] \mathcal{N}(w, \sigma^2 I)$, where $B(q,T)$ is a random binary vector obtained by sampling each coordinate independently from a Bernoulli distribution with success probability $q$.

\begin{corollary}\label{cor:bound_reconstruction_dpsgd}
For any prior $\pi$ and error function $\lattackloss$, $M$ is $(\eta, \gamma)$-ReRo with
$\gamma = \blowup_{\kappa_{\pi,\lattackloss}(\eta)}(\mu_{T,\sigma,q}, \nu_{T,\sigma})$.
\end{corollary}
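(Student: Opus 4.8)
The plan is to apply \Cref{thm:bound_reconstruction} directly with the explicit candidate distribution supplied by the structure of DP-SGD. The main work is to verify the hypothesis of the theorem: for every fixed dataset $\Dfixed$ we must exhibit a distribution $\mu_{\Dfixed}$ such that $\sup_{z \in \supp(\pi)} \blowup_{\kappa}(\mu_{\Dz}, \nu_{\Dfixed}) \leq \blowup_{\kappa}(\mu_{\Dfixed}, \nu_{\Dfixed})$ for all $\kappa \in [0,1]$. Once this is in place, the theorem immediately gives $(\eta,\gamma)$-ReRo with $\gamma = \sup_{\Dfixed} \blowup_{\kappa_{\pi,\lattackloss}(\eta)}(\mu_{\Dfixed}, \nu_{\Dfixed})$, and the remaining task is to identify this worst-case blow-up value with $\blowup_{\kappa_{\pi,\lattackloss}(\eta)}(\mu_{T,\sigma,q}, \nu_{T,\sigma})$.

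First I would invoke the reduction of \citet{DBLP:journals/corr/abs-2204-06106}: comparing $\mu_{\Dz} = M(\Dfixed \cup \{z\})$ against $\nu_{\Dfixed} = M(\Dfixed)$ is, up to a measure-preserving transformation of the output space, equivalent to comparing the canonical pair $(\mu_{T,\sigma,q}, \nu_{T,\sigma})$. Concretely, writing each released gradient as the sum of the (known) contribution of $\Dfixed$, the clipped contribution of the candidate point $z$ scaled into a unit vector, and the Gaussian noise, one can shift and rotate coordinates so that $\nu_{\Dfixed}$ becomes $\cN(0,\sigma^2 I_T)$ and $\mu_{\Dz}$ becomes the Poisson-subsampled Gaussian mixture $\mu_{T,\sigma,q}$ — here the mixing over $w \in \{0,1\}^T$ reflects which of the $T$ minibatches included $z$, each independently with probability $q$, and clipping guarantees the mean shift has norm at most the clipping constant (which we normalize to $1$, absorbing it into $\sigma$). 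Since $\blowup_{\kappa}$ is invariant under such joint measure-preserving maps of $(\mu,\nu)$, we get $\blowup_{\kappa}(\mu_{\Dz}, \nu_{\Dfixed}) = \blowup_{\kappa}(\mu_{T,\sigma,q}, \nu_{T,\sigma})$ for every $z \in \supp(\pi)$ and every $\Dfixed$. This is exactly what makes the hypothesis of \Cref{thm:bound_reconstruction} trivially satisfiable: take $\mu_{\Dfixed}$ to be (a measure-preserving image of) $\mu_{T,\sigma,q}$ itself, so that the required inequality holds with equality for all $\kappa$.

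With the hypothesis verified, \Cref{thm:bound_reconstruction} yields $\gamma = \sup_{\Dfixed} \blowup_{\kappa_{\pi,\lattackloss}(\eta)}(\mu_{\Dfixed}, \nu_{\Dfixed}) = \blowup_{\kappa_{\pi,\lattackloss}(\eta)}(\mu_{T,\sigma,q}, \nu_{T,\sigma})$, since the blow-up value is the same for every $\Dfixed$ after the reduction, so the supremum is attained trivially. I would also note the minor technical point that clipping only implies the per-coordinate mean shift is bounded by the clipping norm, not exactly equal to it; monotonicity of $\blowup_\kappa$ in the separation between the two distributions (larger mean shift only increases the blow-up) shows the worst case is the unit shift, which is the one realized by $\mu_{T,\sigma,q}$, so the bound is valid in all cases and tight in the worst case.

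The main obstacle is the measure-preserving reduction step: one must be careful that mapping the high-dimensional gradient space down to the $T$-dimensional "signal" space genuinely preserves $\blowup_\kappa$. The cleanest way is to observe that for both $\mu_{\Dz}$ and $\nu_{\Dfixed}$, all the information distinguishing them lives in the one-dimensional projection (per step) onto the direction of $z$'s clipped gradient, so one can apply a rotation making this direction a coordinate axis and then the orthogonal complement carries an identical Gaussian factor under both measures, which factors out and leaves precisely the canonical $(\mu_{T,\sigma,q}, \nu_{T,\sigma})$ comparison; invariance of $\blowup_\kappa$ under such an isomorphism of the underlying probability spaces is what closes the argument. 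A subtlety worth flagging is that across the $T$ steps the direction of $z$'s gradient changes with $\theta_t$, but since $\theta_t$ is a deterministic function of the previously released gradients and the known initialization, one can condition on the history and apply the rotation step-by-step, which is exactly the adaptive composition structure underlying the $f$-DP / dominating-pair analysis of DP-SGD.
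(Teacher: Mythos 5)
Your overall strategy is the paper's: verify the hypothesis of \Cref{thm:bound_reconstruction} by showing that the canonical subsampled-Gaussian pair $(\mu_{T,\sigma,q},\nu_{T,\sigma})$ dominates the actual output distributions of DP-SGD, and then read off $\gamma$. The gap is in how you establish that domination. You assert that $(\mu_{\Dz},\nu_{\Dfixed})$ is \emph{equivalent} to $(\mu_{T,\sigma,q},\nu_{T,\sigma})$ up to a measure-preserving transformation, so that $\blowup_{\kappa}(\mu_{\Dz},\nu_{\Dfixed}) = \blowup_{\kappa}(\mu_{T,\sigma,q},\nu_{T,\sigma})$. No such transformation exists in general: the direction of $z$'s clipped gradient at step $t$ depends on $\theta_t$ and hence on the noise realized in earlier steps, and the clipped gradient can have norm strictly below $C$, so only an inequality can hold. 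More importantly, proving that inequality for the \emph{joint} distribution over $T$ adaptive steps is exactly the nontrivial content you are implicitly re-deriving: your ``condition on the history and rotate step-by-step'' sketch shows that each one-step conditional pair is dominated by a one-dimensional subsampled Gaussian pair, but it does not by itself show that the blow-up function of the $T$-step joint is dominated by that of the product pair --- blow-up (equivalently, trade-off) functions do not compose under adaptive composition without an argument, and appealing to ``the $f$-DP / dominating-pair analysis'' is citing the needed theorem, not proving it.

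The paper fills precisely this gap by citation: Theorem 6 of \citet{DBLP:journals/corr/abs-2204-06106} gives $\tva(\nu_{\Dfixed},\mu_{\Dz}) \le \tva(\nu_{T,\sigma},\mu_{T,\sigma,q})$ for all $a$, a hockey-stick--divergence domination that already accounts for adaptivity and partial clipping; and Lemma 21 of \cite{zhu2022optimal} expresses $\sup_{\omega_1(E)\le\kappa}\omega_2(E)$ as a monotone functional of $\{\tva(\omega_1,\omega_2)\}_{a>1}$, so uniform $\tva$ domination transfers to blow-up domination, which is the hypothesis of \Cref{thm:bound_reconstruction}. Your closing observation that the blow-up is monotone in the mean shift is a correct ingredient for the single-step reduction, but it is not a substitute for the adaptive composition step; to make your route self-contained you would need to prove a dominating-pair composition theorem, which is where the real work lies.
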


We defer proofs and discussion to \Cref{app:proofs}.
We note that the concurrent work of \citet{guo2022analyzing} also  provides an upper bound on the success of training data reconstruction under DP. 
We compare this to our own upper bound in \cref{sec:compare_guo_multi_mia}.

\subsection{Upper bound estimation via reconstruction robustness}\label{ssec:estimate_rero}

To estimate the bound in \Cref{cor:bound_reconstruction_dpsgd} in practice we first estimate the baseline probability $\kappa = \kappa_{\pi,\lattackloss}(\eta)$ and then the blow-up $\gamma = \blowup_{\kappa}(\mu_{T,\sigma,q}, \nu_{T,\sigma})$.

\paragraph{Estimating $\kappa$.}
All our experiments deal with a uniform discrete prior $\pi = \frac{1}{n} \sum_i \one_{z_i}$ over a finite set of points $\{z_1, \ldots, z_n\}$. In this case, to compute $\kappa$ it suffices to find the $\lattackloss$-ball of radius $\eta$ containing the largest number of points from the prior. If one is further interested in exact reconstruction (i.e.\ $\lattackloss(z,z') = \indicator[z \neq z']$ and $\eta < 1$) then we immediately have $\kappa = 1/n$.
Although this already covers the setting from our experiments, we show in \Cref{app:estimate_kappa} that more generally it is possible to estimate $\kappa$ only using samples from the prior.
We use this to calculate $\kappa$ for some non-uniform settings where we relax the condition to be on a close reconstruction instead of exact reconstruction.

\paragraph{Estimating $\gamma$.}
Note that it is easy to both sample and evaluate the density of $\mu = \mu_{T,\sigma,q}$ and $\nu = \nu_{T,\sigma}$.
Any time these two operations are feasible we can estimate $\blowup_{\kappa}(\mu, \nu)$ using a non-parametric approach to find the event $E$ that maximizes \Cref{eqn:blowup}.
The key idea is to observe that, as long as $\mu / \nu < \infty$,\footnote{This is satisfied by $\mu = \mu_{T,\sigma,q}$ and $\nu = \nu_{T,\sigma}$.} then a change of measure gives $\Pr_{\mu}[E] = \Ex_{W \sim \nu}\left[ \frac{\mu(W)}{\nu(W)} \indicator[W \in E] \right]$.
For fixed $E$ this expectation can be approximated using samples from $\nu$.
Furthermore, since we are interested in the event of probability less than $\kappa$ under $\nu$ that maximizes the expectation, we can take $N$ samples from $\nu$ and keep the $\kappa N$ samples that maximize the ratio $\mu / \nu$ to obtain a suitable candidate for the maximum event.
This motivates the Monte-Carlo approximation of $\gamma$ in \Cref{alg:estimate_gamma}.

\begin{proposition}\label{prop:gamma_consistent}
For $\mu = \mu_{T,\sigma,q}$ and $\nu = \nu_{T,\sigma}$ we have $\lim_{N \to \infty} \hat{\gamma} = \blowup_{\kappa}(\mu, \nu)$ almost surely.
\end{proposition}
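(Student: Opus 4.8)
The plan is to express both $\blowup_\kappa(\mu,\nu)$ and the estimator $\hat\gamma$ in terms of the likelihood ratio $r = d\mu/d\nu$ and then apply standard empirical-process arguments. Let $\rho$ denote the law of $r(W)$ for $W \sim \nu$. Everything rests on two facts specific to $(\mu,\nu) = (\mu_{T,\sigma,q}, \nu_{T,\sigma})$: first, $r$ is $\nu$-integrable, since $\Ex_{W \sim \nu}[r(W)] = \int \mu = 1 < \infty$; second, $\rho$ is atomless, because $r(w) = \sum_{v \in \{0,1\}^T} \Pr[B(q,T)=v]\, e^{(2\langle w,v\rangle - \|v\|^2)/2\sigma^2}$ is a finite, strictly positive, non-constant real-analytic function of $w$, so each level set $\{r = t\}$ is a proper analytic subset of $\R^T$ and hence Lebesgue-null, while $\nu$ is absolutely continuous. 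Atomlessness lets the Neyman--Pearson lemma give the blow-up exactly: the optimal event in \Cref{eqn:blowup} is the super-level set $E_\kappa = \{w : r(w) \ge t_\kappa\}$, where $t_\kappa$ is the $(1-\kappa)$-quantile of $\rho$ and $\Pr_\nu[E_\kappa] = \kappa$, so that
\[
\blowup_\kappa(\mu,\nu) = \Pr_\mu[E_\kappa] = \Ex_{W \sim \nu}\bigl[ r(W)\, \one[r(W) \ge t_\kappa] \bigr] =: G(t_\kappa) .
\]
Dominated convergence plus atomlessness make $G$ continuous, with $G(-\infty) = 1$ and $G(+\infty) = 0$; the CDF $F(t) = \Pr_\nu[r(W) \le t]$ is continuous for the same reason.

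The second step is to recognise $\hat\gamma$ as an empirical plug-in for $G(t_\kappa)$. Writing $r_i = r(w_i)$ and $\hat t_N := r'_{N'}$ for the $N'$-th largest ratio, atomlessness makes the $r_i$ almost surely distinct, so exactly $N'$ of them exceed $\hat t_N$ and the Monte Carlo estimate returned by \Cref{alg:estimate_gamma} equals $\hat G_N(\hat t_N)$ with $\hat G_N(t) := \tfrac1N \sum_{i=1}^N r_i\, \one[r_i \ge t]$. Convergence then comes from two almost-sure uniform laws of large numbers. The classical Glivenko--Cantelli theorem gives $\sup_t |\hat F_N(t) - F(t)| \to 0$ a.s., where $\hat F_N$ is the empirical CDF of the $r_i$. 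And $\hat G_N$ is monotone, $\hat G_N(t) \to G(t)$ a.s.\ for each fixed $t$ by the strong law (using $\Ex_\nu|r(W)| < \infty$), and $\hat G_N(-\infty) = \tfrac1N\sum_i r_i \to 1 = G(-\infty)$; a monotone sequence converging pointwise to a continuous limit with finite limits at $\pm\infty$ converges uniformly (the P\'olya-type step in the proof of Glivenko--Cantelli), so $\sup_t |\hat G_N(t) - G(t)| \to 0$ a.s.

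The last step chains these together. Counting the $r_i$ at or below $\hat t_N$ gives $\hat F_N(\hat t_N) = 1 - N'/N + 1/N \to 1-\kappa$ a.s., and uniform convergence of $\hat F_N$ then forces $F(\hat t_N) \to 1-\kappa = F(t_\kappa)$ a.s. For $\kappa \in (0,1)$ the set $\{t : F(t) = 1-\kappa\}$ is a nonempty bounded closed interval carrying no $\rho$-mass, so $G$ is constant on it (equal to $G(t_\kappa)$); any sequence $t_n$ with $F(t_n) \to 1-\kappa$ is eventually bounded and all its subsequential limits lie in that interval, so $G(t_n) \to G(t_\kappa)$ by continuity of $G$. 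Applying this to $t_n = \hat t_N$ and combining with uniform convergence of $\hat G_N$, $\hat\gamma = \hat G_N(\hat t_N) \to G(t_\kappa) = \blowup_\kappa(\mu,\nu)$ almost surely (the case $\kappa = 1$ being immediate, $\hat\gamma = \tfrac1N\sum_i r_i \to 1$). I expect this last transfer---from convergence of $F(\hat t_N)$ to convergence of $G(\hat t_N)$---to be the only delicate point; it becomes trivial if $F$ is strictly increasing near $t_\kappa$, and I would also check whether that strict monotonicity follows directly from the explicit Gaussian-mixture form of $r$, which would replace the compactness detour by the cleaner statement $\hat t_N \to t_\kappa$ a.s.
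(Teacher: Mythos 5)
Your proof is correct and follows the same overall strategy as the paper's: interpret $\hat\gamma$ as an empirical tail expectation of the likelihood ratio $r=\mu/\nu$ above an empirical threshold, show that the threshold asymptotically captures $\nu$-mass $\kappa$, and show that the empirical tail expectation converges to its population counterpart. The execution differs in the tools. The paper obtains concentration of the empirical average from an explicit second-moment bound $\Ex_{W\sim\nu}[r(W)^2]\le e^{T}$ (derived from the Gaussian-mixture structure of $\mu$) together with a Chernoff--Hoeffding bound on the $\nu$-mass of the super-level set at the empirical threshold, and then sandwiches $\lim\hat\gamma$ between $\blowup_{\kappa-\tau}(\mu,\nu)-\tau$ and $\blowup_{\kappa+\tau}(\mu,\nu)+\tau$, letting $\tau\to 0$ and appealing to continuity in $\kappa$ (asserted via ``smoothness''; it also follows from the paper's own concavity lemma for $\kappa\mapsto\blowup_\kappa$). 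You instead identify the limit exactly via Neyman--Pearson and atomlessness of the law of $r(W)$ under $\nu$, and get almost-sure uniform convergence from the strong law plus monotonicity (Glivenko--Cantelli/P\'olya), which requires only $\Ex_\nu[r]=1$ rather than a second moment; your flat-interval argument transferring $F(\hat t_N)\to 1-\kappa$ into $G(\hat t_N)\to G(t_\kappa)$ is handled more carefully than the corresponding step in the paper. What the paper's route buys is a quantitative finite-$N$ deviation bound as a byproduct; what yours buys is a cleaner identification of the limiting object and weaker moment hypotheses. One point to reconcile: \Cref{alg:estimate_gamma} as printed returns $\frac{1}{N'}\sum_{i=1}^{N'}r_i'$, whereas both your $\hat G_N(\hat t_N)$ and the quantity $\Ex[r\cdot\indicator[r>z]]$ analyzed in the paper's proof correspond to the normalization $\frac{1}{N}\sum_{i=1}^{N'}r_i'$; the two differ by a factor tending to $1/\kappa$, so the $1/N'$ in the algorithm is evidently a typo and your reading is the intended one, but your sentence claiming the algorithm ``returns'' $\hat G_N(\hat t_N)$ should acknowledge this.
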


\begin{minipage}{0.45\textwidth}
\begin{algorithm}[H]
\caption{Estimating $\gamma$}\label{alg:estimate_gamma}
\begin{algorithmic}
	\State \textbf{Input:} Prior probability $\kappa$, number of samples $N$, sampling and density evaluation access to $\nu$ and $\mu$.
    \State \textbf{Output:} Estimate $\hat{\gamma}$ of reconstruction robustness probability.
    \State Sample $w_1, \ldots, w_N$ independently from $\nu$
    \State Calculate the ratios $r_i = \mu(w_i) / \nu(w_i)$
    \State Sort the ratios in decreasing order: $r_1' \geq \cdots \geq r'_N$
    \State Let $N' = \lceil \kappa N \rceil$ and return $\hat{\gamma} = \frac{1}{N'} \sum_{i=1}^{N'} r'_i$
\end{algorithmic}
\end{algorithm}
\end{minipage}
\hfill
\begin{minipage}{0.45\textwidth}
\begin{algorithm}[H]
\caption{Prior-aware attack}\label{alg: prior_aware_attack}
\begin{algorithmic}
	\State \textbf{Input:} Discrete prior $\pi=\{z_1, \ldots, z_n\}$, Model parameters $\{\theta_1, \theta_1, \ldots, \theta_T\}$, Privatized gradients (with known gradients subtracted) $\{\bar{g}_1, \ldots, \bar{g}_{T}\}$\\
	Observations: $\mathcal{O} \gets \{\}$
	\State \textbf{Output:} Reconstruction guess $\hat{z}\in\pi$
	\For{$i \in [1, 2, \ldots, n]$}
    \State $\mathcal{O}[i] \gets \sum_{t=1}^T \langle \clip_C(\nabla_{\theta} \ell(\theta_t, z_i)), \bar{g}_t \rangle$
    \EndFor
    \State $\hat{i} \gets \argmax \mathcal{O}$ \\
	\Return $\hat{z} \gets \pi[\hat{i}]$ 
\end{algorithmic}
\end{algorithm}
\end{minipage}

\paragraph{Estimation guarantees}
Although Proposition 4 is only asymptotic, we can provide confidence intervals around this estimate using two types of inequalities. 
\cref{alg:estimate_gamma} has two points of error, the gap between the empirical quantile and the population quantile, and the error from mean estimation $\hat{\gamma}$.
The first source of error can be bounded using the DKW inequality~\citep{dvoretzky1956asymptotic} which provides a uniform concentration for each quantile. 
In particular, we can show that the error of quantile estimation is at most $2e^{-2n\eta^2}$, where $n$ is the number of samples and $\eta$ is the error in calculation of the quantile. 
Specifically, with a million samples, we can make sure that with probability 0.999 the error of quantile estimation is less than 0.002, and we can make this smaller by increasing the number of samples. We can account for the second source of error with Bennett's inequality, that leverages the bounded variance of the estimate. In all our bounds, we can show that the error of this part is also less than 0.01 with probability 0.999.
We provide an analysis on the computational cost of estimating $\gamma$ in \cref{app: opt_grad_details}.

\subsection{Lower bound estimation via a prior-aware attack}\label{ssec:prior_aware_attack}

We now present a prior-aware attack whose success can be directly compared to the reconstruction upper bound from \Cref{cor:bound_reconstruction_dpsgd}.

In addition to the uniform prior $\pi = \frac{1}{n} \sum_i \one_{z_i}$ from which we assume the target is sampled from, our prior-aware attack has access to the same information as the gradient-based attack from \Cref{sec: background}: all the privatized gradients (and therefore, intermediate models) produced by DP-SGD, and knowledge of the fixed dataset $\Dfixed$ that can be used to remove all the known datapoints from the privatized gradients.
The attack is given in \Cref{alg: prior_aware_attack}.
For simplicity, in this section we present and experiment with the version of the attack corresponding to the full-batch setting in DP-SGD (i.e.\ $q = 1$). We present an extension of the algorithm to the mini-batch setting in \Cref{sec:effect_of_hyps}.

The rationale for the attack is as follows.
Suppose, for simplicity, that all the gradients are clipped so they have norm exactly $C$.
If $z^*$ is the target sampled from the prior, then $\bar{g}_t \sim \clip_C(\nabla_{\theta} \ell(\theta_{t}, z^*)) + \mathcal{N}(0,C^2 \sigma^2 I)$.
Then the inner products $\langle \clip_C(\nabla_{\theta} \ell(\theta_{t}, z_i)), \bar{g}_t \rangle$ follow a distribution $\mathcal{N}(C^2, C^4 \sigma^2)$ if $z_i = z^*$ and $\mathcal{N}(A, C^4 \sigma^2)$ for some $A < C^2$ otherwise (assuming no two gradients are in the exact same direction).
In particular, $A \approx 0$ if the gradients of the different $z_i$ in the prior are mutually orthogonal.
Thus, finding the $z_i$ that maximizes this sum of inner products is likely to produce a good guess for the target point.
This attack gives a lower bound on the probability of successful reconstruction, which we can compare with the upper bound estimate derived in \Cref{ssec:estimate_rero}.

\section{Experimental Evaluation}
\label{sec:experiments}

We now evaluate both our upper bounds for reconstruction success and our empirical privacy attacks (which gives us lower bounds on reconstruction success). We show that our attack has a success probability nearly identical to the bound given by our theory.
We conclude this section by inspecting how different variations on our threat models change both the upper bound and the optimality of our new attack.

\subsection{Attack and Bound Evaluation on CIFAR-10}
\label{ssec:compare_rero}

We now evaluate our upper bound (\Cref{cor:bound_reconstruction_dpsgd}) and prior-aware reconstruction attack (\Cref{alg: prior_aware_attack}) on full-batch DP-SGD, and compare it to the gradient-based (prior-oblivious) attack optimizing \Cref{eq:grad_opt_loss} and the RDP upper bound obtained from previous work (\Cref{eqn:dpsgd_rero_balle}).
To perform our evaluation on CIFAR-10 models with relatively high accuracy despite being trained with DP-SGD, we follow \citet{de2022unlocking} and use DP-SGD to fine-tune the last layer of a WideResNet model \citep{zagoruyko2016wide} pre-trained on ImageNet.
In this section and throughout the paper, we use a prior with uniform support over $10$ randomly selected points unless we specify the contrary.
Further experimental details are deferred to \Cref{app: opt_grad_details}.

The results of our evaluation are presented in \Cref{fig:experiment_6_main}, which reports the upper and lower bounds on the probability of successful reconstruction produced by the four methods at three different values of $(\epsilon, 10^{-5})$-DP.
The first important observation is that our new ReRo bound for DP-SGD is a significant improvement over the RDP-based bound obtained in \citet{balle2022reconstructing}, which gives a trivial bound for $\epsilon \in \{10, 100\}$ in this setting.
On the other hand, we also observe that the prior-aware attack substantially improves over the prior-oblivious gradient-based attack\footnote{To measure the success probability of the gradient-based attack we take the output of the optimization process, find the closest point in the support of the prior, and declare success if that was the actual target. This process is averaged over 10,000 randomly constructed priors and samples from each prior.}, which \Cref{sec: background} already demonstrated is stronger than previous model-based attacks --
a more thorough investigation of how the gradient-based attack compares to the prior-aware attack is presented in \Cref{sec:grad_attack_comparison} for varying sizes of the prior distribution.
\begin{wrapfigure}{r}{0.45\textwidth}
\centering
\captionsetup{width=.43\textwidth, justification=raggedright}
\begin{minipage}[t]{0.4\textwidth}
  \centering
  \includegraphics[width=1.\textwidth]{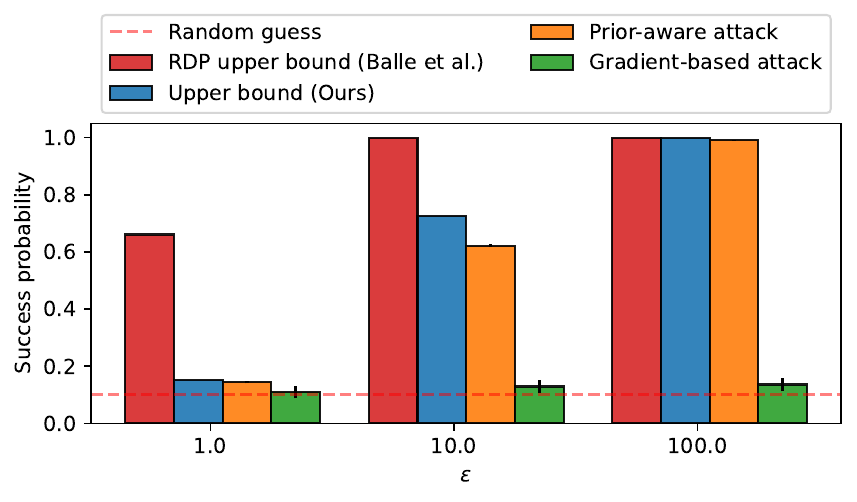}
    \caption{Comparison of success of reconstruction attacks against reconstruction upper bounds on CIFAR-10 with test accuracy at $\epsilon=1, 10$, and $100$ equal to 61.88\%, 89.09\%, and 91.27\%, respectively. Note, the best model we could train at $\epsilon=\infty$ with our experimental settings reaches 93.91\%.}
    \label{fig:experiment_6_main}
\end{minipage}
\\[2em]
\begin{minipage}[t]{0.4\textwidth}
  \centering
\includegraphics[width=1.\textwidth]{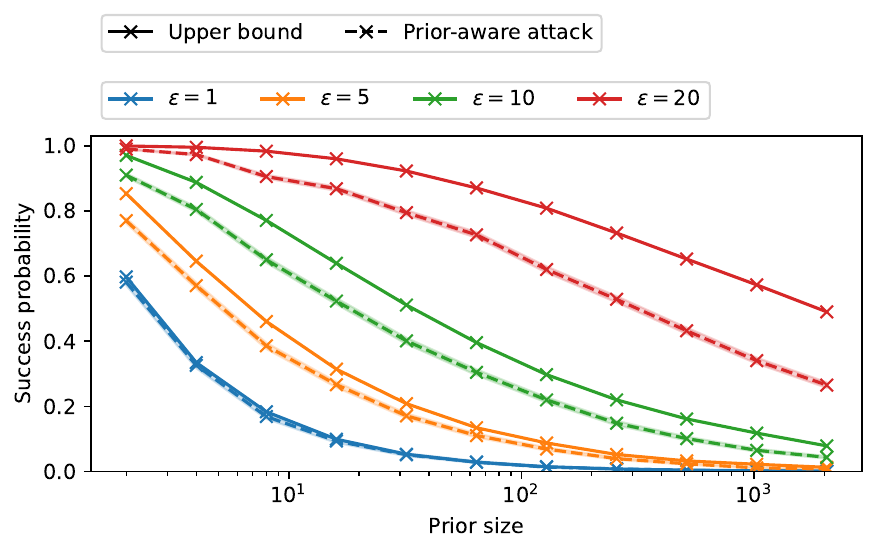}
\caption{How the size of the prior $\pi$ affects reconstruction success probability for a range of different values of $\epsilon$.}
\label{fig: experiment_3}
\end{minipage}
\\[2em]
\begin{minipage}[t]{0.4\textwidth}
    \centering
    \includegraphics[width=1.\textwidth]{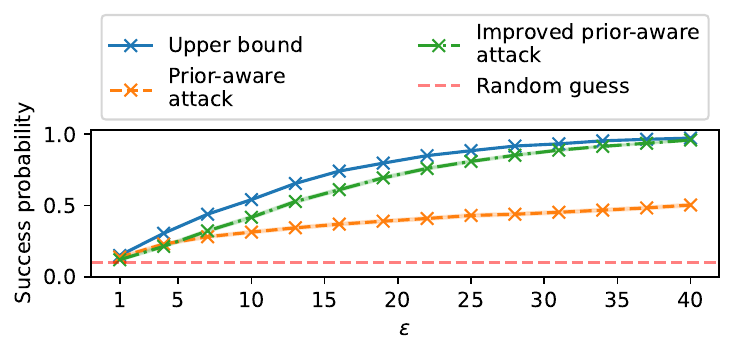}
\caption{We compare how the attack from \Cref{alg: prior_aware_attack} compares with the improved prior-aware attack when $q=0.02$.}
\label{fig: experiment_5a}
\end{minipage}
\vspace{-4em}
\end{wrapfigure}
We will use the prior-aware attack exclusively in following experiments due to its superiority over the gradient-based attack.
Finally, we observe that our best empirical attack and theoretical bound are very close to each other in all settings.
We conclude there are settings where the upper bound is nearly tight and the prior-based attack is nearly optimal.

Now that we have established that our attacks work well on highly accurate large models, our remaining experiments will be performed on MNIST which are significantly more efficient to run.

\subsection{Effects of the prior size}\label{ssec:prior_size}

We now investigate how the bound and prior-aware attack are affected by the size of the prior in \cref{fig: experiment_3}.
As expected, both the upper and lower bound to the probability that we can correctly infer the target point decreases as the prior size increases. 
We also observe a widening gap between the upper bound and attack for larger values of $\epsilon$ and prior size.
However, this bound is still relatively tight for $\epsilon<10$ and a prior size up to $2^{10}$.
Further experiments regarding how the prior affects reconstruction are given in \Cref{app:threat_model_exp_app}.

\subsection{Effect of DP-SGD Hyperparameters}
\label{sec:effect_of_hyps}

Given that we have developed an attack that is close to optimal in a full-batch setting, we are now ready to inspect the effect DP-SGD hyper-parameters controlling its privacy guarantees have on reconstruction success.
First, we observe how the size of mini-batch affects reconstruction success, before measuring how reconstruction success changes with differing hyperparameters at a fixed value of $\epsilon$.
Further experiments that measure the effect of the clipping norm, $C$, are deferred to \Cref{app:dpsgd_hyp_exps_app}.


\paragraph{Effect of sampling rate $q$.}
All of our experiments so far have been in the full-batch ($q=1$) setting.
We now measure how mini-batching affects both the upper and lower bounds by reducing the data sampling probability $q$ from $1$ to $0.02$ (see \Cref{app: opt_grad_details} for additional experimental details).

The attack presented in \Cref{alg: prior_aware_attack} assumes that the gradient of $z^*$ is present in every privatized gradient.
Since this is no longer true in mini-batch DP-SGD,
we design an improved attack by factoring in the knowledge that only a fraction $q$ of privatized gradients will contain the gradient of the target $z^*$.
This is achieved by, for fixed $z_i$, collecting the inner-products $\langle \clip_C(\nabla_{\theta} \ell(\theta_t, z_i)), \bar{g}_t \rangle$ across all $T$ iterations, and computing the score using only the top $q T$ values.
Pseudo-code for this improved prior-aware attack variant is given in \Cref{app:subsampling_alg}.

Results are shown in \Cref{fig: experiment_5a}, where we observe the improved attack is fairly tight to the upper bound, and also
a large difference between the previous and improved attack.


\paragraph{Effect of DP-SGD hyperparameters at a fixed $\epsilon$.}

\begin{figure}[t]
\captionsetup{width=1.\textwidth, justification=raggedright}
\begin{subfigure}{0.5\textwidth}
\centering
        \includegraphics[width=\textwidth]{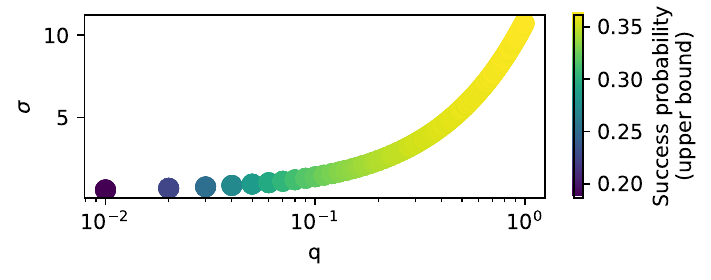}
\caption{Upper bound.}
\label{fig: experiment_same_eps_4_upper}
\end{subfigure}%
\begin{subfigure}{0.5\textwidth}
\centering
    \includegraphics[width=\textwidth]{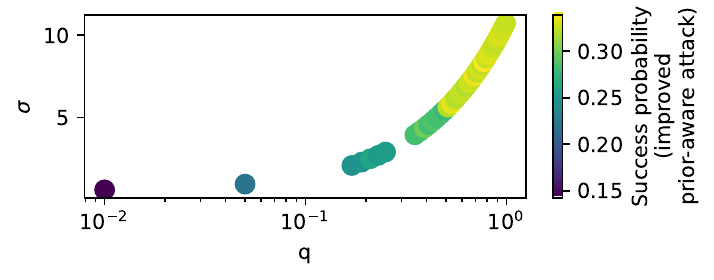}
        \caption{(Improved) Prior-aware attack.}
        \label{fig: experiment_same_eps_4_lower}
\end{subfigure}%

\caption{We plot $\sigma$ against $q$ at $(4, 10^{-5})$-DP at a fixed $T=100$ and the corresponding upper bound (\Cref{fig: experiment_same_eps_4_upper}) and (improved) prior-aware attack (\Cref{fig: experiment_same_eps_4_lower}). Reconstruction is less successful at smaller $q$ in comparison to large $q$ for the same value of $\epsilon$.}
\label{fig: experiment_same_eps_4}
\end{figure}

In our final experiment, we fix $\epsilon$ and investigate if and how the reconstruction upper bound and lower bound (through the improved prior-aware attack) change with different DP-SGD hyperparameters.
In \Cref{fig: experiment_same_eps_4}, we fix $\epsilon=4$ and run DP-SGD for $T=100$ iterations with $C=1$ while varying $q\in [0.01, 0.99]$.
In each setting we tune $\sigma$ to satisfy $(4, 10^{-5})$-DP. 
For experimental efficiency, we use fewer values of $q$ for the attack than the theoretical bound.
Surprisingly, we find that both the upper bound and lower bound \emph{change} with different hyperparameter configurations. 
For example, at $q=0.01$ the upper bound on successful reconstruction is $\approx0.20$ (and the attack is $\approx 0.15$), while at $q=0.99$ it is $\approx0.35$ (and the attack is $\approx 0.32$).
Generally, an increase in $T$ or $q$ increases the upper bound.
This suggests that an $(\epsilon, \delta)$-DP guarantee alone is not able to fully specify the probability that a reconstruction attack is successful.
We conduct a more extensive evaluation for different $\epsilon$ and prior sizes in \Cref{app:fix_epsilon_exp}.

\section{Conclusion}\label{sec: conclusion}
In this work, we investigated training data reconstruction bounds on DP-SGD. 
We developed upper bounds on the success that an adversary can reconstruct a training point $z^*$ under DP-SGD. 
In contrast to prior work that develop reconstruction upper bounds based on DP or RDP analysis of DP-SGD, our upper bounds are directly based on the parameters of the algorithm. 
We also developed new reconstruction attacks, specifically against DP-SGD, that obtain lower bounds on the probability of successful reconstruction, and observe they are close to our upper bounds. 
Our experiments show that both our upper and lower bounds are superior to previously proposed bounds for reconstruction in DP-SGD. 
Our investigations further showed that the $\epsilon$ parameter in DP (or RDP) cannot solely explain robustness to reconstruction attacks.
In other words, one can have the same value of $\epsilon$ for two different algorithms while achieving stronger robustness against reconstruction attacks in one over the other.
This suggest that a more fine-grained analysis of algorithms against specific privacy attacks can lead to superior privacy guarantees and also opens up the possibility of better hyperparameter selection for specific privacy concerns.

\section{Acknowledgements}

The authors would like to thank Leonard Berrada and Taylan Cemgil for their thoughtful feedback on an earlier version of this work.

\bibliography{refs}
\bibliographystyle{icml2023}

\newpage
\onecolumn

\appendix
\section{Experimental details}\label{app: opt_grad_details}

We detail the experimental settings used throughout the paper, and specific hyperparameters used for the various attacks we investigate.
The exact configurations for each experiment are given in \Cref{tab: experiment_details}.
We vary many experimental hyperparameters to investigate their effect on reconstruction, however, the default setting is described next.

For MNIST experiments we use a two layer MLP with hidden width 10 and eLU activations.
The attacks we design in this work perform equally well on all common activation functions, however it is well known that the model-based attack~\citep{balle2022reconstructing} performs poorly on piece-wise linear activations like ReLU.
We set $|\Dfixed|=999$ (and so the training set size is $|\Dfixed \cup \{z^*\}|=1,000$) and train with full-batch DP-SGD for $T=100$ steps.
For each $\epsilon$, we select the learning rate by sweeping over a range of values between 0.001 and 100; we do not use any momentum in optimization.
We set $C=0.1$, $\delta=10^{-5}$ and adjust the noise scale $\sigma$ for a given target $\epsilon$.
The accuracy of this model is over 90\% for $\forall \epsilon \geq 10$, however we emphasize that our experiments on MNIST are meant to primarily investigate the tightness of our reconstruction upper bounds.
We set the size of the prior $\pi$ to ten, meaning the baseline probability of successful reconstruction is 10\%.

For the CIFAR-10 dataset, we use a Wide-ResNet~\citep{zagoruyko2016wide} model with 28 layers and width factor 10 (denoted as WRN-28-10), group normalization, and eLU activations.
We align with the set-up of \citet{de2022unlocking}, who fine-tune a WRN-28-10 model from ImageNet to CIFAR-10.
However, because the model-based attack is highly expensive, we only fine-tune the final layer.
We set $|\Dfixed|=499$ (and so the training set size is $|\Dfixed \cup \{z^*\}|=500$) and train with full-batch DP-SGD for $T=100$ steps; again we sweep over the choice of learning rate for each value of $\epsilon$.
We set $C=1$, $\delta=10^{-5}$ and adjust the noise scale $\sigma$ for a given target $\epsilon$.
The accuracy of this model is over 89\% for $\forall \epsilon \geq 10$, which is close to the state-of-the-art results given by \citet{de2022unlocking}, who achieve 94.2\% with the same fine-tuning setting at $\epsilon=8$ (with a substantially larger training set size).
Again, we set the size of the prior $\pi$ to ten, meaning the baseline probability of successful reconstruction is 10\%.

\begin{table}[t]
\caption{Hyperparameter settings for each experiment. The set of examples that make up the prior ($\pi$), including the target ($z^*$), and the other examples in the training set ($\Dfixed$) are always drawn from the same data distribution, except for the experiment performed in \Cref{fig:experiment_1_main}.}
\label{tab: experiment_details}
\resizebox{\textwidth}{!}{%
\begin{tabular}{llllllll}
\toprule
\multirow{2}{*}{Dataset} & \multirow{2}{*}{Experiment} & \multirow{2}{*}{Clipping norm $C$} & Sampling     & \multirow{2}{*}{Update steps $T$} & \multirow{2}{*}{Model architecture $\theta$}                                                                                                                                                        & Training dataset size                       & \multirow{2}{*}{Prior size}                                   \\
                  &                          &                                    & probability $q$ &                                   &                                                                                                                                                                                                     & ($|\Dfixed| + 1$)                         &                                                               \\
\cmidrule{2-8}
\multirow{2}{*}{CIFAR-10} & \Cref{fig: experiment_7_cifar_curve}, \Cref{fig: experiment_7_cifar_viz}, \Cref{fig: cifar10_linf}                  & 1                 & 1                           & 100              & WRN-28-10                                                                                                                                                                                            & 500                                                                & -                                                             \\
                      &   \Cref{fig:experiment_6_main}                        & 1            & 1                           & 100              & WRN-28-10                                                                                                                                                                                            & 500                                                                  & 10                                                            \\
\cmidrule{2-8}
\multirow{8}{*}{MNIST} & \Cref{fig: experiment_7_mnist_curve}, \Cref{fig: experiment_7_mnist_viz}, \Cref{fig: mnist_linf}                   & 0.1               & 1                           & 100              & MLP $( 784 \rightarrow 10 \rightarrow 10)$                                                                                                                                                        & 1,000                                                                & -                                                             \\
& \Cref{fig: experiment_3}                                                 & 0.1               & 1                           & 100              & MLP $( 784 \rightarrow 10 \rightarrow 10)$                                                                                                                                                        & 1,000                                                               & $2^1 - 2^{11}$ \\
& \Cref{fig: experiment_5a}                                                 & 0.1               & 0.02                        & 1,000            & MLP $( 784 \rightarrow 10 \rightarrow 10)$                                                                                                                                                        & 500                                                                  & 10                                                            \\
& \Cref{fig: experiment_same_eps_4}                                                 &          1      & 0.01-0.99                            & 100               &  MLP $( 784 \rightarrow 10 \rightarrow 10)$ & 1,000                                                                 & 10        \\
& \Cref{fig: experiment_4}                                                 & 0.1               & 1                           & 100              & MLP $( 784 \rightarrow 10, 100, 1000 \rightarrow 10)$ & 1,000                                                               & 10       \\
& \Cref{fig: compare_grad_attack_against_prior_aware_main}                                                 & 0.1               & 1                           & 100              & MLP $( 784 \rightarrow 10 \rightarrow 10)$                                                                                                                                                        & 1,000                                                                & $2^1, 2^3, 2^7$                                               \\
& \Cref{fig:experiment_1_main}                                                 & 0.1               & 1                           & 100              & MLP $( 784 \rightarrow 10 \rightarrow 10)$                                                                                                                                                        & 1,000                                                               & 10                                                            \\
& \Cref{fig: experiment_2_main}                                                & 0.1               & 1                           & 100              & MLP $( 784 \rightarrow 10 \rightarrow 10)$                                                                                                                                                        & 5, 129, 1,000                           & 10                                                            \\
& \Cref{fig: comparison_advantage}                                                 & 1.0              & 1                           & -              & -                                                                                                                                                        &    -                                                           & 10 \\
& \Cref{fig: experiment_5b_main}                                                 & 0.1, 1            & 1                           & 100              & MLP $( 784 \rightarrow 10 \rightarrow 10)$                                                                                                                                                        & 1,000                                                              & 10                                                            \\
& \Cref{fig: compare_dp_hyps_upper_and_lower}                                                 & 1               & 0.01-0.99                           & 100              & MLP $( 784 \rightarrow 10 \rightarrow 10)$                                                                                                                                                        & 1,000                                                                & 2, 10, 100                                               \\
& \Cref{fig: compare_dp_hyps_upper}                                                 & 1               & 0.01-0.99                           & 100              & MLP $( 784 \rightarrow 10 \rightarrow 10)$                                                                                                                                                        & 1,000                                                                & 2, 10, 100                                               \\
\bottomrule
\end{tabular}
}
\end{table}

For the gradient-based and model-based attack we generate 1,000 reconstructions and for prior-aware attack experiments we generate 10,000 reconstructions from which we estimate a lower bound for probability of successful reconstruction.
That is, for experiments in \Cref{sec: background} repeat the attack 1,000 times for targets randomly sampled from base dataset (MNIST or CIFAR-10), and for all other experiments we repeat the attack 10,000 times for targets randomly sampled from the prior, which is itself sampled from the base dataset (MNIST or CIFAR-10).
We now give experimental details specific to the various attacks used throughout the paper.
Note that for attack results, we report 95\% confidence intervals around our lower bound estimate, however, in many cases these intervals are so tight it renders them invisible to the eye.

\paragraph{Model-based attack details.}

For the model-based attack given by \citet{balle2022reconstructing}, we train $40K$ shadow models, and as stated above, construct a test set by training a further 1,000 models on 1,000 different targets (and $\Dfixed$) from which we evaluate our reconstructions.
We use the same architecture for the RecoNN network and optimization hyperparameters as described in the MNIST and CIFAR-10 experiments in \citet{balle2022reconstructing}, and refer the interested reader there for details.

\paragraph{Gradient-based attack details.}

Our optimization hyperparameters are the same for both MNIST and CIFAR-10.
We initialize a $\hat{z}$ from uniform noise and optimize it with respect to the loss given in \Cref{eq:grad_opt_loss} for 1M steps of gradient descent with a learning rate of 0.01.
We found that the loss occasionally diverges and it is useful to have random restarts of the optimization process; we set the number of random restarts to five.
Note we assume that the label of $z^*$ is known to the adversary.
This is a standard assumption in training data reconstruction attacks on federated learning, as \citet{zhao2020idlg} demonstrated the label of the target can be inferred given access to gradients.
If we did not make this assumption, we can run the attack be exhaustively searching over all possible labels. 
For the datasets we consider, this would increase the cost of the attack by a factor of ten.
We evaluate the attack using the same 1,000 targets used to evaluate the model-based attack.

\paragraph{Prior-aware attack details.}

The prior-aware attacks given in \Cref{alg: prior_aware_attack} (and in \Cref{alg: improved_prior_aware_attack}) have no specific hyper-parameters that need to be set.
As stated, the attack proceeds by summing the inner-product defined in \Cref{ssec:prior_aware_attack} over all training steps for each sample in the prior and selecting the sample that maximizes this sum as the reconstruction.
One practical note is that we found it useful to normalize privatized gradients such that the privatized gradient containing the target will be sampled from a Gaussian with unit mean instead of $C^2$, which will be sensitive to choice of $C$ and can lead to numerical precision issues.

\paragraph{Estimating $\gamma$ details.}

As described in \Cref{sec:bounding_reconstruction}, $\nu$ is instantiated as  $\mathcal{N}(0, \sigma^2 I)$, a $T$-dimensional isotropic Gaussian distribution with zero mean, and $\mu$ is given by $\sum_{w \in \{0,1\}^T} p(w) \mathcal{N}(w, \sigma^2 I)$, a mixture of $T$-dimensional isotropic Gaussian distributions with means in $\{0,1\}^T$ sampled according to $B(q, T)$.
Throughout all experiments, we use 1M independent Gaussian samples to compute the estimation of $\gamma$ given by the procedure in \Cref{alg:estimate_gamma}, and because we use a discrete prior of size $|\pi|$, the base probability of reconstruction success, $\kappa$, is given as $\nicefrac{1}{|\pi|}$.
Estimating $\gamma$ is cheap; setting $T=1$ and using a 2.3 GHz 8-Core Intel Core i9 CPU it takes 0.002s to estimate with 10,000 samples. 
This increases to 0.065s, 0.196s, and 2.084s with 100,000, 1M, and 10M samples.
The estimate sensitivity is also small; for the experimental parameters used in \cref{fig:experiment_6_main} at $\epsilon=10$, over 1,000 different calls to \cref{alg:estimate_gamma} the standard deviation of $\gamma$ estimates is $<0.019$ using 10,000 samples, and $<0.0017$ using 1M samples.

\section{Visualization of reconstruction attacks on MNIST and CIFAR-10}\label{app: viz_rec_attack}

In \Cref{fig: experiment_7_viz}, we give a selection of examples for the gradient-based reconstruction attack presented in \Cref{sec: background} and plotted in \Cref{fig: experiment_7}.

\begin{figure*}[t]
\captionsetup{width=\textwidth, justification=raggedright}
  \centering
\begin{subfigure}[t]{0.5\textwidth}
\centering
    \includegraphics[width=1.0\textwidth,height=0.8\textheight, keepaspectratio]{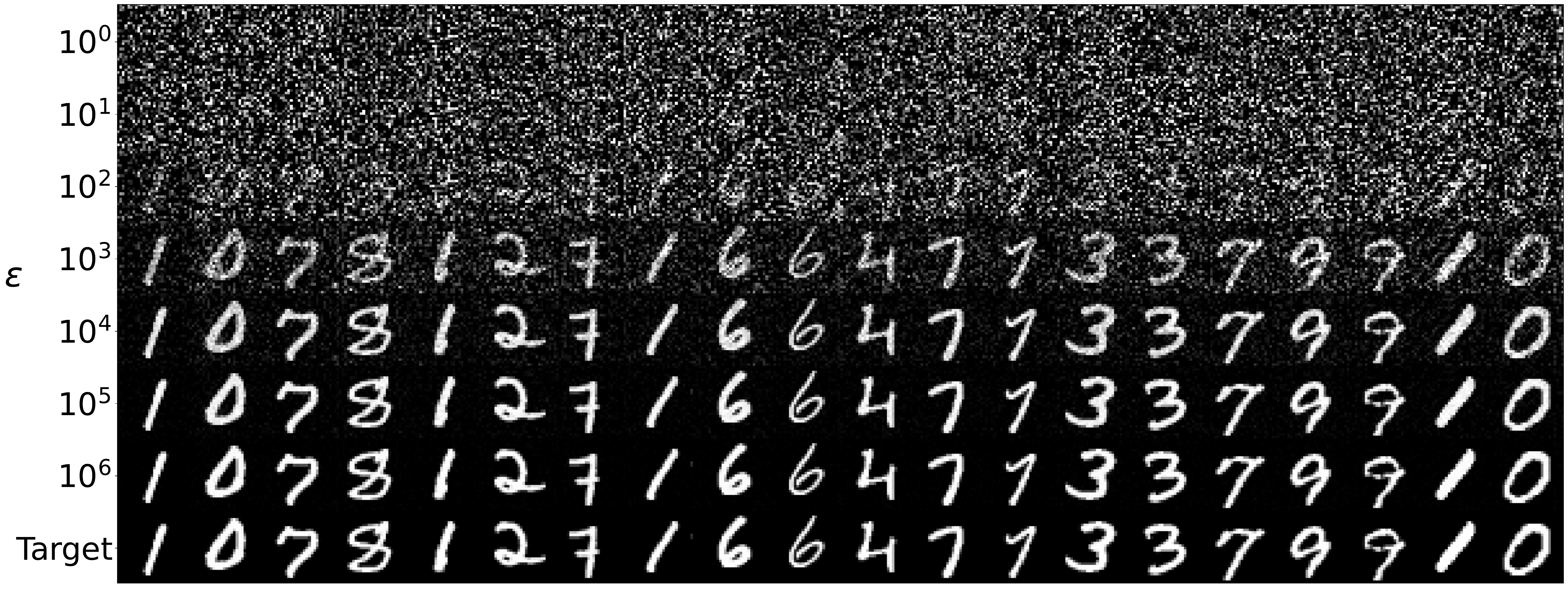}
        \caption{Examples of reconstructions on MNIST.}
        \label{fig: experiment_7_mnist_viz}
\end{subfigure}%
\begin{subfigure}[t]{0.5\textwidth}
\centering
    \includegraphics[width=1.0\textwidth,height=0.8\textheight, keepaspectratio]{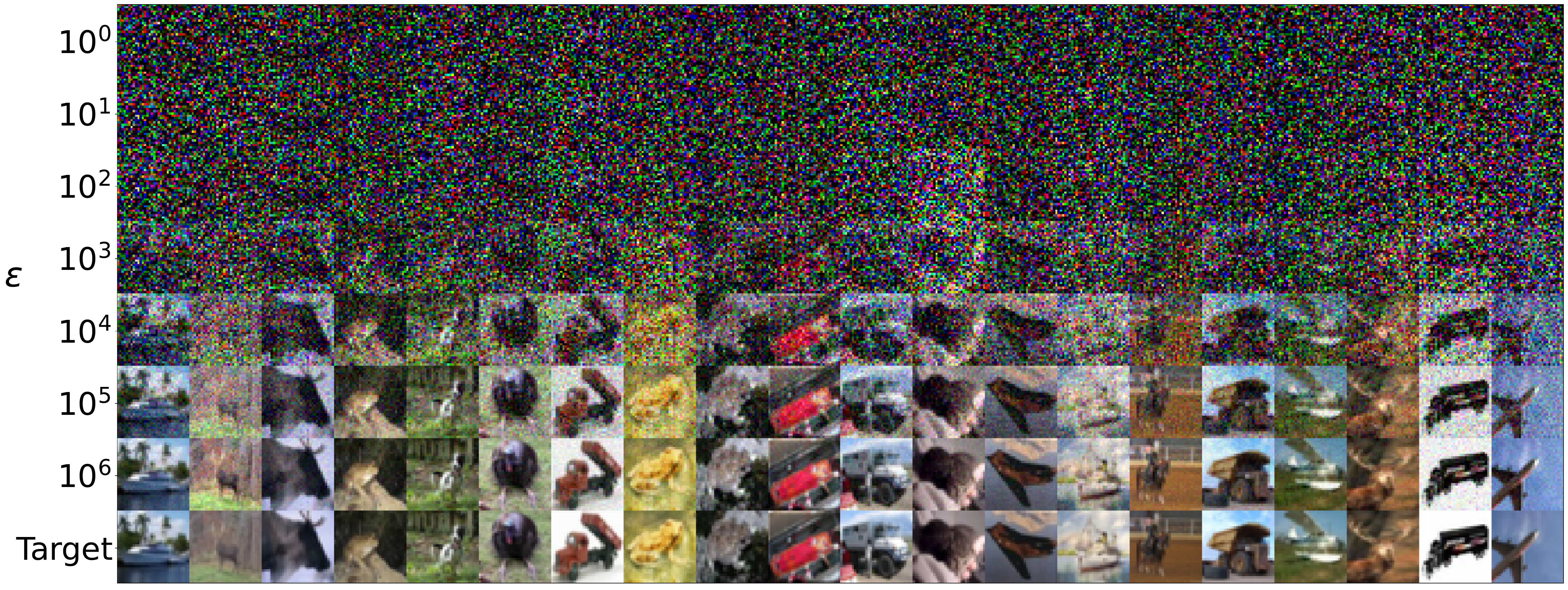}
    \caption{Examples of reconstructions on CIFAR-10.}
    \label{fig: experiment_7_cifar_viz}
\end{subfigure}%
\caption{We give qualitative examples of reconstructions in \Cref{fig: experiment_7_mnist_viz} and \Cref{fig: experiment_7_cifar_viz} for the gradient-based reconstruction attack described in \cref{sec: background}.}
\label{fig: experiment_7_viz}
\end{figure*}

\section{Does the model size make a difference to the prior-aware attack?}

Our results on MNIST and CIFAR-10 suggest that the model size does not impact the tightness of our reconstruction attack (lower bound on probability of success); the MLP model used for MNIST has 7,960 trainable parameters, while the WRN-28-10 model used for CIFAR-10 has 36.5M.
We systematically evaluate the impact of the model size on our prior-aware attack by increasing the size of the MLP hidden layer by factors of ten, creating models with 7,960, 79,600, and 796,000 parameters.
Results are given in \Cref{fig: experiment_4}, where we observe almost no difference in terms of attack success between the different model sizes.

\begin{figure}[t]
\captionsetup{width=1.0\columnwidth, justification=raggedright}
  \centering
\begin{subfigure}{0.33\columnwidth}
\centering
    \includegraphics[width=\columnwidth]{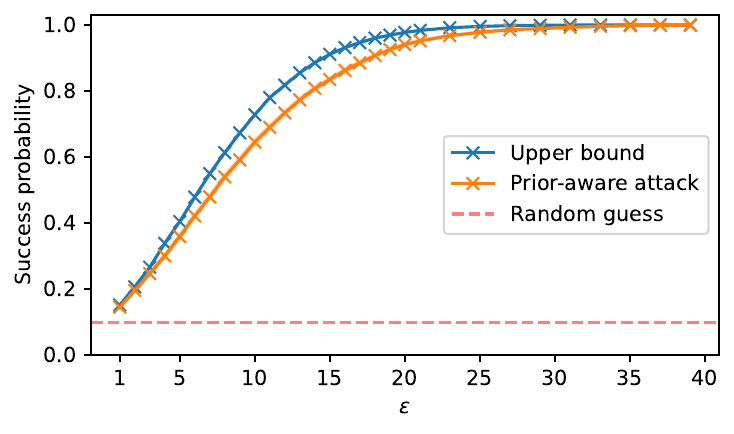}
        \caption{Width: 10.}
\end{subfigure}%
\begin{subfigure}{0.33\columnwidth}
\centering
    \includegraphics[width=\columnwidth]{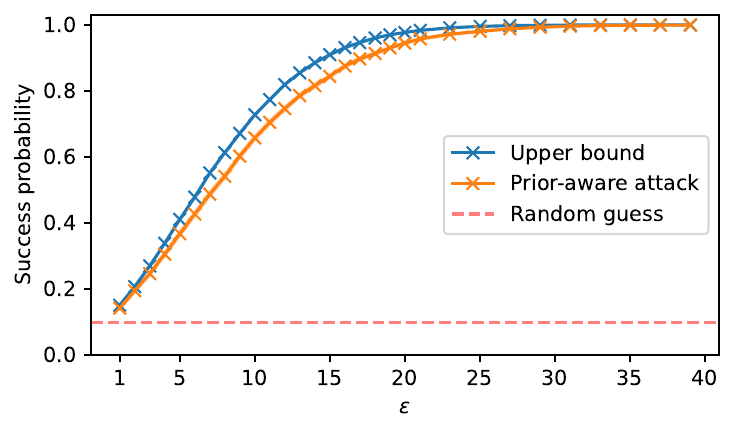}
        \caption{Width: 100.}
\end{subfigure}
\begin{subfigure}{0.33\columnwidth}
\centering
    \includegraphics[width=\columnwidth]{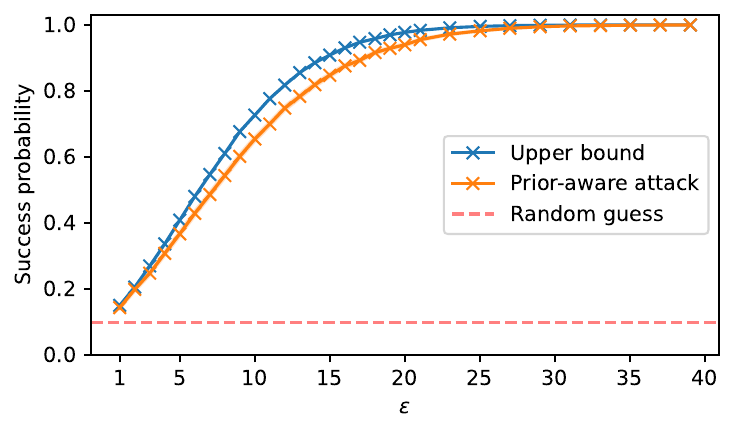}
        \caption{Width: 1000.}
\end{subfigure}%
\caption{Comparison of model sizes on reconstruction by varying the hidden layer width in a two layer MLP.}
\label{fig: experiment_4}
\end{figure}

\section{Comparing the gradient-based attack with the prior-aware attack}\label{sec:grad_attack_comparison}

Our experiments have mainly been concerned with measuring how DP affects an adversary's ability to infer which point was included in training, given that they have access to all possible points that could have been included, in the form of a discrete prior.
This experimental set-up departs from \Cref{fig: experiment_7}, where we assumed the adversary does not have access to a prior set, and so cannot run the prior-aware attack as described in \Cref{alg: prior_aware_attack}.
Following on from results in \Cref{ssec:compare_rero}, we transform these gradient-based attack experimental findings into a probability of successful reconstruction by running a post-processing conversion, allowing us to measure how the assumption of adversarial access to the discrete prior affects reconstruction success.
We run the post-processing conversion in the following way: Given a target sample $z^*$ and a reconstruction $\hat{z}$ found through optimizing the gradient based loss in \cref{eq:grad_opt_loss}, we construct a prior consisting of $z^*$ and $n-1$ randomly selected points from the MNIST dataset, where $n=10$.
We then measure the $L_2$ distance between $\hat{z}$ and every point in this constructed prior, and assign reconstruction a success if the smallest distance is with respect to $z^*$.
For each target $z^*$, we repeat this procedure 1,000 times, with different random selections of size $n-1$, and overall report the average reconstruction success over 1,000 different targets.

This allows us to compare the gradient-based attack (which is prior ``unaware'') directly to our prior-aware attack.
Results are shown in \Cref{fig: compare_grad_attack_against_prior_aware_main}, where we vary the size of the prior between 2, 8, and 128.
In all cases, we see an order of magnitude difference between the gradient-based and prior-aware attack in terms of reconstruction success.
This suggests that if we assume the adversary does not have prior knowledge of the possible set of target points, the minimum value of $\epsilon$ necessary to protect against reconstruction attacks increases.

\begin{figure*}[t]
\captionsetup{width=1.0\textwidth, justification=centering}
  \centering
\begin{subfigure}{0.33\textwidth}
\centering
    \includegraphics[width=\textwidth]{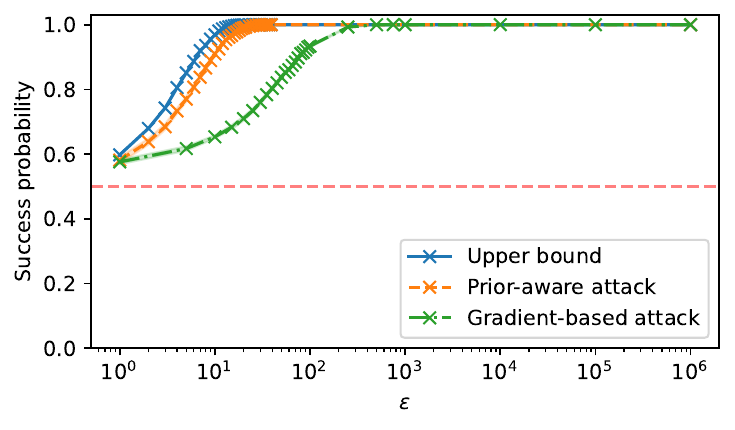}
        \caption{Prior size is 2.}
\end{subfigure}%
\begin{subfigure}{0.33\textwidth}
\centering
    \includegraphics[width=\textwidth]{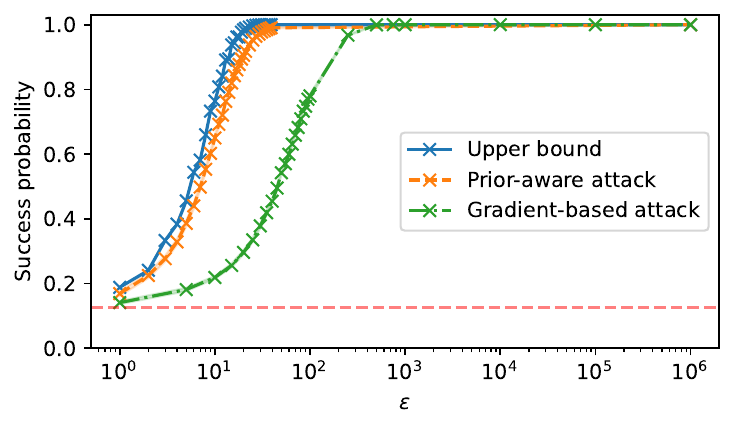}
        \caption{Prior size is 8.}
\end{subfigure}%
\begin{subfigure}{0.33\textwidth}
\centering
    \includegraphics[width=\textwidth]{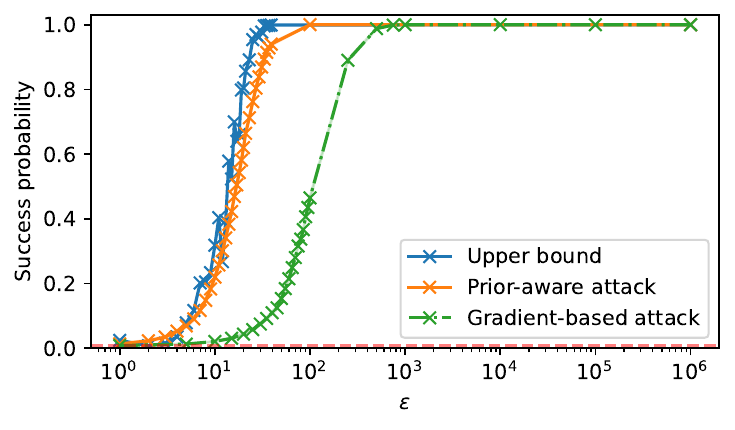}
        \caption{Prior size is 128.}
\end{subfigure}%
\caption{Comparison of prior-aware and gradient-based attack for different prior sizes.}
\label{fig: compare_grad_attack_against_prior_aware_main}
\end{figure*}

\section{Effects of the threat model and prior distribution on reconstruction}\label{app:threat_model_exp_app}

The ability to reconstruct a training data point will naturally depend on the threat model in which the security game is instantiated.
So far, we have limited our investigation to align with the standard adversary assumptions in the DP threat model.
We have also limited ourselves to a setting where the prior is sampled from the same base distribution as $\Dfixed$.
These choices will change the performance of our attack, which is what we measure next.

\paragraph{Prior type.}

We measure how the choice of prior affects reconstruction in \cref{fig:experiment_1_main}.
We train models when the prior is from the same distribution as the rest of the training set (MNIST), and when the prior is sampled random noise.
Note, because the target point $z^*$ is included in the prior, this means we measure how reconstruction success changes when we change the distribution the target was sampled from.
One may expect that the choice of prior to make a difference to reconstruction success if the attack relies on distinguishability between $\Dfixed$ and $z^*$ with respect to some function operating on points and model parameters (e.g. the difference in loss between points in $\Dfixed$ and $z^*$).
However, we see that there is little difference between the two; both are close to the upper bound.

On reflection, this is expected as our objective is simply the sum of samples from a Gaussian, and so the choice of prior may impact our probability of correct inference if this choice affects the probability that a point will be clipped, or if points in the prior have correlated gradients. 
We explore how different values of clipping, $C$, can change reconstruction success probability in \Cref{app:dpsgd_hyp_exps_app}.

\begin{figure}[t]
\captionsetup{width=1.\columnwidth, justification=centering}
    \centering
    \includegraphics[width=0.75\textwidth]{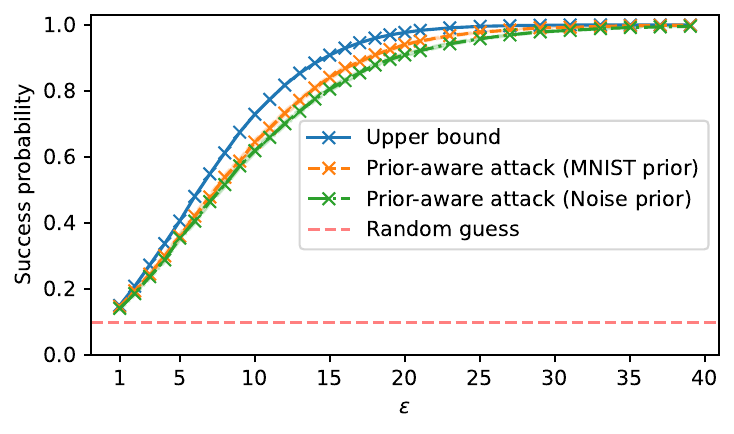}
    \caption{Comparison of how the choice of prior, $\pi$, affects reconstruction success. The prior is selected from a set of examples sampled from MNIST or uniform noise (that has the same intra-sample distance statistics as the MNIST prior).}
    \label{fig:experiment_1_main}
\end{figure}

\paragraph{Knowledge of batch gradients.}

The DP threat model assumes the adversary has knowledge of the gradients of all samples other than the target $z^*$.
Here, we measure how important this assumption is to our attack.
We compare the prior-aware attack (which maximizes $\sum_{t=1}^{T} \langle \clip_C(\nabla_{\theta_t} \ell(z_i)), \bar{g}_t \rangle$) against the attack that selects the $z_i$ maximizing $\sum_{t=1}^{T} \langle \clip_C(\nabla_{\theta_t} \ell(z_i)), g_t \rangle$, where the adversary does not subtract the known gradients from the objective.

In \cref{fig: experiment_2_main}, we compare, in a full-batch setting, when $|\Dfixed|$ is small (set to 4), and see the attack does perform worse when we do not deduct known gradients. 
However, the effect is more pronounced as $|\Dfixed|$ becomes larger, the attack completely fails when setting it to 128.
This is somewhat expected, as with a larger number of samples in a batch it is highly likely there are gradients correlated with the $z^*$ target gradient, masking out its individual contribution and introducing noise into the attack objective.

\begin{figure}[t]
\captionsetup{width=1.\columnwidth, justification=centering}
\centering
    \includegraphics[width=0.75\textwidth]{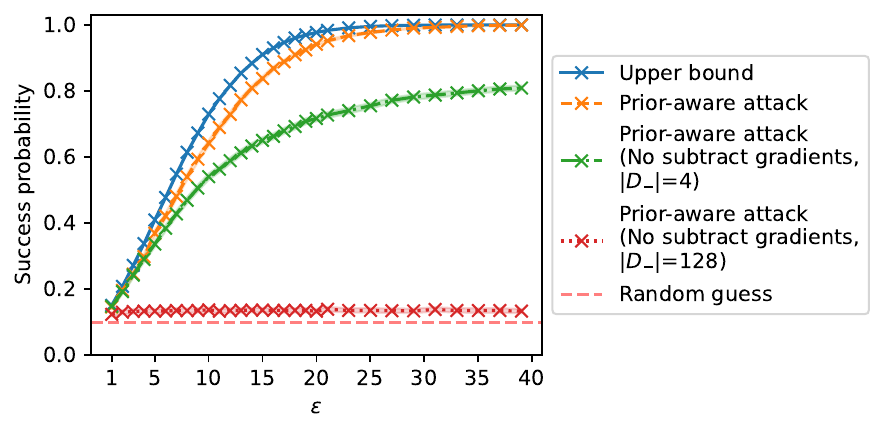}
\caption{In line with the DP threat model, our attack in \Cref{alg: prior_aware_attack} assumes the adversary can subtract known gradients from the privatized gradient. We measure what effect removing this assumption has on reconstruction success probability. When the size of the training set is small, removing this assumption has a minor effect, while reconstruction success drops to random with a larger training set size.}
\label{fig: experiment_2_main}
\end{figure}

\section{Improved prior-aware attack algorithm}\label{app:subsampling_alg}

As explained in \Cref{sec:effect_of_hyps}, the prior-aware attack in \Cref{alg: prior_aware_attack} does not account for the variance introduced into the attack objective in mini-batch DP-SGD, and so we design a more efficient attack specifically for the mini-batch setting.
We give the pseudo-code for this improved prior-aware attack in \Cref{alg: improved_prior_aware_attack}.

\begin{algorithm}[H]
\caption{Improved prior-aware attack}\label{alg: improved_prior_aware_attack}
\begin{algorithmic}
	\State \textbf{Input:} Discrete prior $\pi=\{z_1, \ldots, z_n\}$, Model parameters $\{\theta_1, \theta_1, \ldots, \theta_T\}$, Privatized gradients (with known gradients subtracted) $\{\bar{g}_1, \ldots, \bar{g}_{T}\}$, sampling probability $q$, function that takes the top $qT$ values from a set of observed gradients $top_{qT}$\\
	Observations: $\mathcal{O} \gets \{\}$
	\State \textbf{Output:} Reconstruction guess $\hat{z}\in\pi$
	\For{$i \in [1, 2, \ldots, n]$}
	\State $\mathcal{R} \gets \{\}$
    \For{$t \in [1, 2, \ldots, T]$}
    \State $\mathcal{R}[t] \gets \langle \clip_C(\nabla_{\theta_t} \ell(\theta_t, z_i)), \bar{g}_t \rangle$
    \EndFor
    \State $\mathcal{R} \gets top_{qT}( \mathcal{R} )$ 
    \State $\mathcal{O}[i] \gets sum( \mathcal{R} )$
    \EndFor 
    \State $\hat{i} \gets \argmax \mathcal{O}$ \\
	\Return $\hat{z} \gets \pi[\hat{i}]$ 
\end{algorithmic}
\end{algorithm}

\section{Alternative variant of the prior-aware attack}\label{sec:alt_prior_attack}

Here, we state an alternative attack that uses the log-likelihood to find out which point in the prior set is used for training. Assume we have $T$ steps with clipping threshold $C=1$, noise $\sigma$, and the sampling rate is $q$.

Let $\bar{g}_1, \dots, \bar{g}_T$ be the observed gradients minus the gradient of the examples that are known to be in the batch and let $l_1,\dots,l_T$ be the $\ell_2$ norms of these gradients.

For each example $z$ in the prior set let $g^z_1,\dots, g^z_T$ be the clipped gradient of the example on the intermediate model. Also let $l^z_1,\dots, l^z_T$ be the $\ell_2$ norms of $(\bar{g}_1-g_1^z), \dots, (\bar{g}_T-g_T^z)$.

Now we describe the optimal attack based on $l^z_i$. For each example $z$, calculate the following: $s_z= \sum_{i \in [T]} \ln(1-q + q e^{\frac{-(l^z_i)^2 + l_i^2}{2\sigma^2}})$. It is easy to observe that this is the log probability of outputting the steps conditioned on $z$ being used in the training set. 
Then since the prior is uniform over the prior set, we can choose the $z$ with maximum $s_z$ and report that as the example in the batch.

In fact, this attack could be extended to the non-uniform prior by choosing the example that maximizes $s_z\cdot p_z$, where $p_z$ is the original probability of $z$.

\section{Comparison with \citet{guo2022analyzing}}\label{sec:compare_guo_multi_mia}

\begin{figure}[H]
\captionsetup{width=1.0\textwidth, justification=centering}
  \centering
    \includegraphics[width=0.75\textwidth]{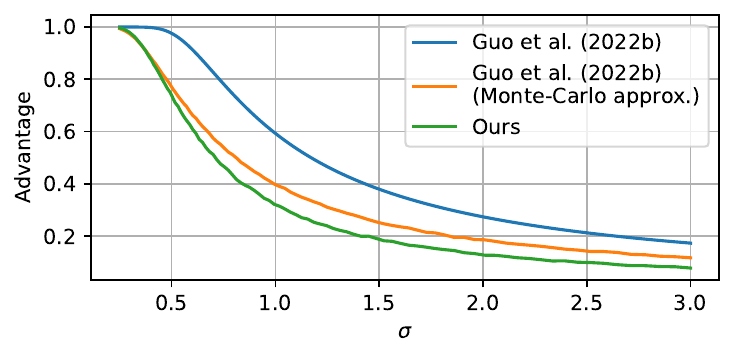}
\caption{Comparison of our upper bound on advantage (\cref{eq:advantage}) with \citet{guo2022analyzing} as function of $\sigma$ for a uniform prior of size ten. We use a single step of DP-SGD with no mini-batch subsampling, and use $100,000$ samples for Monte-Carlo approximation.}
\label{fig: comparison_advantage}
\end{figure}

\begin{table}[H]
\centering
\caption{Comparison of our upper bound on advantage (\cref{eq:advantage}) with \citet{guo2022analyzing} and the \citet{guo2022analyzing} Monte-Carlo approximation (abbreviated to MC) as function of $\sigma$ for a uniform prior size of ten and one hundred.}
\label{tab: comparison_guo}
\resizebox{0.75\textwidth}{!}{%
\begin{tabular}{llllllll}
\toprule
Prior size           & Method                                                          & \multicolumn{6}{c}{Advantage upper bound}     \\
\cmidrule{1-8}
\multirow{2}{*}{}    & \multirow{2}{*}{}                                               & \multicolumn{6}{c}{$\sigma$}                  \\
\cmidrule{3-8}
                     &                                                                 & 0.5   & 1     & 1.5   & 2     & 2.5   & 3     \\
\cmidrule{1-8}
\multirow{3}{*}{10}  & \citet{guo2022analyzing}                       & 0.976 & 0.593 & 0.380 & 0.274 & 0.213 & 0.174 \\
                     & \citet{guo2022analyzing} (MC) & 0.771 & 0.397 & 0.257 & 0.184 & 0.144 & 0.118 \\
                     & Ours                                                            & \textbf{0.737} & \textbf{0.322} & \textbf{0.189} & \textbf{0.128} & \textbf{0.099} & \textbf{0.080} \\
\cmidrule{1-8}
\multirow{3}{*}{100} & \citet{guo2022analyzing}                       & 0.861 & 0.346 & 0.195 & 0.131 & 0.097 & 0.076 \\
                     & \citet{guo2022analyzing} (MC) & 0.549 & 0.210  & 0.120  & 0.081 & 0.062 & 0.049 \\
                     & Ours                                                            & \textbf{0.362} & \textbf{0.077} & \textbf{0.035} & \textbf{0.024} & \textbf{0.018} & \textbf{0.012} \\
\bottomrule
\end{tabular}
}
\end{table}

Recently, \citet{guo2022analyzing} have analyzed reconstruction of discrete training data.
They note that DP bounds the mutual information shared between training data and learned parameters, and use Fano's inequality to convert this into a bound on reconstruction success.
In particular, they define the advantage of the adversary as 

\begin{equation}
    \label{eq:advantage}
    \texttt{Adv} := \frac{p_{\text{adversary success}} - p_{\pi}^{\text{max}}}{1 - p_{\pi}^{\text{max}}} \in [0,1].
\end{equation}

where $p_{\pi}^{\text{max}}$ is the maximum sampling probability from the prior, $\pi$, and $p_{\text{adversary success}}$ is the probability that the adversary is successful at inferring which point in the prior was included in training.
They then bound the advantage by lower bounding the adversary's error $t := 1 - p_{\text{adversary success}}$ and by appealing to Fano's inequality they show this can be done by finding the smallest $t \in [0,1]$ satisfying 
\begin{equation}
    \label{eq:fano_problem}
    \begin{split}
    f(t) :=& H(\pi) - I(\pi;w) + t \log t + (1-t) \log(1-t) \\ &- t \log(|\pi|-1)\leq0,
    \end{split}
\end{equation}
where $w$ is output of the private mechanism, $H(\pi)$ is the entropy of the prior, and $I(\pi;w)$ is the mutual information between the prior and output of the private mechanism.
For an $(\alpha, \epsilon)$-RDP mechanism,  $I(\pi;w)\leq\epsilon$, and so $I(\pi;w)$ can be replaced by $\epsilon$ in \Cref{eq:fano_problem}.
However, \citet{guo2022analyzing} show that for the Gaussian mechanism, this can improved upon either by using a Monte-Carlo approximation of $I(\pi;w)$ --- this involves approximating the KL divergence between a Gaussian and a Gaussian mixture --- or by showing that $I(\pi;w)\leq -\sum_{i=1}^{|\pi|} p^i_{\pi} \log\bigg(p^i_{\pi} +(1-p^i_{\pi}) \exp \left( \frac{-\Delta^2}{2\sigma^2} \right)\bigg)$, where $\Delta$ is the sensitivity of the mechanism, and $p^i_{\pi}$ is the probability of selecting the $i$th element from the prior.
We use a uniform prior in all our experiments and so $H(\pi) = -\log(\frac{1}{|\pi|})$ and $p^i_{\pi}=p_{\pi}^{\text{max}}=\frac{1}{|\pi|}$.

We convert our bound on success probability to advantage and compare with the \citet{guo2022analyzing} upper bound (and its Monte-Carlo approximation) in \Cref{fig: comparison_advantage} and \Cref{tab: comparison_guo}, and note our bound is tighter.

\section{Experiments with \emph{very} small priors}\label{app:large_prior_exp}

Our experiments in \Cref{sec:bounding_reconstruction} and \Cref{sec:effect_of_hyps} were conducted with an adversary who has side information about the target point.
Here, we reduce the amount of background knowledge the adversary has about the target, and measure how this affects the reconstruction upper bound and attack success.

We do this in the following set-up:
Given a target $z$, we initialize our reconstruction from uniform noise and optimize with the gradient-based reconstruction attack introduced in \Cref{sec: background} to produce $\hat{z}$.
We mark $\hat{z}$ as a successful reconstruction of $z$ if $\frac{1}{d}\sum_{i=1}^d \indicator[|z[i]-\hat{z}[i]|<\delta]\geq\rho$, where $\rho\in[0,1]$, $d$ is the data dimensionality, and we set $\delta=\frac{32}{255}$ in our experiments.
If $\rho=1$ this means we mark the reconstruction as successful if $\lVert \hat{z} - z\rVert_{\infty} < \delta$, and for $\rho<1$, then at least a fraction $\rho$ values in $\hat{z}$ must be within an $\ell_{\infty}$ ball of radius $\delta$ from $z$. 
Under the assumption the adversary has no background knowledge of the target point, with $\delta=\frac{32}{255}$ and a uniform prior, the prior probability of reconstruction is given by $(\nicefrac{2\times32}{256})^{d\rho}$ --- if $\rho=1$, for MNIST and CIFAR-10, this means the prior probability of a successful reconstruction is $9.66\times 10^{-473}$ and $2.96\times 10^{-1850}$, respectively.

We plot the reconstruction upper bound compared to the attack success for different values of $\rho$ in \Cref{fig: small_prior}.
We also visualize the quality of reconstructions for different values of $\rho$.
Even for $\rho=0.6$, where $40\%$ of the reconstruction pixels can take any value, and the remaining $60\%$ are within an absolute value of $\frac{32}{255}$ from the target, one can easily identify that the reconstructions look visually similar to the target.

\begin{figure}[H]
\captionsetup{width=1.0\columnwidth, justification=centering}
  \centering
\begin{subfigure}{0.5\columnwidth}
\centering
    \includegraphics[width=\columnwidth]{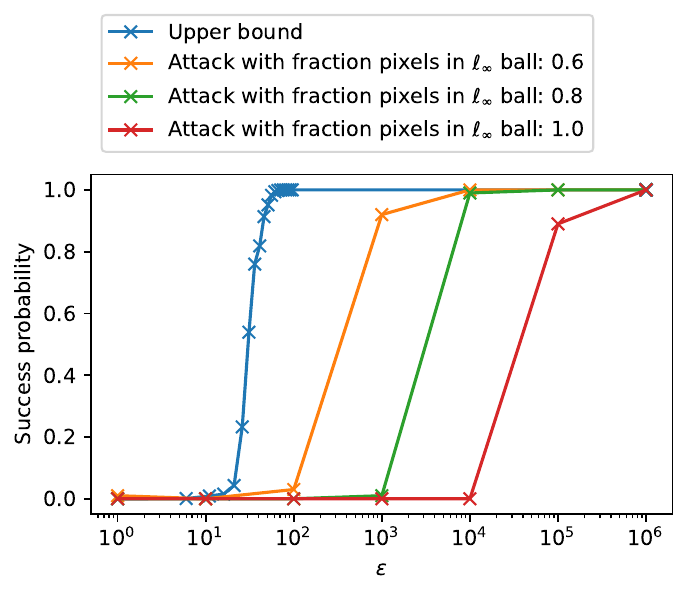}
        \caption{Comparison of reconstruction success under a \emph{very} small prior for MNIST, where we judge a reconstruction as successful if at least $\rho$ pixels are within an absolute distance of $\frac{32}{255}$ of the target.}
        \label{fig: mnist_linf}
\end{subfigure}%
\begin{subfigure}{0.5\columnwidth}
\centering
    \includegraphics[width=\columnwidth]{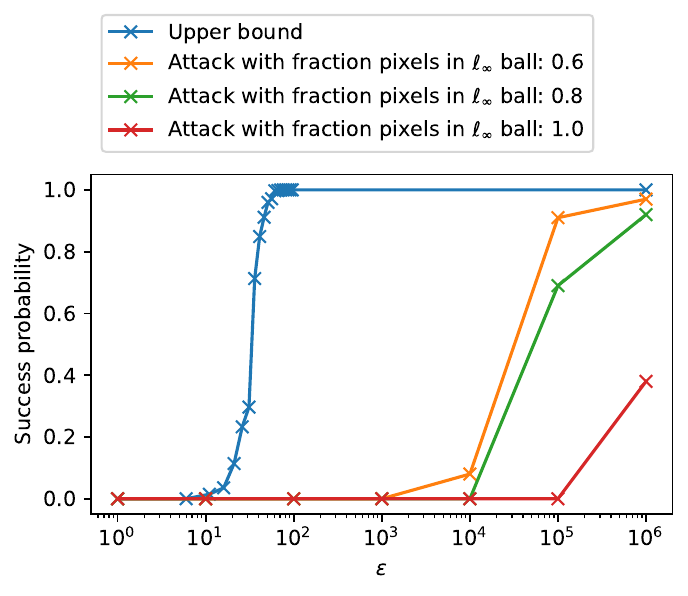}
       \caption{Comparison of reconstruction success under a \emph{very} small prior for CIFAR-10, where we judge a reconstruction as successful if at least $\rho$ pixels are within an absolute distance of $\frac{32}{255}$ of the target.}
       \label{fig: cifar10_linf}
\end{subfigure}

\begin{subfigure}{0.5\columnwidth}
\centering
    \includegraphics[width=\columnwidth]{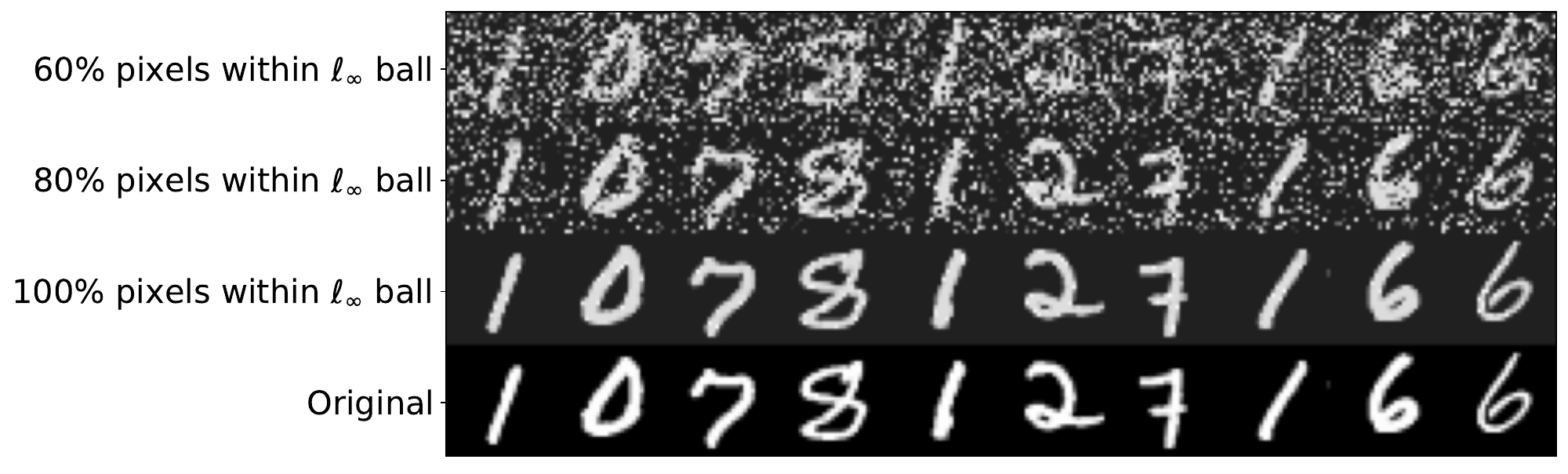}
        \caption{MNIST examples of reconstructions where at least $\rho$ pixels are within an absolute distance of $\frac{32}{255}$ of the target.}
\end{subfigure}%
\begin{subfigure}{0.5\columnwidth}
\centering
    \includegraphics[width=\columnwidth]{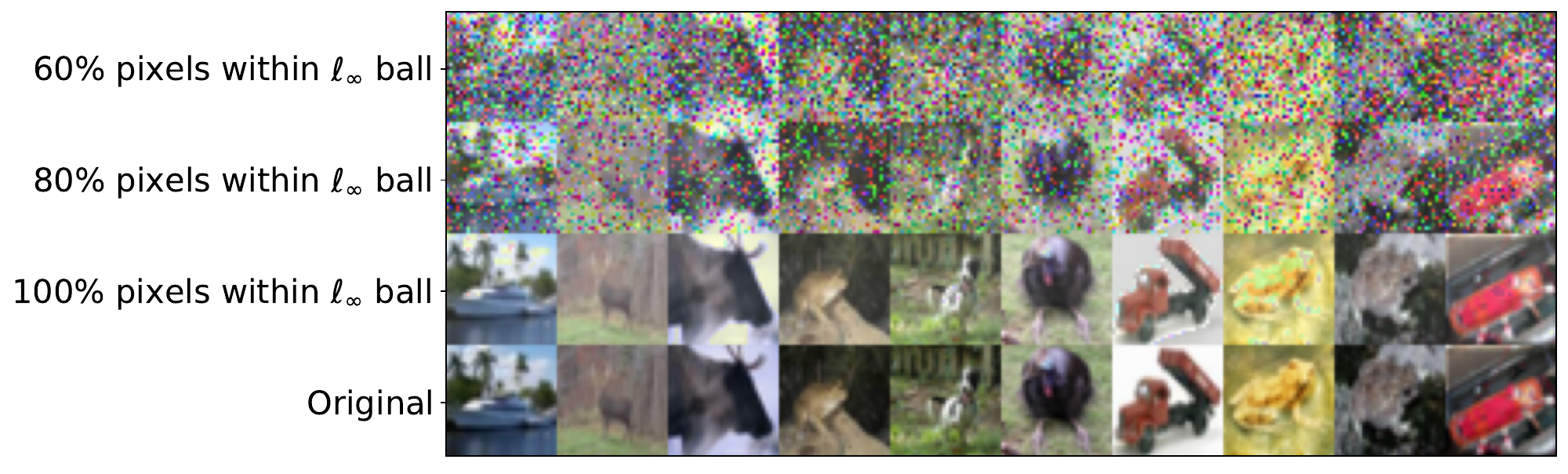}
        \caption{CIFAR-10 examples of reconstructions where at least $\rho$ pixels are within an absolute distance of $\frac{32}{255}$ of the target.}
\end{subfigure}

\caption{Comparison of reconstruction success under a \emph{very} small prior. The prior probability of success for MNIST and CIFAR-10 are $9.66\times 10^{-473}$ and $2.96\times 10^{-1850}$, respectively.}
\label{fig: small_prior}
\end{figure}

\section{Discussion on related work}
\label{sec:relatedwork}

Here, we give a more detailed discussion of relevant related work over what is surfaced in \Cref{sec: introduction} and \cref{sec: background}.

\paragraph{DP and reconstruction.}

By construction, differential privacy bounds the success of a membership inference attack, where the aim is to infer if a point $z$ was in or out of the training set.
While the connection between membership inference and DP is well understood, less is known about the relationship between training data reconstruction attacks and DP.
A number of recent works have begun to remedy this in the context of models trained with DP-SGD by studying the value of $\epsilon$ required to thwart training data reconstruction attacks~\citep{bhowmick2018protection, balle2022reconstructing, guo2022bounding, guo2022analyzing, stock2022defending}.
Of course, because differential privacy bounds membership inference, it will also bound ones ability to reconstruct training data; if one cannot determine if $z$ was used in training, they will not be able to reconstruct that point.
These works are interested in both formalizing training data reconstruction attacks, and quantifying the necessary $\epsilon$ required to bound its success.
Most of these works share a common finding -- the $\epsilon$ value needed for this bound is much larger than the value required to protect against membership inference attacks ($<10$ in practice). 
If all other parameters in  $\nicefrac{q\sqrt{T\log(\frac{1}{\delta})}}{\varepsilon}$ remain fixed, one can see that a larger value of $\epsilon$ reduces the scale of noise we add to gradients, which in turn results in models that achieve smaller generalization error than models trained with DP-SGD that protect against membership inference.

The claim that a protection against membership inference attacks also protects against training data reconstruction attacks glosses over many subtleties.
For example, if $z$ was not included in training it could still have a non-zero probability of reconstruction if samples that are close to $z$ were included in training.
\citet{balle2022reconstructing} take the approach of formalizing training reconstruction attacks in a Bayesian framework, where they compute a prior probability of reconstruction, and then find how much more information an adversary gains by observing the output of DP-SGD.

\citet{balle2022reconstructing} use an average-case definition of reconstruction over the output of a randomized mechanism.
In contrast, \citet{bhowmick2018protection} define a worst-case formalization, asking when should an adversary not be able to reconstruct a point of interest regardless of the output of the mechanism.
Unfortunately, such worst-case guarantees are not attainable under DP-relaxations like $(\epsilon, \delta)$-DP and RDP, because the privacy loss is not bounded; there is a small probability that the privacy loss will be high.

\citet{stock2022defending} focus on bounding reconstruction for language tasks.
They use the probability preservation guarantee from RDP to derive reconstruction bounds, showing that the length of a secret within a piece of text itself provides privacy. 
They translate this to show a smaller amount of DP noise is required to protect longer secrets.

While \citet{balle2022reconstructing} propose a Bayesian formalization for reconstruction error, \citet{guo2022bounding} propose a frequentist definition.
They show that if $M$ is $(2, \epsilon)$-RDP, then the reconstruction MSE is lower bounded by $\nicefrac{\sum_{i=1}^d\text{diam}_i(\Zset)^2}{4d(e^{\epsilon}-1)}$, where $\text{diam}_i(\Zset)$ is the diameter of the space $\Zset$ in the $i$th dimension.

\paragraph{Gradient inversion attacks.}

The early works of \citet{wang2019beyond} and \citet{zhu2019deep} showed that one can invert single image representation from gradients of a deep neural network. 
\citet{zhu2019deep} actually went beyond this and showed one can jointly reconstruct both the image and label representation.
The idea is that given a target point $z$, a loss function $\ell$, and an observed gradient (wrt to model parameters $\theta$) $g_z=\nabla_{\theta}\ell(\theta, z)$, to construct a $\hat{z}$ such that $\hat{z} = \argmin_{z'}\lVert g_{z'} - g_z\rVert$.
The expectation is that images that have similar gradients will be visually similar.
By optimizing the above objective with gradient descent, \citet{zhu2019deep} showed that one can construct visually accurate reconstruction on standard image benchmark datasets like CIFAR-10.

\citet{jeon2021gradient, yin2021see, jin2021cafe, huang2021evaluating, NEURIPS2020_c4ede56b} proposed a number of improvements over the reconstruction algorithm used in \citet{zhu2019deep}: they showed how to reconstruct multiple training points in batched gradient descent, how to optimize against batch normalization statistics, and incorporate priors into the optimization procedure, amongst other improvements.

The aforementioned attacks assumed an adversary has access to gradients through intermediate model updates.
\citet{balle2022reconstructing} instead investigate reconstruction attacks when adversary can only observe a model after it has finished training, and propose attacks against (parametric) ML models under this threat model.
However, the attack they construct is computationally demanding as it involves retraining thousands of models.
This computational bottleneck is also a factor in \citet{DBLP:journals/corr/abs-2206-07758}, who also investigate training data reconstruction attacks where the adversary has access only to final model parameters.

\section{More experiments on the effect of DP-SGD hyperparameters}\label{app:dpsgd_hyp_exps_app}

We extend on our investigation into the effect that DP-SGD hyperparameters have on reconstruction.
We begin by varying the clipping norm parameter, $C$, and measure the effect on reconstruction.
Following this, we replicate our results from \Cref{sec:effect_of_hyps} (the effect hyperparameters have on reconstruction at a fixed $\epsilon$) across different values of $\epsilon$ and prior sizes, $|\pi|$.

\subsection{Effect of clipping norm}

If we look again at our attack set-up in \Cref{alg: prior_aware_attack}, we see that in essence we are either summing a set of samples only from a Gaussian centred at zero or a Gaussian centred at $C^2$. 
If the gradient of the target point is not clipped, then this will reduce the sum of gradients when the target is included in a batch, as the Gaussian will be centred at a value smaller than $C^2$.
This will increase the probability that the objective is not maximized by the target point.

We demonstrate how this changes the reconstruction success probability by training a model for 100 steps with a clipping norm of 0.1 or 1, and measuring the average gradient norm of all samples over all steps.
Results are shown in \cref{fig: experiment_5b_main}.
We see at $C=0.1$, our attack is tight to the upper bound, and the average gradient norm is $0.1$ for all values of $\epsilon$; all individual gradients are clipped.
When $C=1$, the average gradient norm decreases from $0.9$ at $\epsilon=1$ to $0.5$ at $\epsilon=40$, and we see a larger gap between upper and lower bounds. 
The fact that some gradients may not be clipped is not taken into account by our theory used to compute upper bounds, and so we conjecture that the reduction is reconstruction success is a real effect rather than a weakness of our attack.

We note that these findings chime with work on individual privacy accounting~\citep{feldman2021individual, yu2022per, ligett2017accuracy, redberg2021privately}.
An individual sample's privacy loss is often much smaller than what is accounted for by DP bounds.
These works use the gradient norm of an individual sample to measure the true privacy loss, the claim is that if the gradient norm is smaller than the clipping norm, the amount of noise added is too large, as the DP accountant assumes all samples are clipped.
Our experiments support the claim that there is a disparity in privacy loss between samples whose gradients are and are not clipped.

\begin{figure}[t]
\captionsetup{width=1.0\columnwidth, justification=centering}
  \centering
\begin{subfigure}{0.5\columnwidth}
\centering
    \includegraphics[width=\columnwidth]{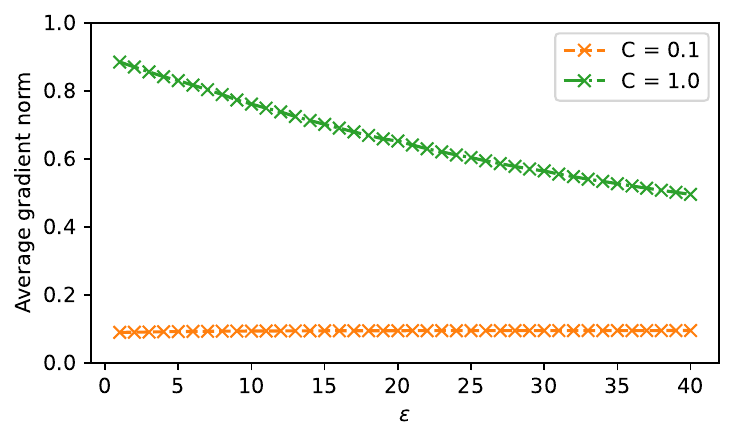}
        \caption{Average gradient norm (over all samples and steps) for different values of $\epsilon$ at $C=0.1$ and $C=1$.}
\end{subfigure}%
\begin{subfigure}{0.5\columnwidth}
\centering
    \includegraphics[width=\columnwidth]{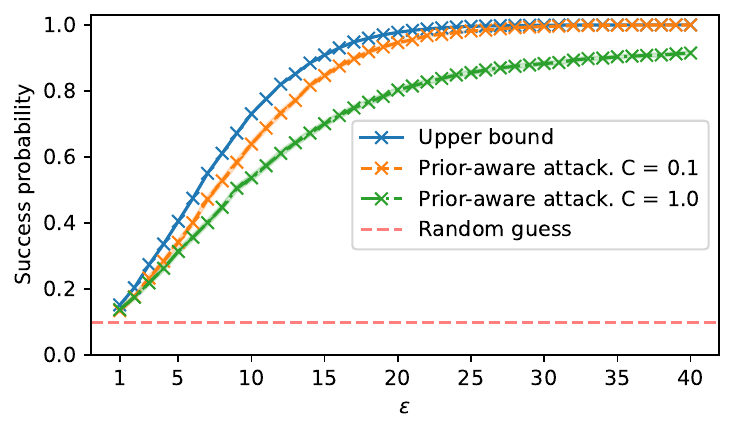}
        \caption{Reconstruction success probability for different values of $\epsilon$ at $C=0.1$ and $C=1$.}
\end{subfigure}%
\caption{Comparison of how reconstruction success is changes with the clipping norm, $C$. We see that if examples have a gradient norm smaller than $C$, and so are not clipped, reconstruction success probability becomes smaller.}
\label{fig: experiment_5b_main}
\end{figure}

\subsection{More results on the effect of DP-SGD hyperparameters at a fixed $\epsilon$}\label{app:fix_epsilon_exp}

In \Cref{sec:effect_of_hyps}, we demonstrated that the success of a reconstruction attack cannot be captured only by the $(\epsilon, \delta)$ guarantee, when $\epsilon=4$ and the size of the prior, $\pi$, is set to ten.
We now observe how these results change across different $\epsilon$ and $|\pi|$, where we again fix the number of updates to $T=100$, $C=1$, vary $q\in[0.01, 0.99]$, and adjust $\sigma$ accordingly.

Firstly, in \Cref{fig: compare_dp_hyps_upper_and_lower}, we measure the upper \emph{and} lower bound ((improved) prior-aware attack) on the probability of successful reconstruction across different $q$. 
In all settings, we observe smaller reconstruction success at smaller $q$, where the largest fluctuations in reconstruction success are for larger values of $\epsilon$.
We visualise this in another way by plotting $\sigma$ against $q$ and report the upper bound in \Cref{fig: compare_dp_hyps_upper}.
Note that the color ranges in \Cref{fig: compare_dp_hyps_upper} are independent across subfigures.

\begin{figure}[t]
\captionsetup{width=1.0\textwidth, justification=centering}
  \centering
\begin{subfigure}{0.25\textwidth}
\centering
    \includegraphics[width=\textwidth]{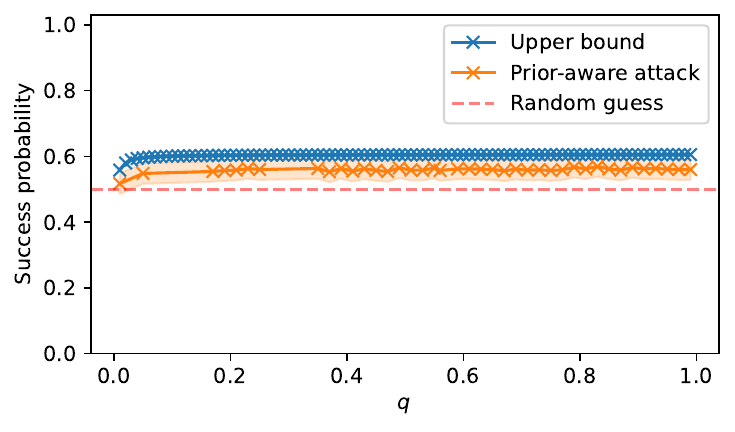}
        \caption{$\epsilon=1, |\pi|=2$.}
\end{subfigure}%
\begin{subfigure}{0.25\textwidth}
\centering
    \includegraphics[width=\textwidth]{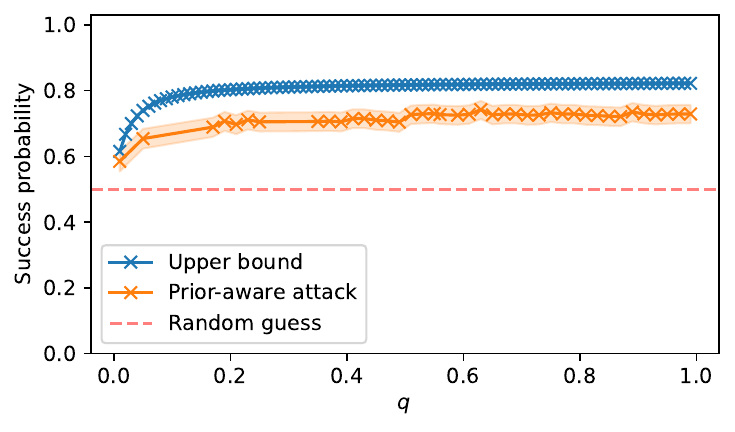}
        \caption{$\epsilon=4, |\pi|=2$.}
\end{subfigure}%
\begin{subfigure}{0.25\textwidth}
\centering
    \includegraphics[width=\textwidth]{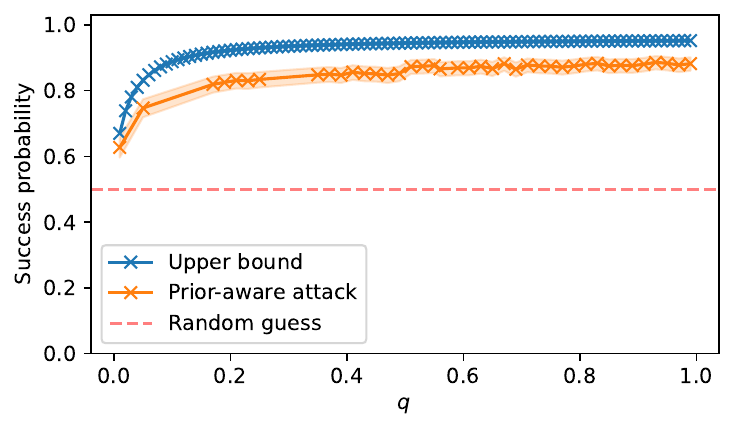}
        \caption{$\epsilon=8, |\pi|=2$.}
\end{subfigure}%
\begin{subfigure}{0.25\textwidth}
\centering
    \includegraphics[width=\textwidth]{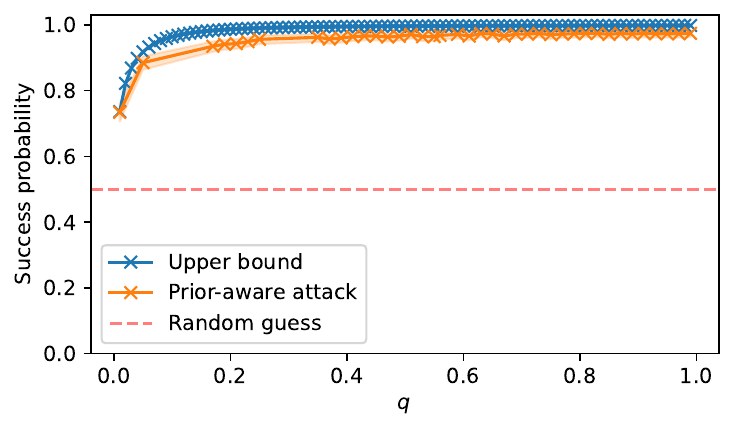}
        \caption{$\epsilon=16, |\pi|=2$.}
\end{subfigure}

\begin{subfigure}{0.25\textwidth}
\centering
    \includegraphics[width=\textwidth]{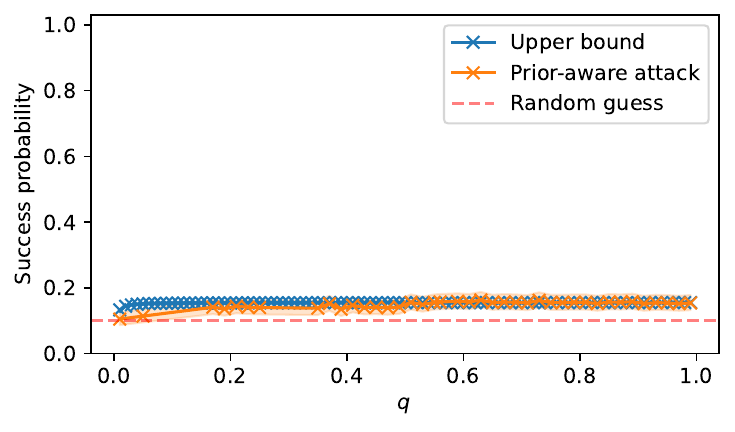}
        \caption{$\epsilon=1, |\pi|=10$.}
\end{subfigure}%
\begin{subfigure}{0.25\textwidth}
\centering
    \includegraphics[width=\textwidth]{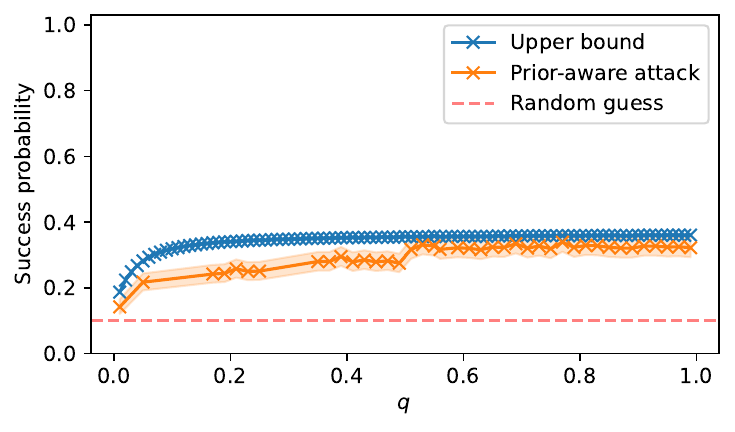}
        \caption{$\epsilon=4, |\pi|=10$.}
\end{subfigure}%
\begin{subfigure}{0.25\textwidth}
\centering
    \includegraphics[width=\textwidth]{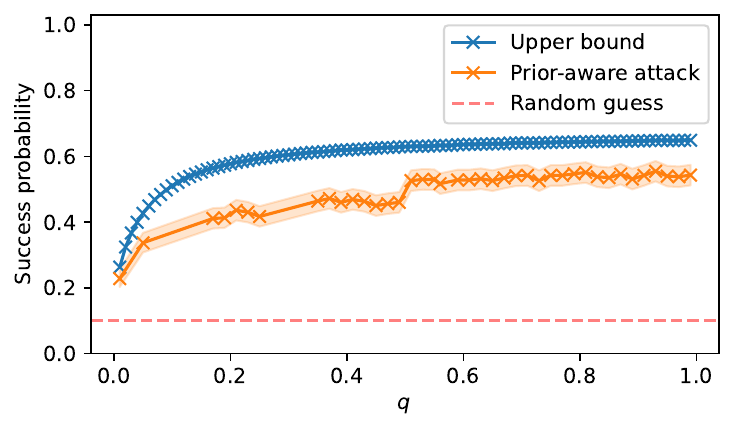}
        \caption{$\epsilon=8, |\pi|=10$.}
\end{subfigure}%
\begin{subfigure}{0.25\textwidth}
\centering
    \includegraphics[width=\textwidth]{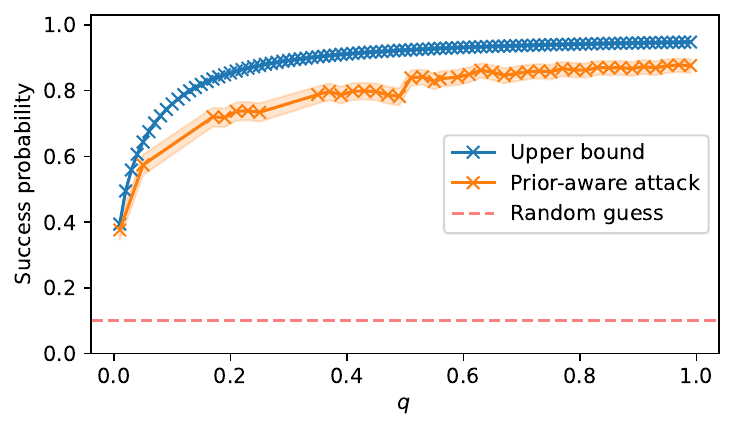}
        \caption{$\epsilon=16, |\pi|=10$.}
\end{subfigure}

\begin{subfigure}{0.25\textwidth}
\centering
    \includegraphics[width=\textwidth]{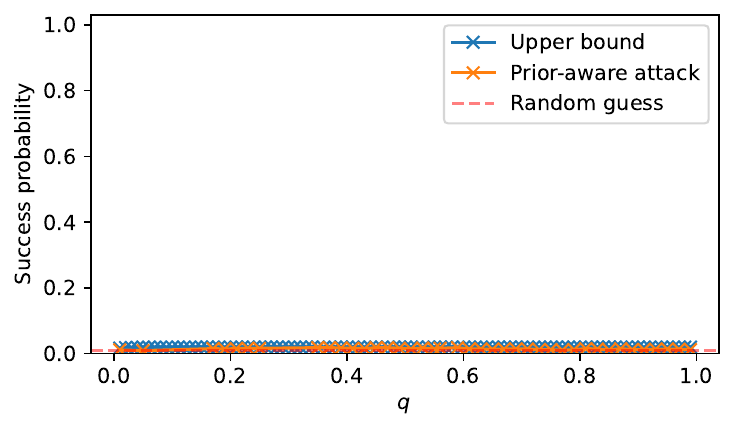}
        \caption{$\epsilon=1, |\pi|=100$.}
\end{subfigure}%
\begin{subfigure}{0.25\textwidth}
\centering
    \includegraphics[width=\textwidth]{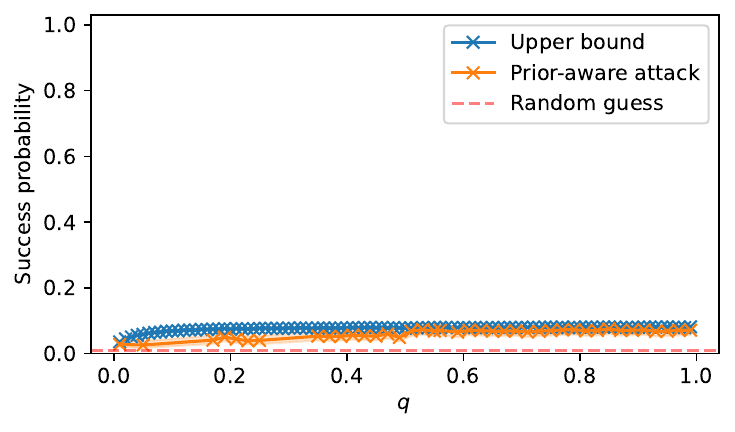}
       \caption{$\epsilon=4, |\pi|=100$.}
\end{subfigure}%
\begin{subfigure}{0.25\textwidth}
\centering
    \includegraphics[width=\textwidth]{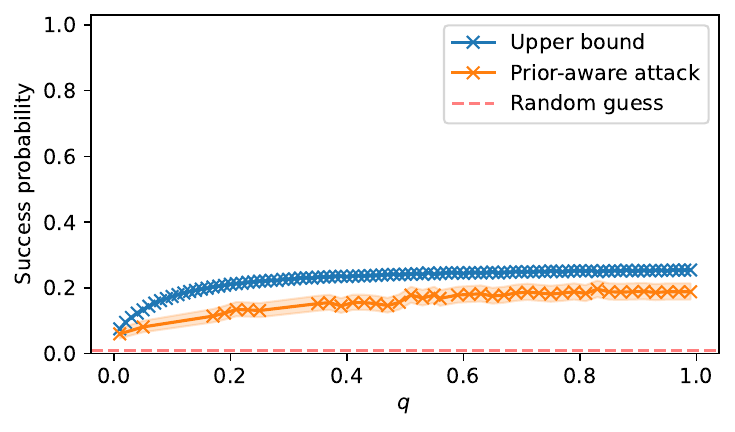}
        \caption{$\epsilon=8, |\pi|=100$.}
\end{subfigure}%
\begin{subfigure}{0.25\textwidth}
\centering
    \includegraphics[width=\textwidth]{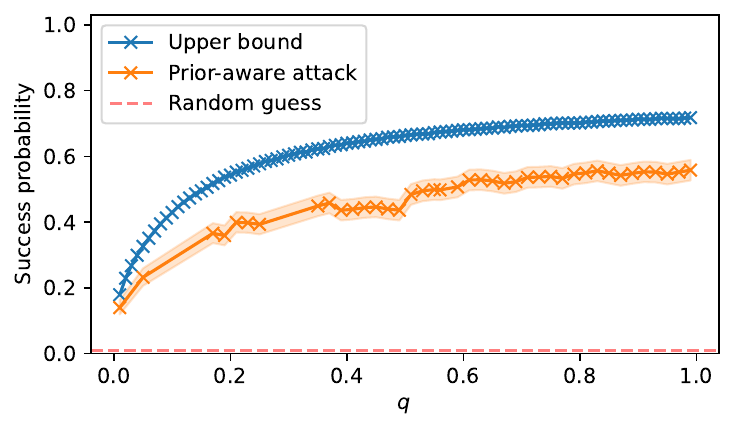}
        \caption{$\epsilon=16, |\pi|=100$.}
\end{subfigure}%

\caption{How the upper bound and (improved) prior-aware attack change as a function of $q$ at a fixed value of $\epsilon$ and prior size, $|\pi|$. The amount of privacy leaked through a reconstruction at a fixed value of $\epsilon$ can change with different $q$.}
\label{fig: compare_dp_hyps_upper_and_lower}
\end{figure}

\begin{figure*}[t]
  \centering

\begin{subfigure}{0.32\textwidth}
\centering
    \includegraphics[width=\textwidth]{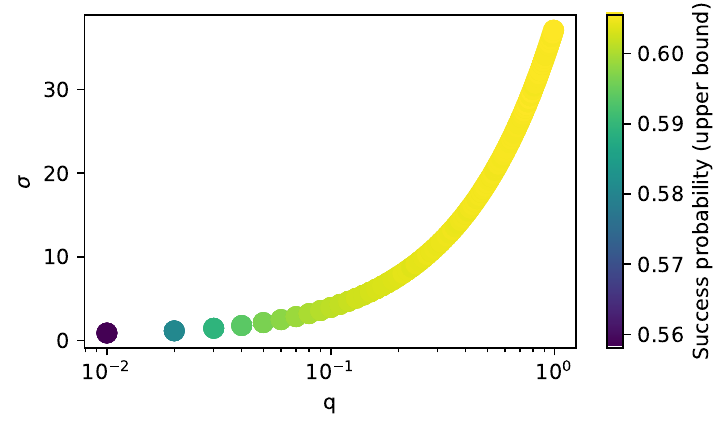}
        \caption{$\epsilon=1, |\pi|=2$.}
\end{subfigure}%
\begin{subfigure}{0.32\textwidth}
\centering
    \includegraphics[width=\textwidth]{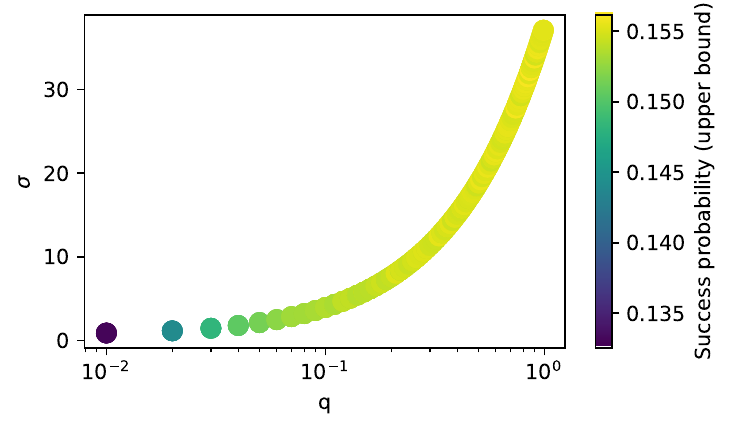}
        \caption{$\epsilon=1, |\pi|=10$.}
\end{subfigure}%
\begin{subfigure}{0.32\textwidth}
\centering
    \includegraphics[width=\textwidth]{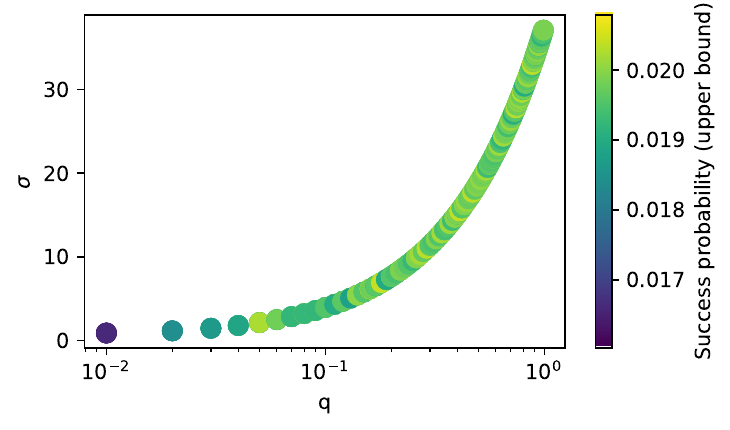}
        \caption{$\epsilon=1, |\pi|=100$.}
\end{subfigure}

\begin{subfigure}{0.32\textwidth}
\centering
    \includegraphics[width=\textwidth]{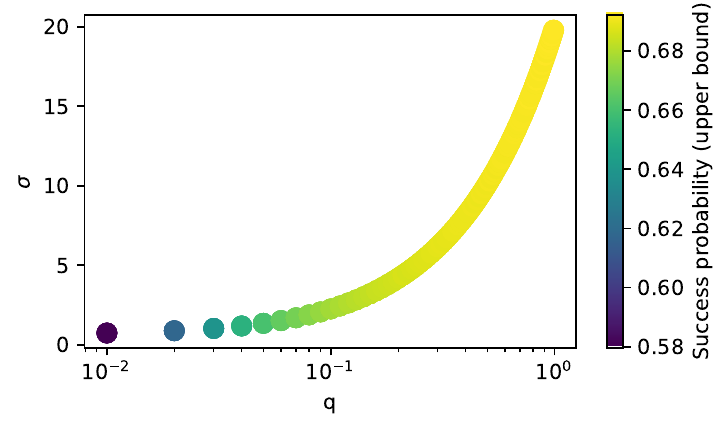}
        \caption{$\epsilon=2, |\pi|=2$.}
\end{subfigure}%
\begin{subfigure}{0.32\textwidth}
\centering
    \includegraphics[width=\textwidth]{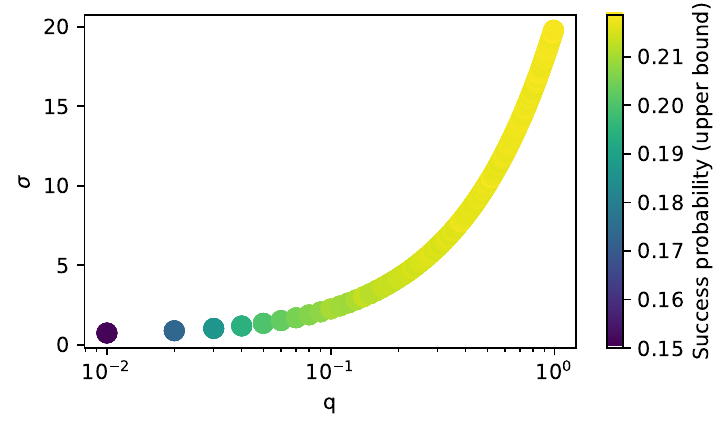}
        \caption{$\epsilon=2, |\pi|=10$.}
\end{subfigure}%
\begin{subfigure}{0.32\textwidth}
\centering
    \includegraphics[width=\textwidth]{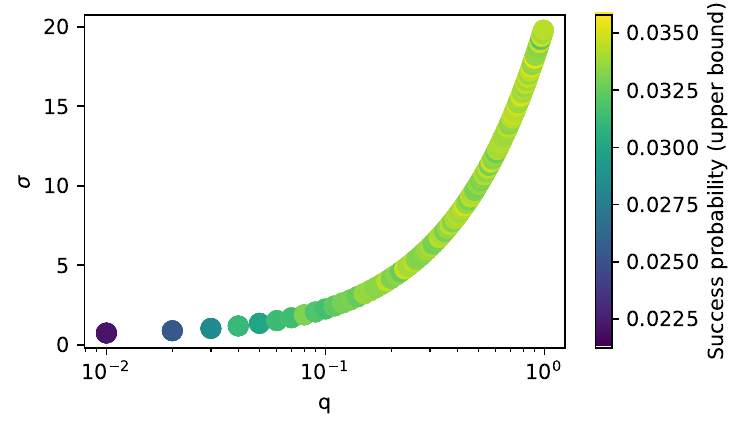}
        \caption{$\epsilon=2, |\pi|=100$.}
\end{subfigure}

\begin{subfigure}{0.32\textwidth}
\centering
    \includegraphics[width=\textwidth]{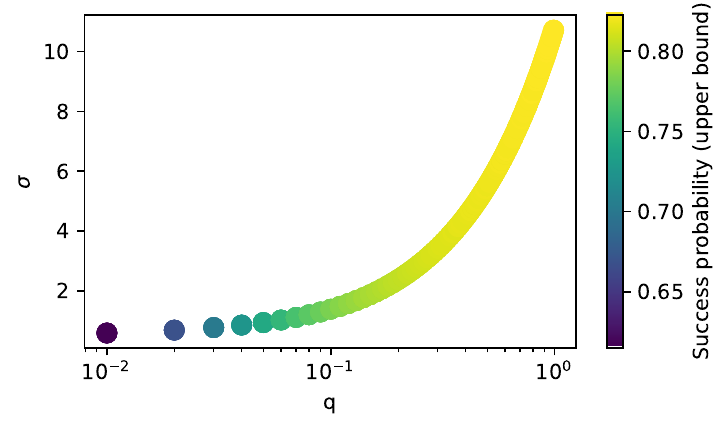}
        \caption{$\epsilon=4, |\pi|=2$.}
\end{subfigure}%
\begin{subfigure}{0.32\textwidth}
\centering
    \includegraphics[width=\textwidth]{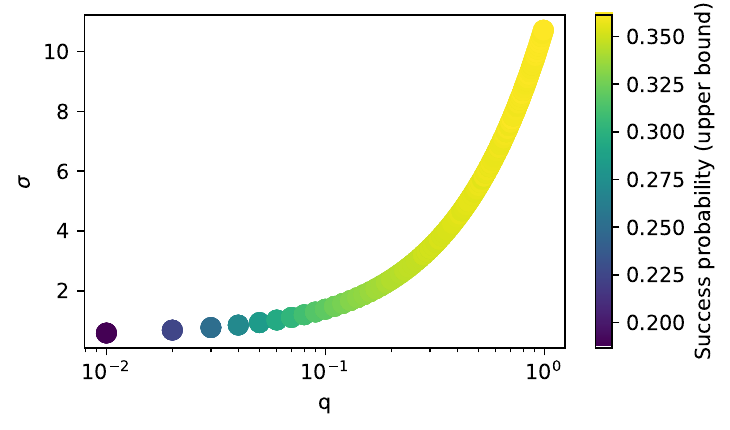}
        \caption{$\epsilon=4, |\pi|=10$.}
\end{subfigure}%
\begin{subfigure}{0.32\textwidth}
\centering
    \includegraphics[width=\textwidth]{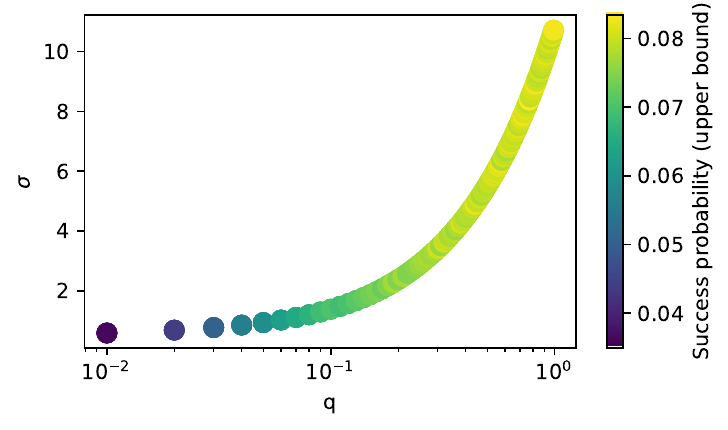}
        \caption{$\epsilon=4, |\pi|=100$.}
\end{subfigure}

\begin{subfigure}{0.32\textwidth}
\centering
    \includegraphics[width=\textwidth]{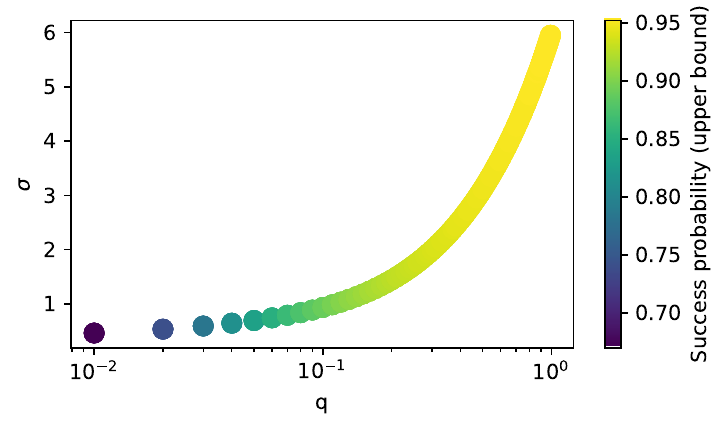}
        \caption{$\epsilon=8, |\pi|=2$.}
\end{subfigure}%
\begin{subfigure}{0.32\textwidth}
\centering
    \includegraphics[width=\textwidth]{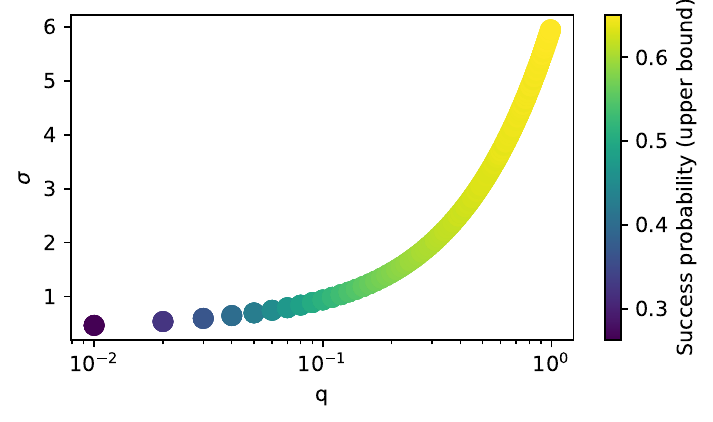}
        \caption{$\epsilon=8, |\pi|=10$.}
\end{subfigure}%
\begin{subfigure}{0.32\textwidth}
\centering
    \includegraphics[width=\textwidth]{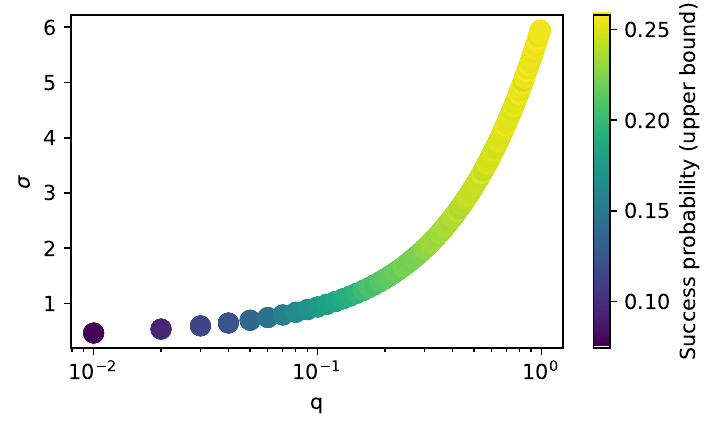}
        \caption{$\epsilon=8, |\pi|=100$.}
\end{subfigure}

\begin{subfigure}{0.32\textwidth}
\centering
    \includegraphics[width=\textwidth]{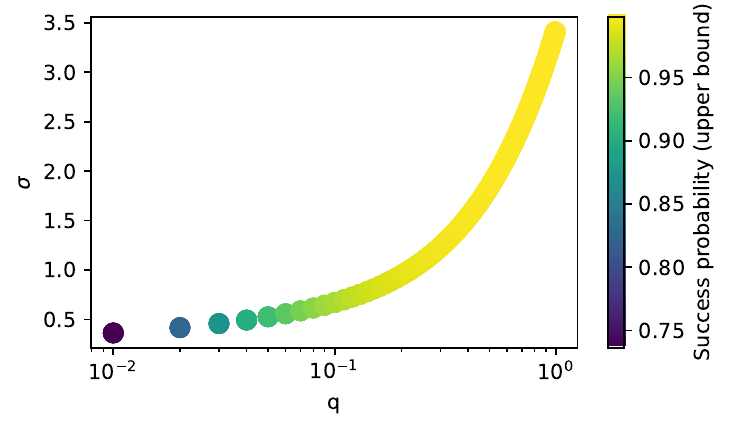}
        \caption{$\epsilon=16, |\pi|=2$.}
\end{subfigure}%
\begin{subfigure}{0.32\textwidth}
\centering
    \includegraphics[width=\textwidth]{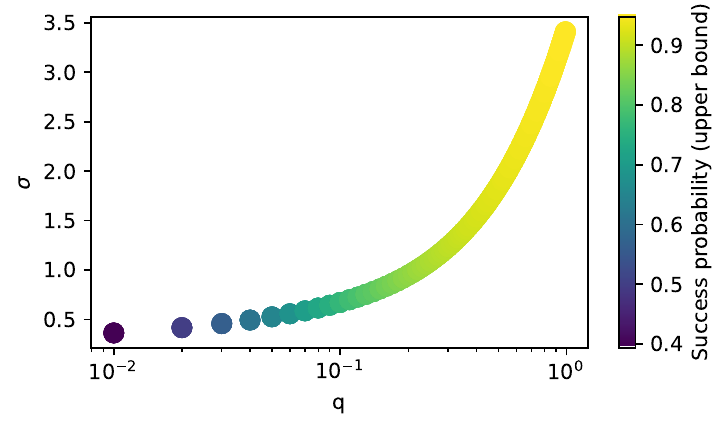}
        \caption{$\epsilon=16, |\pi|=10$.}
\end{subfigure}%
\begin{subfigure}{0.32\textwidth}
\centering
    \includegraphics[width=\textwidth]{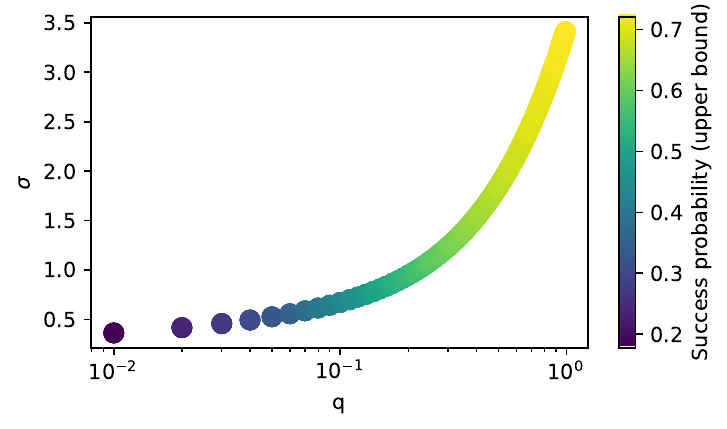}
        \caption{$\epsilon=16, |\pi|=100$.}
\end{subfigure}

\begin{subfigure}{0.32\textwidth}
\centering
    \includegraphics[width=\textwidth]{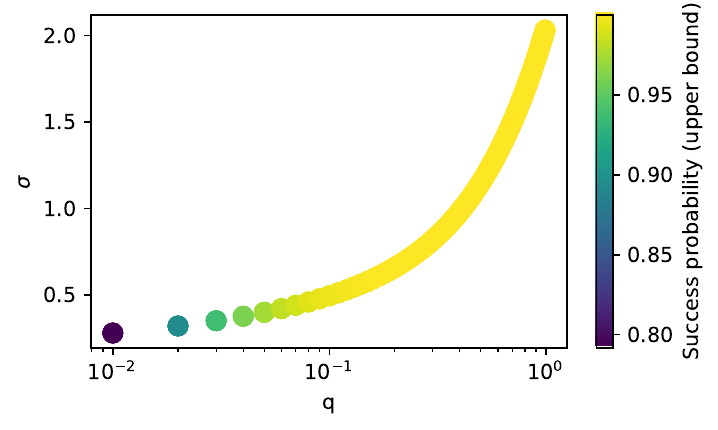}
        \caption{$\epsilon=32, |\pi|=2$.}
\end{subfigure}%
\begin{subfigure}{0.32\textwidth}
\centering
    \includegraphics[width=\textwidth]{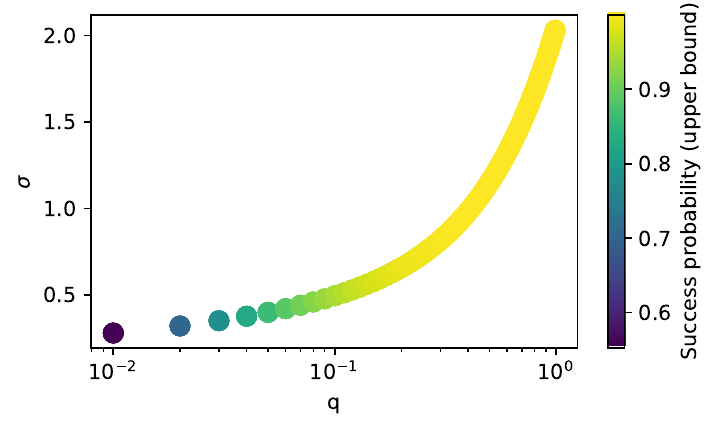}
        \caption{$\epsilon=32, |\pi|=10$.}
\end{subfigure}%
\begin{subfigure}{0.32\textwidth}
\centering
    \includegraphics[width=\textwidth]{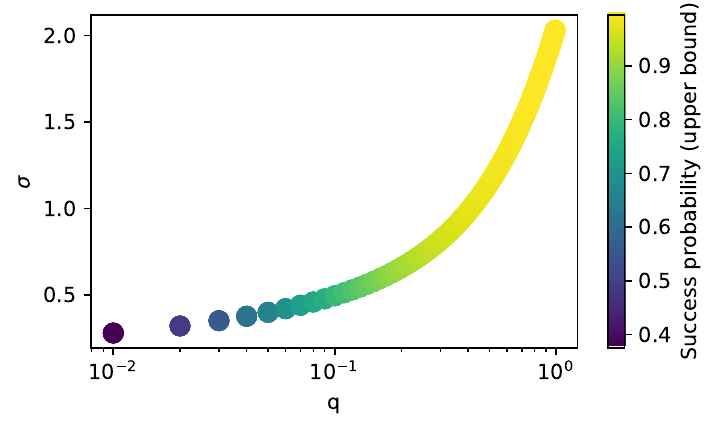}
        \caption{$\epsilon=32, |\pi|=100$.}
\end{subfigure}

\caption{How the upper bound changes as a function of $q$ and $\sigma$ at a fixed value of $\epsilon$ and prior size, $|\pi|$, and setting $T=100$. The probability of a successful reconstruction can vary widely with different values of $q$. For example, at $\epsilon=32$ and $|\pi|=100$, at $q=0.01$ the upper bound is 0.4 and at $q=0.99$ it is 1. Note that the color ranges are independent across subfigures.}
\label{fig: compare_dp_hyps_upper}
\end{figure*}
\section{Estimating $\kappa$ from samples}\label{app:estimate_kappa}
Here, we discuss how to estimate the base probability of reconstruction success, $\kappa$, if the adversary can only sample from the prior distribution.

Let $\hat{\pi}$ be the empirical distribution obtained by taking $N$ independent samples from the prior and $\hat{\kappa} = \kappa_{\hat{\pi},\lattackloss}(\eta)$ be the corresponding parameter for this discrete approximation to $\pi$ -- this can be computed using the methods sketched in \Cref{sec:bounding_reconstruction}.
Then we have the following concentration bound.

\begin{proposition}\label{prop:kappa_hat}
With probability $1 - e^{-N \tau^2 \kappa /2}$ we have
\begin{align*}
    \kappa \leq \frac{\hat{\kappa}}{1 - \tau} \enspace.
\end{align*}
\end{proposition}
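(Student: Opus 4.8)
The plan is to interpret $\hat\kappa$ as a ratio of point counts inside a ball and to invoke a one-sided Chernoff bound. Recall $\kappa = \kappa_{\pi,\lattackloss}(\eta) = \sup_{z_0} \Pr_{Z\sim\pi}[\lattackloss(Z,z_0)\le\eta]$, i.e. the largest mass the prior places on any $\lattackloss$-ball $\Ball(z_0,\eta) = \{z : \lattackloss(z,z_0)\le\eta\}$ of radius $\eta$. Let $z_0^\star$ be a center achieving (or approaching) this supremum, so $\Pr_\pi[\Ball(z_0^\star,\eta)] = \kappa$. Drawing $N$ i.i.d.\ samples $Z_1,\dots,Z_N$ from $\pi$ to form $\hat\pi$, the number $K$ of samples landing in $\Ball(z_0^\star,\eta)$ is $\mathrm{Binomial}(N,\kappa)$, and since $\hat\kappa$ is the supremum of the empirical mass over \emph{all} centers, $\hat\kappa \ge K/N$. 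Thus it suffices to lower-bound $K/N$.

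The key step is a multiplicative Chernoff bound: for $\tau\in(0,1)$,
\begin{align*}
    \Pr\!\left[K \le (1-\tau)N\kappa\right] \le \exp\!\left(-\frac{N\kappa\tau^2}{2}\right).
\end{align*}
On the complementary event, which holds with probability at least $1 - e^{-N\kappa\tau^2/2}$, we have $\hat\kappa \ge K/N > (1-\tau)\kappa$, i.e.\ $\kappa < \hat\kappa/(1-\tau)$, which is the claimed inequality (with $\le$ absorbing the boundary case). The only subtlety is the direction of the bound on $\hat\kappa$: we need $\hat\kappa$ to be \emph{at least} as large as the empirical count in one fixed good ball, which is immediate from $\hat\kappa$ being a supremum over centers — we do not need a uniform (union-bound) control over all balls, only a lower bound via a single well-chosen center, so no covering-number argument is required.

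The main obstacle — really the only thing to be careful about — is handling the case where the supremum defining $\kappa$ is not attained (e.g.\ an infinite or non-compact support). In that case, for any $\varepsilon>0$ pick $z_0^\varepsilon$ with $\Pr_\pi[\Ball(z_0^\varepsilon,\eta)] \ge \kappa - \varepsilon$, run the same Chernoff bound with $\kappa-\varepsilon$ in place of $\kappa$, and let $\varepsilon\to 0$; the exponent $e^{-N(\kappa-\varepsilon)\tau^2/2}$ converges to $e^{-N\kappa\tau^2/2}$ and the inequality passes to the limit. A second minor point is that if $\eta$ and $\lattackloss$ are such that balls are not measurable or the sup is over an unwieldy set, one restricts to centers in $\supp(\pi)$, which loses nothing since any ball of positive mass can be re-centered at a prior point without decreasing its mass by more than an arbitrarily small amount (or exactly, in the discrete case). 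Everything else is a direct application of the standard Chernoff estimate.
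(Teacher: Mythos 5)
Your proof is correct and follows essentially the same route as the paper's: fix the (near-)optimal center $z_0^\star$, apply a one-sided multiplicative Chernoff bound to the binomial count of samples in that single ball, and conclude via $\hat{\kappa} \geq \hat{\kappa}_{z_0^\star}$, with no union bound needed. Your extra care for the case where the supremum is not attained is a minor technical refinement the paper skips by writing $z^* = \argsup(\cdot)$ directly, but it does not change the argument.
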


The proof is given in \Cref{app:proofs}.

\section{Proofs}\label{app:proofs}

Throughout the proofs we make some of our notation more succinct for convenience.
For a probability distribution $\omega$ we write $\omega(E) = \Pr_{\omega}[E]$, and rewrite
$\blowup_{\kappa}(\mu, \nu) = \sup\{\Pr_{\mu}[E] : E \;\; \text{s.t.} \;\; \Pr_{\nu}[E] \leq \kappa \}$ as $\sup_{\nu(E) \leq \kappa} \mu(E)$.
Given a distribution $\omega$ and function $\phi$ taking values in $[0,1]$ we also write $\omega(\phi) = \Ex_{X \sim \omega}[\phi(X)]$.

\subsection{Proof of \Cref{thm:bound_reconstruction}}\label{app:proof_bound_reconstruction}

We say that a pair of distributions $(\mu, \nu)$ is \emph{testable} if for all $\kappa \in [0,1]$ we have
\begin{align*}
    \inf_{\nu(\phi) \leq \kappa} (1- \mu(\phi)) = \inf_{\nu(E) \leq \kappa} (1-\mu(E)) \enspace,
\end{align*}
where the infimum on the left is over all $[0,1]$-valued measurable functions and the one on the right is over measurable events (i.e.\ $\{0,1\}$-valued functions).
The Neyman-Pearson lemma (see e.g.\ \cite{lehmann2005testing}) implies that this condition is satisfied whenever the statistical hypothesis problem of distinguishing between $\mu$ and $\nu$ admits a uniformly most powerful test.
For example, this is the case for distributions on $\R^d$ where the density ratio $\mu / \nu$ is a continuous function.

\begin{theorem}[Formal version of \Cref{thm:bound_reconstruction}]\label{thm:bound_reconstruction_full}
Fix $\pi$ and $\lattackloss$.
Suppose that for every fixed dataset $\Dfixed$ there exists a distribution $\mu_{\Dfixed}$ such that $\sup_{z \in \supp(\pi)} \blowup_{\kappa}(\mu_{\Dz}, \nu_{\Dfixed}) \leq \blowup_{\kappa}(\mu_{\Dfixed}, \nu_{\Dfixed})$ for all $\kappa \in [0,1]$.
If the pair $(\mu, \nu)$ is testable, then $M$ is $(\eta, \gamma)$-ReRo with
\begin{align*}
    \gamma = \sup_{\Dfixed} \sup_{\nu_{\Dfixed}(E) \leq \kappa_{\pi,\lattackloss}(\eta)} \mu_{\Dfixed}(E) \enspace.
\end{align*}
\end{theorem}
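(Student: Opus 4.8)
The plan is to fix an arbitrary fixed dataset $\Dfixed$ and reconstruction attack $R$, bound the reconstruction success probability by $\blowup_{\kappa}(\mu_{\Dfixed}, \nu_{\Dfixed})$ where $\kappa = \kappa_{\pi,\lattackloss}(\eta)$, and then take the supremum over $\Dfixed$. We may assume $R$ is deterministic: a randomized attack is a mixture of deterministic ones, so linearity of expectation propagates the per-attack bound to the mixture.

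First I would set up notation. For a candidate target $z$, let $E_z = \{w : \lattackloss(z, R(w)) \leq \eta\}$ be the event that the attack succeeds when the true target is $z$; conditioning on $Z = z$, the mechanism output is distributed as $\mu_{\Dz}$, so the overall success probability is $p = \Ex_{Z \sim \pi}[\mu_{\Dz}(E_Z)]$. Two elementary facts feed the argument. (i) By the very definition of the blow-up function, $\mu_{\Dz}(E_z) \leq \blowup_{\nu_{\Dfixed}(E_z)}(\mu_{\Dz}, \nu_{\Dfixed})$ (the event $E_z$ is feasible for that supremum), and the hypothesis $\sup_{z \in \supp(\pi)} \blowup_{\kappa'}(\mu_{\Dz},\nu_{\Dfixed}) \leq \blowup_{\kappa'}(\mu_{\Dfixed},\nu_{\Dfixed})$, applied at $\kappa' = \nu_{\Dfixed}(E_z)$, upgrades this to $\mu_{\Dz}(E_z) \leq \blowup_{\nu_{\Dfixed}(E_z)}(\mu_{\Dfixed},\nu_{\Dfixed})$. (ii) Exchanging the order of integration and using $\Pr_{Z \sim \pi}[\lattackloss(Z, z_0) \leq \eta] \leq \kappa$ for every fixed $z_0$, we get $\Ex_{Z \sim \pi}[\nu_{\Dfixed}(E_Z)] = \Ex_{w \sim \nu_{\Dfixed}}\big[\Pr_{Z\sim\pi}[\lattackloss(Z, R(w)) \leq \eta]\big] \leq \kappa$; that is, the $\nu_{\Dfixed}$-masses of the success events average to at most $\kappa$ under the prior.

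The remaining step combines these. Write $g(\kappa') = \blowup_{\kappa'}(\mu_{\Dfixed},\nu_{\Dfixed})$. The testability hypothesis lets us rewrite $g(\kappa') = \sup\{\mu_{\Dfixed}(\phi) : \phi \text{ measurable, } [0,1]\text{-valued, } \nu_{\Dfixed}(\phi) \leq \kappa'\}$, and then a convex-combination argument on test functions (if $\phi_i$ nearly attains $g(\kappa_i)$, then $\lambda\phi_1 + (1-\lambda)\phi_2$ has $\nu_{\Dfixed}$-mass $\leq \lambda\kappa_1 + (1-\lambda)\kappa_2$ and $\mu_{\Dfixed}$-mass $\lambda\mu_{\Dfixed}(\phi_1) + (1-\lambda)\mu_{\Dfixed}(\phi_2)$) shows $g$ is concave, and it is clearly non-decreasing, on $[0,1]$. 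Combining (i), Jensen's inequality for the concave $g$, monotonicity, and (ii):
\[
p = \Ex_{Z\sim\pi}[\mu_{\Dz}(E_Z)] \leq \Ex_{Z\sim\pi}\big[g(\nu_{\Dfixed}(E_Z))\big] \leq g\big(\Ex_{Z\sim\pi}[\nu_{\Dfixed}(E_Z)]\big) \leq g(\kappa) = \blowup_{\kappa}(\mu_{\Dfixed},\nu_{\Dfixed}) .
\]
Taking the supremum over $\Dfixed$ yields $p \leq \gamma$, which is exactly the ReRo guarantee.

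The main obstacle is the concavity of $g$: for the event-based blow-up function concavity can fail, since a convex combination of two events is not an event, and this is precisely the gap that testability (via the Neyman--Pearson lemma) closes by identifying the event-based and randomized-test-based blow-up functions. A secondary technical point is checking measurability of $z \mapsto \mu_{\Dz}(E_z)$ and $z \mapsto \nu_{\Dfixed}(E_z)$ so that the expectations over $\pi$ and the application of Fubini are legitimate; this is automatic for the finite discrete priors used in the experiments and holds under mild regularity in general.
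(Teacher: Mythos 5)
Your proof is correct and follows essentially the same route as the paper's: write the per-target success probability as $\mu_{\Dz}(E_z)$, bound it by the blow-up function of $(\mu_{\Dfixed},\nu_{\Dfixed})$ at level $\nu_{\Dfixed}(E_z)$ via the hypothesis, use concavity of the blow-up function in $\kappa$ (which is exactly where testability enters, as you note) together with Jensen, and finish with Fubini to show the average $\nu_{\Dfixed}$-mass of the success events is at most $\kappa_{\pi,\lattackloss}(\eta)$. One minor remark: the paper's writeup defines the per-target level $\kappa_z$ as the $\mu_{\Dfixed}$-mass of the success event, which does not directly justify its first inequality; your choice of measuring $E_z$ under $\nu_{\Dfixed}$ is the one that makes that step go through cleanly.
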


The following lemma from \citet{dong2019gaussian} will be useful.

\begin{lemma}\label{lem:tradeoff_convex}
For any $\mu$ and $\nu$, the function $\kappa \mapsto \inf_{\nu(\phi) \leq \kappa} (1 - \mu(\phi))$ is convex in $[0,1]$.
\end{lemma}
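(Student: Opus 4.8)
The plan is to recognize the trade-off function as the value function of a parametric linear program and exploit the fact that the feasible region is convex while the objective is affine in the test $\phi$. Write $f(\kappa) = \inf_{\nu(\phi)\le\kappa}(1-\mu(\phi))$, where $\phi$ ranges over all $[0,1]$-valued measurable functions. The two structural facts I would isolate first are: (i) the collection of such tests is a convex set, since a convex combination of two $[0,1]$-valued functions is again $[0,1]$-valued and measurable; and (ii) both maps $\phi\mapsto\nu(\phi)$ and $\phi\mapsto\mu(\phi)$ are affine (indeed linear) by linearity of expectation.

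Then I would fix $\kappa_0,\kappa_1\in[0,1]$ and $\lambda\in[0,1]$, and set $\kappa_\lambda=(1-\lambda)\kappa_0+\lambda\kappa_1$. For an arbitrary $\varepsilon>0$, choose near-optimal tests $\phi_0,\phi_1$ with $\nu(\phi_0)\le\kappa_0$, $\nu(\phi_1)\le\kappa_1$ and $1-\mu(\phi_b)\le f(\kappa_b)+\varepsilon$ for $b\in\{0,1\}$; these exist by definition of the infimum. The key step is to feed the convex combination $\phi_\lambda=(1-\lambda)\phi_0+\lambda\phi_1$ into the level-$\kappa_\lambda$ problem. By fact (i) it is an admissible test, and by fact (ii) together with the constraints on $\phi_0,\phi_1$ we get $\nu(\phi_\lambda)=(1-\lambda)\nu(\phi_0)+\lambda\nu(\phi_1)\le\kappa_\lambda$, so $\phi_\lambda$ is feasible at level $\kappa_\lambda$.

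Evaluating the objective and using $\mu(\phi_\lambda)=(1-\lambda)\mu(\phi_0)+\lambda\mu(\phi_1)$ together with $(1-\lambda)+\lambda=1$ gives $1-\mu(\phi_\lambda)=(1-\lambda)(1-\mu(\phi_0))+\lambda(1-\mu(\phi_1))$. Hence $f(\kappa_\lambda)\le(1-\lambda)(1-\mu(\phi_0))+\lambda(1-\mu(\phi_1))\le(1-\lambda)f(\kappa_0)+\lambda f(\kappa_1)+\varepsilon$, and letting $\varepsilon\to0$ yields the convexity inequality.

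I do not anticipate a genuinely hard step: the whole argument is the standard observation that the optimal value of a linear program is convex in the right-hand side of its constraints. The only point requiring a little care is that the infimum may not be attained by a single optimal test, which is why I would work with $\varepsilon$-approximate minimizers and close the argument with a limit rather than plugging in exact optimizers. Measurability of $\phi_\lambda$ and the fact that it remains $[0,1]$-valued are immediate but worth stating explicitly, since they are exactly what make the convex combination an admissible competitor in the level-$\kappa_\lambda$ problem.
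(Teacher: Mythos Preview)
Your argument is correct: the set of $[0,1]$-valued tests is convex, the functionals $\phi\mapsto\nu(\phi)$ and $\phi\mapsto\mu(\phi)$ are linear, and hence the infimum of the affine objective over the nested family of sublevel sets is convex in the constraint level; the use of $\varepsilon$-approximate minimizers cleanly handles non-attainment.

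As for comparison with the paper: there is nothing to compare. The paper does not prove this lemma at all but simply imports it from \citet{dong2019gaussian}, where it is the basic observation that trade-off functions are convex. Your write-up is essentially the standard proof of that fact, so you have supplied what the paper outsources to a citation.
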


\begin{lemma}\label{lem:gamma_concave}
For any testable pair $(\mu, \nu)$, the function $\kappa \mapsto \sup_{\nu(E) \leq \kappa} \mu(E)$ is concave.
\end{lemma}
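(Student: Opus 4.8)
The plan is to obtain concavity directly from \Cref{lem:tradeoff_convex} by passing through the testability assumption, with essentially no computation involved. First I would rewrite the blow-up function as a constant minus a trade-off-type quantity. For every $\kappa \in [0,1]$,
\[
  \blowup_{\kappa}(\mu,\nu) \;=\; \sup_{\nu(E) \leq \kappa} \mu(E) \;=\; 1 - \inf_{\nu(E) \leq \kappa} \bigl(1 - \mu(E)\bigr),
\]
where the two infima/suprema range over exactly the same feasible set $\{E : \nu(E) \leq \kappa\}$ and the identity holds because $t \mapsto 1-t$ is monotone decreasing, so minimizing $1-\mu(E)$ over a set is the same as $1$ minus the maximum of $\mu(E)$ over that set.

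Next I would invoke testability of the pair $(\mu,\nu)$. By definition this says precisely that enlarging the feasible functions from $\{0,1\}$-valued events $E$ to arbitrary $[0,1]$-valued measurable $\phi$ does not change the value of the infimum: $\inf_{\nu(E) \leq \kappa}(1-\mu(E)) = \inf_{\nu(\phi) \leq \kappa}(1-\mu(\phi))$ for all $\kappa \in [0,1]$. Substituting into the display above yields $\blowup_{\kappa}(\mu,\nu) = 1 - \inf_{\nu(\phi) \leq \kappa}(1-\mu(\phi))$.

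Finally, \Cref{lem:tradeoff_convex} (the Dong--Roth--Su result) states that $\kappa \mapsto \inf_{\nu(\phi) \leq \kappa}(1-\mu(\phi))$ is convex on $[0,1]$. Hence $\kappa \mapsto \blowup_{\kappa}(\mu,\nu)$, being the constant $1$ minus this convex function, is concave on $[0,1]$, which is the claim. I would also remark that testability is automatic whenever the density ratio $\mu/\nu$ is continuous (via Neyman--Pearson, as noted before \Cref{thm:bound_reconstruction_full}), so in particular the lemma applies to the Gaussian-mixture pair $(\mu_{T,\sigma,q}, \nu_{T,\sigma})$ appearing in \Cref{cor:bound_reconstruction_dpsgd}.

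\textbf{Main obstacle.} There is no substantive obstacle: the only points requiring care are (i) checking that the two optimization problems in the first display share the same feasible set so the ``$1-{}$'' manipulation is valid, and (ii) citing testability in the correct direction, namely that it upgrades the event-level infimum to the function-level infimum, which is the exact form in which \Cref{lem:tradeoff_convex} is stated.
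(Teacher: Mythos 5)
Your argument is correct and is essentially identical to the paper's own proof: both rewrite $\sup_{\nu(E)\leq\kappa}\mu(E)$ as $1-\inf_{\nu(E)\leq\kappa}(1-\mu(E))$ over the same feasible set, invoke testability to replace the event-level infimum by the function-level infimum, and then apply the convexity of $\kappa\mapsto\inf_{\nu(\phi)\leq\kappa}(1-\mu(\phi))$ from \Cref{lem:tradeoff_convex}. The closing remark about Neyman--Pearson and the Gaussian-mixture pair matches the discussion the paper gives just before \Cref{thm:bound_reconstruction_full}.
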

\begin{proof}
By the testability assumption we have
\begin{align*}
    \sup_{\nu(E) \leq \kappa} \mu(E)
    &=
    \sup_{\nu(E) \leq \kappa} \mu(E)
    \\
    &=
    \sup_{\nu(E) \leq \kappa} (1 - \mu(\bar{E}))
    \\
    &=
    1 - \inf_{\nu(E) \leq \kappa} \mu(\bar{E})
    \\
    &=
    1 - \inf_{\nu(E) \leq \kappa} (1 - \mu(E))
    \\
    &=
    1 - \inf_{\nu(\phi) \leq \kappa} (1 - \mu(\phi)) \enspace.
\end{align*}
Concavity now follows from \Cref{lem:tradeoff_convex}.
\end{proof}

\begin{proof}[Proof of \Cref{thm:bound_reconstruction_full}]
Fix $\Dfixed$ and let $\kappa = \kappa_{\pi,\lattackloss}(\eta)$ throughout.
Let also $\nu = \nu_{\Dfixed}$, $\mu_z = \mu_{\Dz}$, $\nu^* = \nu^*_{\Dfixed}$ and $\mu^* = \mu^*_{\Dfixed}$.

Expanding the probability of successful reconstruction, we get:
\begin{align*}
    \Pr_{Z \sim \pi, W \sim M(\Dfixed \cup \{Z\})}[\lattackloss(Z, \reconstruct(W)) \leq \eta]
    &=
    \Ex_{Z \sim \pi} \Pr_{W \sim M(\Dfixed \cup \{Z\})}[\lattackloss(Z, \reconstruct(W)) \leq \eta]
    \\
    &=
    \Ex_{Z \sim \pi} \Ex_{W \sim M(\Dfixed \cup \{Z\})}\indicator[\lattackloss(Z, \reconstruct(W)) \leq \eta]
    \\
    &=
    \Ex_{Z \sim \pi} \Ex_{W \sim \mu_Z}\indicator[\lattackloss(Z, \reconstruct(W)) \leq \eta]
    \\
    &=
    \Ex_{Z \sim \pi} \Ex_{W \sim \nu}\left[\frac{\mu_Z(W)}{\nu(W)} \indicator[\lattackloss(Z, \reconstruct(w)) \leq \eta]\right] \enspace.
\end{align*}

Now fix $z \in \supp(\pi)$ and let $\kappa_z = \Pr_{W \sim \nu}[\lattackloss(z, \reconstruct(W)) \leq \eta]$.
Using the assumption on $\mu^*$ we get:

\begin{align*}
    \Ex_{W \sim \nu}\left[\frac{\mu_z(W)}{\nu(W)} \indicator[\lattackloss(z, \reconstruct(w)) \leq \eta]\right]
    &\leq
    \sup_{\nu(E) \leq \kappa_z} \Ex_{W \sim \nu}\left[\frac{\mu_z(W)}{\nu(W)} \indicator[W \in E]\right]
     &\text{(By definition of $\kappa$)}\\
    &=
    \sup_{\nu(E) \leq \kappa_z} \Ex_{W \sim \mu_z}\left[\indicator[W \in E]\right]
    \\
    &=
    \sup_{\nu(E) \leq \kappa_z} \mu_z(E)
    \\
    &\leq
    \sup_{\nu^*(E) \leq \kappa_z} \mu^*(E)
    \enspace. &\text{(By definition of $\mu^*$ and $\nu^*$.)}
\end{align*}

Finally, using \Cref{lem:gamma_concave} and Jensen's inequality on the following gives the result:
\begin{align*}
    \Ex_{Z \sim \pi}[ \kappa_Z ]
    &=
    \Ex_{Z \sim \pi} \Pr_{W \sim \nu}[\lattackloss(Z, \reconstruct(W)) \leq \eta]
    \\
    &=
    \Ex_{W \sim \nu} \Pr_{Z \sim \pi}[\lattackloss(Z, \reconstruct(W)) \leq \eta]
    \\
    &\leq
    \Ex_{W \sim \nu} \kappa
    \\
    &=
    \kappa \enspace. \qedhere
\end{align*}
\end{proof}

\subsection{Proof of \Cref{cor:bound_reconstruction_dpsgd}}
Here we prove Corollary \ref{cor:bound_reconstruction_dpsgd}.
We will use the following shorthand notation for convenience: $\mu=\cN(B(T,q), \sigma^2 I)$ and $\nu = \cN(0, \sigma^2 I)$. To prove our result, we use the notion of $TV_a$.

\begin{definition}[\citet{DBLP:journals/corr/abs-2204-06106}]
For two probability distributions $\omega_1(\cdot)$ and $\omega_2(\cdot)$, $TV_a$ is defined as
$$TV_a(\omega_1,\omega_2) = \int \left|\omega_1(x) - a\cdot \omega_2(x)\right|dx.$$
\end{definition}

Now we state the following lemma borrowed from \cite{DBLP:journals/corr/abs-2204-06106}. 

\begin{lemma}[Theorem 6 in \cite{DBLP:journals/corr/abs-2204-06106}]
Let $\nu_{\Dfixed}$, $\mu_{\Dz}$ be the output distribution of DP-SGD applied to $\Dfixed$ and $\Dz$ respectively, with noise multiplier $\sigma$, sampling rate $q$. Then we have
\begin{align*}
TV_a(\nu_{\Dfixed},\mu_{\Dz}) \leq TV_a(\nu , \mu) \enspace.    
\end{align*}
\end{lemma}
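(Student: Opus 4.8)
The plan is to recognize $TV_a$ as a member of the hockey-stick divergence family and then prove the inequality by a per-step ``dominating pair'' argument composed over the $T$ iterations. Writing $H_a(\omega_1 \| \omega_2) = \int (\omega_1(x) - a\,\omega_2(x))_+ \, dx$, a one-line computation using $\int \omega_1 = \int \omega_2 = 1$ gives $TV_a(\omega_1,\omega_2) = 2 H_a(\omega_1 \| \omega_2) - (1-a)$, so controlling $TV_a$ for all $a$ is equivalent to controlling all hockey-stick divergences. Call a pair $(P,Q)$ a \emph{dominating pair} for $(P',Q')$ if $H_a(P' \| Q') \le H_a(P \| Q)$ for every $a \ge 0$; the target inequality is precisely the assertion that $(\nu,\mu)$ dominates $(\nu_{\Dfixed}, \mu_{\Dz})$.

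First I would reduce to a single step. Conditioning on the model parameters $\theta_t$, which are a deterministic function of the previously released gradients $g_1,\dots,g_{t-1}$, the step-$t$ outputs on $\Dfixed$ and on $\Dz$ differ only through the presence of the target $z$: under $\nu_{\Dfixed}$ the released gradient is $S_t + \cN(0, C^2 \sigma^2 I)$, where $S_t$ collects the subsampled clipped gradients of the fixed points, whereas under $\mu_{\Dz}$ it is $S_t + v_t \cdot \indicator[z \in B_t] + \cN(0, C^2 \sigma^2 I)$ with $v_t = \clip_C(\nabla_\theta \ell(\theta_t, z))$ and $\Pr[z \in B_t] = q$. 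The nuisance term $S_t$ is common to both laws, so translating by $-S_t$ cancels it; isotropy of the noise then lets me rotate so that $v_t$ points along $e_1$, collapsing the comparison to one dimension. Since $\|v_t\| \le C$, after rescaling by $C$ the conditional step-$t$ pair becomes $\big(\cN(0,\sigma^2),\, (1-q)\cN(0,\sigma^2) + q\,\cN(s,\sigma^2)\big)$ with shift $s = \|v_t\|/C \le 1$.

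The remaining step-level work is a monotonicity claim: for this subsampled-shift Gaussian family, $H_a$ is nondecreasing in $s$, so the worst case $s = 1$ (i.e.\ $\|v_t\| = C$) dominates every conditional realization. This reduces each step to the canonical pair $\big(\cN(0,\sigma^2),\, (1-q)\cN(0,\sigma^2) + q\,\cN(1,\sigma^2)\big)$, which is exactly the one-coordinate marginal of $(\nu,\mu)$ since $B(q,T)$ has independent Bernoulli$(q)$ coordinates. Finally I would compose over $t = 1,\dots,T$. Because the $\theta_t$ are released adaptively, I invoke adaptive composition for dominating pairs: if each conditional step pair is dominated by the canonical pair, the joint law over all $T$ gradients is dominated by the $T$-fold product, which is precisely $(\nu, \mu) = \big(\cN(0,\sigma^2 I),\, \sum_{w \in \{0,1\}^T} \Pr[B(q,T) = w]\, \cN(w, \sigma^2 I)\big)$. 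Translating back from $H_a$ to $TV_a$ yields the claim.

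I expect the per-step domination to be the main obstacle, for two reasons. First, one must argue cleanly that the common fixed-point contribution $S_t$ genuinely cancels despite being random and correlated across steps; this is a coupling and translation-invariance argument that has to be carried out \emph{conditionally} on the released history so that it threads through the adaptive composition rather than being applied only marginally. Second, the monotonicity of $H_a$ in the shift for the \emph{mixture} $(1-q)\cN(0,\sigma^2) + q\,\cN(s,\sigma^2)$, as opposed to a pure translation, is the delicate inequality, and verifying that an adaptive sequence of such dominated steps composes into a single dominated product -- so that no worst-case shift at one step can be undercut by the adversary's choice of direction at another -- is where the analysis of \cite{DBLP:journals/corr/abs-2204-06106} does its real work.
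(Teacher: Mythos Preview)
The paper does not prove this lemma at all: it is stated verbatim as ``Theorem 6 in \cite{DBLP:journals/corr/abs-2204-06106}'' and then immediately used, with no argument supplied. So there is no ``paper's own proof'' to compare against; the authors simply import the result.

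That said, your sketch is essentially the proof that the cited reference carries out. The identification $TV_a(\omega_1,\omega_2) = 2H_a(\omega_1\|\omega_2) - (1-a)$, the reduction of each step to a one-dimensional subsampled Gaussian comparison via translation by the common term and rotation by isotropy, the worst-case shift $s=1$ corresponding to a fully clipped gradient, and the adaptive composition of dominating pairs to the product $(\nu_{T,\sigma},\mu_{T,\sigma,q})$ are exactly the ingredients of that argument. You have also correctly flagged the two places where the work lies: the monotonicity of the hockey-stick divergence in the shift for the mixture $(1-q)\cN(0,\sigma^2)+q\,\cN(s,\sigma^2)$, and the fact that the per-step domination must hold \emph{conditionally} on the released history so that adaptive composition applies. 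Nothing in your outline is wrong or missing; it is simply a proof the present paper chose to outsource.
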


Now, we state the following lemma that connects $TV_a$ to blow-up function.
\begin{lemma}[Lemma 21 in \cite{zhu2022optimal}.]
For any pair of distributions $\omega_1, \omega_2$ we have
$$\sup_{\omega_1(E)\leq \kappa } \omega_2(E)= \inf_{a>1} \min\left\{0, a\cdot \kappa + \frac{TV_a(\omega_1, \omega_2) + 1 - a}{2}, \frac{2\kappa + TV_a(\omega_1,\omega_2) + a-1}{2a} \right\}$$
\end{lemma}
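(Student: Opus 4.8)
The plan is to recognize the left-hand side as the upper boundary of the Neyman--Pearson region and to match it, via convex duality, against a family of supporting lines whose intercepts are exactly the $TV_a$ quantities. Assume $\omega_1,\omega_2$ have densities $f,g$ with respect to a common dominating measure (the general case follows by Radon--Nikodym decomposition). The first step is to record the two elementary ``half-integral'' identities
\[
\int (f-ag)_+ = \frac{TV_a(\omega_1,\omega_2)+1-a}{2}, \qquad \int (f-ag)_- = \frac{TV_a(\omega_1,\omega_2)+a-1}{2},
\]
which follow from $\int(f-ag)=1-a$ together with $\int|f-ag|=TV_a(\omega_1,\omega_2)$ and $(f-ag)_\pm=\tfrac12(|f-ag|\pm(f-ag))$. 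I will also use that $h(\kappa):=\sup_{\omega_1(E)\le\kappa}\omega_2(E)$ is concave on $[0,1]$ --- this is exactly \Cref{lem:gamma_concave} (via \Cref{lem:tradeoff_convex}) --- and that optimal events are likelihood-ratio thresholds by the Neyman--Pearson lemma.

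For the upper bound ($\le$ direction) I would, for each $a>1$, produce two supporting-line (weak-duality) inequalities by relaxing the constraint $\omega_1(E)\le\kappa$ with a Lagrange multiplier. With multiplier $1/a$ one gets, for every admissible $E$,
\[
\omega_2(E) \le \tfrac1a\,\omega_1(E) + \sup_{E'}\!\big(\omega_2(E')-\tfrac1a\,\omega_1(E')\big) \le \tfrac{\kappa}{a} + \int\!\big(g-\tfrac1a f\big)_+ = \frac{2\kappa+TV_a(\omega_1,\omega_2)+a-1}{2a},
\]
using $\int(g-\tfrac1a f)_+=\tfrac1a\int(f-ag)_-$; this is the third entry of the $\min$. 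With multiplier $a$ one gets analogously $\omega_2(E)\le a\kappa+\int(g-af)_+ = a\kappa+\tfrac12\big(TV_a(\omega_2,\omega_1)+1-a\big)$, which is the second entry; together with the trivial clamp $\omega_2(E)\le 1$ (the first entry, which I read as $1$ rather than the printed $0$, since a blow-up value is nonnegative) this establishes $h(\kappa)\le\text{RHS}$. Here I would invoke the scaling identity $TV_a(\omega_2,\omega_1)=a\,TV_{1/a}(\omega_1,\omega_2)$ to reconcile the orientation of $TV_a$ in the second entry with the form stated above; under this identity the second entry is precisely the third entry evaluated at the reciprocal multiplier $1/a$, so the two-term $\min$ over $a>1$ sweeps all supporting slopes $b\in(0,\infty)$.

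For the matching lower bound ($\ge$ direction) I would use that a concave function on $[0,1]$ equals the infimum of its affine majorants (biconjugation): each admissible-$E$ inequality above is an affine majorant of $h$, and conversely, by Neyman--Pearson, the supporting line of slope $b$ is attained at its point of tangency by a threshold test $\{g/f>b\}$. Hence $\inf_{a>1}$ of the $\min$ reproduces $h(\kappa)$ exactly, clamped to $1$ near the endpoints. The main obstacle is the tightness step: to get equality rather than just $\le$, one must (i) allow randomized (boundary) tests in Neyman--Pearson to handle atoms of the likelihood ratio $g/f$, so that every target level $\kappa\in[0,1]$ is attained exactly; (ii) account for singular components (regions where $f=0$ or $g=0$), which govern $h(0)$ and the limits $a\to1^+$ and $a\to\infty$; and (iii) verify that the restricted range $a>1$, combined with the scaling identity and the trivial clamp, covers the full conjugate including the regime where the bound $1$ becomes active. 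The remaining work is routine bookkeeping of signs and the $TV_a$ orientation.
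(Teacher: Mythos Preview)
The paper does not actually prove this lemma: it is quoted verbatim as ``Lemma 21 in \cite{zhu2022optimal}'' and used as a black box in the proof of \Cref{cor:bound_reconstruction_dpsgd}. So there is no ``paper's own proof'' to compare against.

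That said, your convex-duality / Neyman--Pearson route is the standard and correct way to derive such an identity, and your half-integral identities $\int(f-ag)_\pm = \tfrac12(TV_a \pm (1-a))$ are exactly the bridge between $TV_a$ and the supporting lines of the trade-off curve. Your derivation of the third entry is clean and correct. You are also right that the printed ``$0$'' in the $\min$ must be read as ``$1$'': with $0$ the right-hand side would be nonpositive while the left-hand side is a probability. One further discrepancy you implicitly flag but should state more plainly: the second entry, as printed with $TV_a(\omega_1,\omega_2)$, is \emph{not} the bound produced by the Lagrange multiplier $a$; that bound involves $TV_a(\omega_2,\omega_1)$ (equivalently $a\,TV_{1/a}(\omega_1,\omega_2)$), and $TV_a$ is not symmetric for $a\neq 1$. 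So either the cited statement has a second typo in the argument order, or the intended $\min$ is just the third entry together with the clamp at $1$, with $a$ allowed to range over all of $(0,\infty)$ (your observation that the second and third entries are related by $a\leftrightarrow 1/a$ makes these two readings equivalent). Your tightness argument via biconjugation of the concave function $h(\kappa)$, using randomized Neyman--Pearson tests to hit every slope, is the right mechanism; once the typos are accounted for, the sketch is complete.
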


Since $TV_a(\nu_{\Dfixed},\mu_{\Dz})$ is bounded by $TV_a(\nu, \mu)$ for all $a$, therefore we have 
\begin{align*}
\sup_{\nu_{\Dfixed}(E)\leq \kappa } \mu_{\Dz}(E) \leq \sup_{\nu(E)\leq \kappa} \mu(E) \enspace.    
\end{align*}

\subsection{Proof of \Cref{prop:kappa_hat}}
Recall $\kappa = \sup_{z_0 \in \Zset} \Pr_{Z \sim \pi}[\lattackloss(Z, z_0) \leq \eta]$ and $\hat{\kappa} = \sup_{z_0 \in \Zset} \Pr_{Z \sim \hat{\pi}}[\lattackloss(Z, z_0) \leq \eta]$.
Let $\kappa_z = \Pr_{Z \sim \pi}[\lattackloss(Z, z) \leq \eta]$ and $\hat{\kappa}_z = \Pr_{Z \sim \hat{\pi}}[\lattackloss(Z, z) \leq \eta]$.
Note $\hat{\kappa}_z$ is the sum of $N$ i.i.d.\ Bernoulli random variables and $\Ex_{\hat{\pi}}[\hat{\kappa}_z] = \kappa_z$.
Then, using a multiplicative Chernoff bound,
we see that for a fixed $z$ the following holds with probability at least $1 - e^{-N \tau_z^2 \kappa /2}$:
\begin{align*}
    \kappa_z \leq \frac{\hat{\kappa}_z}{1-\tau} \enspace.
\end{align*}
Applying this to $z^* = \argsup_{z_0 \in \Zset} \Pr_{Z \sim \pi}[\lattackloss(Z, z_0) \leq \eta]$ we get that the following holds with probability at least $1 - e^{-N \tau^2 \kappa /2}$:
\begin{align*}
    \kappa &=
    \kappa_{z^*}
    \leq
    \frac{\hat{\kappa}_{z^*}}{1-\tau}
    \leq
    \frac{\hat{\kappa}}{1-\tau} \enspace.
\end{align*}

\subsection{Proof of \Cref{prop:gamma_consistent}}

Let $z= \frac{(r'_{N'} + r'_{N'-1})}{2}$. Let $E_1$ be the event that $|\Pr[r>z] - \kappa|\geq \tau$. By applying Chernoff-Hoefding bound we have $\Pr[E_1]\leq 2e^{-2N\tau^2}$.
Now note that since $\mu$ is a Gaussian mixture, we can write $\mu=\sum_{i\in [2^T]} a_i \mu_i$ where each $\mu_i$ is a Gaussian $\cN(c_i,\sigma)$ where $|c_i|_2\leq \sqrt{T}$. Now let $r_i = \mu_i(W)/\nu(W)$. By holder, we have $\Ex[r^2]\leq \sum a_i \Ex[r_i^2]$. We also now that $\Ex[r_i^2]\leq e^T$, therefore, $\Ex[r^2]\leq e^T.$ Now let $E_2$ be the event that $|\Ex[r\cdot I(r>z)] -\gamma'| \geq  \tau.$ Since the second moment of $r$ is bounded, the probability of $E_2$ goes to zero as $N$ increases.  Therefore, almost surely we have
$$ \sup_{\nu(E)\leq \kappa - \tau} \mu(E) -\tau \leq \lim_{N\to \infty} \gamma' \leq \sup_{\nu(E)\leq \kappa +\tau} \mu(E) +\tau.$$
Now by pushing $\tau$ to $0$ and using the fact that $\mu$ and $\nu$ are smooth we have
$$\lim_{N\to \infty} \gamma' = \lim_{\tau\to0}\sup_{\nu(E)\leq \kappa +\tau} \mu(E) +\tau=\sup_{\nu(E)\leq \kappa} \mu(E).$$

\end{document}